\titleformat{\section}[block]{\Large\bfseries\filcenter}{\thesection}{1em}{}
\newtheorem{theorem}{Theorem}
\newtheorem{corollary}{Corollary}
\newtheorem{lemma}{Lemma}
\theoremstyle{definition}
\newtheorem{definition}{Definition}
\theoremstyle{remark}
\newcommand{\iid}{\stackrel{\mathrm{iid}}{\sim}}
\newcommand{\amin}[1]{\underset{#1}{\operatorname{argmin~}}} 
\newcommand{\reals}{\mathbb{R}}
\newcommand{\Z}{\mathbf{Z}}
\newcommand{\D}{\mathbf{D}}
\newcommand{\PP}{\mathbf{P}}
\newcommand{\aalpha}{\bm{\alpha}}
\newcommand{\M}{\mathbf{M}}
\begin{document}
\title{Instrumental Variables Estimation With Some Invalid Instruments and its Application to Mendelian Randomization\footnote{\emph{Address for correspondence:} Hyunseung Kang, Department of Statistics, The Wharton School, University of Pennsylvania, Jon M. Huntsman Hall, 3730 Walnut Street, Philadelphia, PA 19104-6340, USA. E-mail: khyuns@wharton.upenn.edu. Hyunseung Kang is Ph.D. student (E-mail: khyuns@wharton.upenn.edu); Anru Zhang is Ph.D. student (E-mail: anrzhang@wharton.upenn.edu); T. Tony Cai is Dorothy Silberberg Professor of Statistics (E-mail: tcai@wharton.upenn.edu); and Dylan S. Small is Professor (E-mail: dsmall@wharton.upenn.edu). The research of Tony Cai and Anru Zhang was supported in part by NSF FRG Grant DMS-0854973, NSF Grant DMS-1208982 and NIH Grant R01 CA127334-05. The research of Hyunseung Kang and Dylan Small was supported in part by NSF Grant SES-1260782. The authors thank Jack Bowden, two referees, and the associate editor for helpful suggestions.}}
\author{Hyunseung Kang,  Anru Zhang, T. Tony Cai, Dylan S. Small\\
Department of Statistics\\
The Wharton School\\
University of Pennsylvania
}
 
\date{}
\maketitle
\begin{abstract}
Instrumental variables have been widely used for estimating the causal effect between exposure and outcome. Conventional estimation methods require complete knowledge about all the instruments' validity; a valid instrument must not have a direct effect on the outcome and not be related to unmeasured confounders. Often, this is impractical as highlighted by Mendelian randomization studies where genetic markers are used as instruments and complete knowledge about instruments' validity is equivalent to complete knowledge about the involved genes' functions. 

In this paper, we propose a method for estimation of causal effects when this complete knowledge is absent. It is shown that causal effects are identified and can be estimated as long as less than $50$\% of instruments are invalid, without knowing which of the  instruments are invalid. We also introduce conditions for identification when the 50\% threshold is violated. A fast penalized $\ell_1$  estimation method, called sisVIVE, is introduced for estimating the causal effect without knowing which instruments are valid, with theoretical guarantees on its performance. The proposed method is demonstrated on simulated data and a real Mendelian randomization study concerning the effect of body mass index on health-related quality of life index. An R package \emph{sisVIVE} is available on CRAN. Supplementary materials for this article are available online.
\end{abstract}
Keywords: Body mass index, causal inference, health-related quality of life, instrumental variable, $\ell_1$ penalization,  pleiotropy.

\newpage
\section{INTRODUCTION}
Instrumental variables (IV) is a popular method for estimating the causal effect of an exposure on an outcome when there is unmeasured confounding. Conventional IV estimation methods require that the instruments are valid, or informally speaking, that the instruments are (A1) related to the exposure (A2) have no direct pathway to the outcome and (A3) are not related to unmeasured variables that affect the exposure and the outcome (see Figure \ref{fig:dag} and Section 2 for a formal definition of valid IVs). For example, Figure \ref{fig:dag} is an illustration of the IV assumptions and one potential violation of the IV assumptions(see \citet{hernan_instruments_2006} for details on other possible violations). Here, the IV is a genetic marker that is a single nucleotide polymorphism whose value is fixed at birth and the unmeasured variables refer to variables that precede the assignment of the genetic marker, such as population stratification (to be discussed later). The challenge in IV estimation is to find valid instruments that satisfy assumptions (A1)-(A3). Unfortunately, this is a difficult task, especially in the case of Mendelian randomization (MR).

\begin{figure}[h!]
\centering
\includegraphics[scale=0.3]{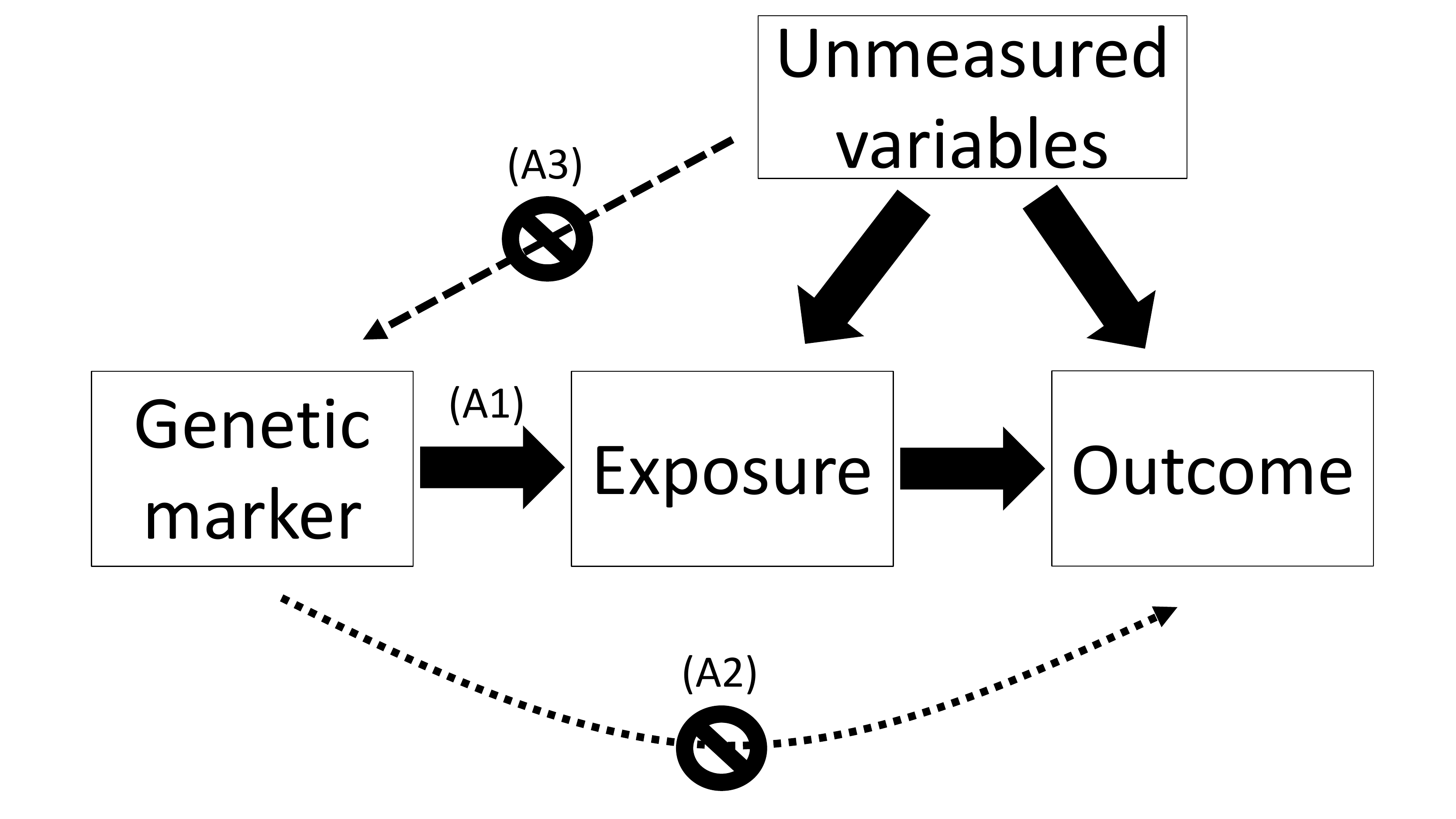}
\caption{Diagram of One Possible Violation of Instrumental Variables Assumptions. Arrows represent associations between variables. Absence of arrows indicates no relationship. Numbers (A1), (A2), and (A3) indicate different instrumental variables assumptions. In this example, the unmeasured variable refers to variables that precede the assignment of the genetic marker, such as population stratification, and the genetic marker is a single nucleotide polymorphism whose value is fixed at birth. As such, the arrows from the unmeasured variable originate from the unmeasured variable and the arrow from the genetic marker goes from it to the outcome (A2) since the genetic marker is fixed at the time of conception.}
\label{fig:dag}
\end{figure}

In MR, the goal is to estimate the causal effect of an exposure on an outcome by using genetic markers, specifically single nucleotide polymorphisms (SNPs), as instruments \citep{smith_mendelian_2003, smith_mendelian_2004, lawlor_mendelian_2008, wehby_mendelian_2008}. For example, \citet{timpson_c-reactive_2005} studied the causal effect of C-reactive protein (CRP), the exposure, on various metabolic outcomes, such as body mass index (BMI) and cholesterol biomarkers (e.g. tryglycerides), using four haplotypes constructed from three SNPs (rs1800947, rs1130864, rs1205) as instruments. The instruments have been previously associated with plasma CRP levels, thereby agreeing with (A1). However, agreement with (A2) and (A3) is less certain. As the authors of the study noted, it is plausible that one or more of the genes that contain the SNPs, rs1800947, rs1130864, and rs1205, may have multiple functions, known as pleiotropy, where, in addition to changing CRP levels (the exposure), the gene containing one of these SNPs would change triglyceride levels or BMI (the outcome) and (A2) would not hold. Indeed, recent work by \citet{martinez_haplotypes_2012} suggested that one of the instruments used, rs1130864, is directly linked to BMI, one of the outcomes, raising doubts about causal estimates when this SNP is assumed to be a valid instrument.

As another example, \citet{katan_apolipoprotein_1986}, in one of the first discussions of MR, proposed to estimate the causal effect of serum cholesterol level on cancer by using the apolipoprotein E polymorphism (APOE)'s effect on serum cholesterol levels. However, as \citet{smith_mendelian_2004} argued, the current knowledge about the APOE gene and its multiple pleiotropic effects on longevity, cholesterol biomarkers, and several other variables, would invalidate the APOE gene as a valid instrument, specifically due to its violation of (A2), and make an IV analysis based on it biased.

Both examples highlight a fundamental limitation with MR studies. For one, pleiotropy and its impact on (A2) is a concern in most MR studies \citep{little_mendelian_2003, smith_mendelian_2003, smith_mendelian_2004, thomas_commentary_2004, brennan_commentary_2004, lawlor_mendelian_2008}. \citet{lawlor_mendelian_2008} also list other biological phenomena associated with genetic instruments such as linkage disequilibrium and population stratification that may violate (A2) and (A3). Unfortunately, verifying genetic instruments as valid IVs requires having complete knowledge of the instruments' biological function and pleoitropic effects. As both examples highlight, the biological understanding of many genetic markers and their potential pleiotropic effects are typically incomplete at the time of the study \citep{solovieff_pleiotropy_2013}. In the face of incomplete biological knowledge and possible instrument invalidity, can valid causal estimates be derived? 

Previous work in IV estimation in the presence of possibly invalid instruments is limited. Traditional instrumental variables literature has stated that to estimate the causal effect of an exposure on an outcome when there are unmeasured confounders, one needs to have at least one instrument that one \emph{knows} is valid \citep{wooldridge_econometrics_2010}. \citet{kolesar_identification_2011} considered the possibility of identifying causal effects when all the instruments are invalid because of direct effects on the outcome. The authors showed that if the direct effects are orthogonal to the instruments' effects on the treatment, then the causal effect can be identified. \citet{kolesar_identification_2011} describes conditions under which this orthogonality is plausible. But, for MR, this stringent structure on the instruments would not hold in most cases as it would mean that the pleiotropic effects of the IVs are orthogonal to the effects of the IVs on the treatment. \citet{gautier_high_2011} analyzed instrumental variables regression in the presence of possibly invalid instruments. However, for their procedure to work, one must have a pre-defined set of known valid instruments. 

Our paper adds to the prior literature as follows. First, we show that it is indeed possible to identify and estimate the causal effect without a known pre-defined set of valid instruments. In particular, under a weaker condition where the proportion of invalid instruments is strictly less than 50\% of the total instruments, we show that identification and estimation is possible. For example, given four possible haplotypes/instruments in the previous example by \citet{timpson_c-reactive_2005}, estimation of the causal effect of CRP on metabolic phenotypes is still possible if no more than one instrument is invalid, without knowing exactly which of the four is invalid. We also show conditions for identification when the 50\% threshold may not hold.

Second, we develop a fast $\ell_1$ estimation procedure to estimate the causal effect of the exposure on the outcome in the presence of possibly invalid instruments. The procedure has provable theoretical guarantees on estimation performance and is computationally as fast as ordinary least squares. The procedure is implemented and available on CRAN as an R package \emph{sisVIVE}, which stands for Some Invalid Some Valid IV Estimator.

Third, we conduct a simulation study that compares our method to two-stage least squares (TSLS), the most popular estimation procedure in IV estimation. We show that our procedure dominates TSLS when the instruments may be invalid. We also conduct a real MR study concerning the effect of BMI on health-related quality of life (HRQL) measure using our new method.

\section{CAUSAL MODEL AND INSTRUMENTAL VARIABLES}
\subsection{Notation}
To define valid instruments, the potential outcomes approach \citep{neyman_application_1990, rubin_estimating_1974} for instruments laid out in \citet{holland_causal_1988} is used. For each individual $i \in \{1,\ldots,n\}$, let $Y_i^{(d,\mathbf{z})} \in \reals$ be the potential outcome if the individual were to have exposure $d \in \reals$ and instruments $\mathbf{z} \in \reals^L$. Let $D_i^{(\mathbf{z})} \in \reals$ be the potential exposure if the individual had instruments $\mathbf{z} \in \reals^L$. For each individual, only one possible realization of $Y_i^{(d,\mathbf{z})}$ and $D_{i}^{(\mathbf{z})}$ is observed, denoted as $Y_i$ and $D_i$, respectively, based on his observed instrument values $\mathbf{Z}_{i.} \in \reals^L$ and exposure $D_{i}$. In total, $n$ sets of outcome, exposure, and instruments, denoted as $(Y_i, D_i,\mathbf{Z}_{i.})$, are observed in an i.i.d. fashion. 

We denote $\mathbf{Y} = (Y_1,\ldots,Y_n)$ to be an $n$-dimensional vector of observed outcomes, $\mathbf{D} = (D_1,\ldots,D_n)$ to be an $n$-dimensional vector of observed exposures, and $\mathbf{Z}$ to be a $n$ by $L$ matrix of instruments where row $i$ consists of $\mathbf{Z}_{i.}$. 

For any vector $\bm{\alpha} \in \reals^L$, let $\alpha_j$ denote the $j$th element of $\bm{\alpha}$. Let $\|\bm{\alpha}\|_1$, $\|\bm{\alpha}\|_2$, and $\|\bm{\alpha}\|_\infty$ be the usual $1,2$ and $\infty$-norms, respectively. Let $\|\bm{\alpha}\|_0$ denote the $0$-norm, i.e. the number of non-zero elements in $\bm{\alpha}$. The support of $\bm{\alpha}$, denoted as supp$(\bm{\alpha}) \subseteq \{1,\ldots,L\}$, is defined as the set containing the non-zero elements of the vector $\bm{\alpha}$, i.e. $j \in$ supp$(\bm{\alpha})$ if and only if $\alpha_j \neq 0$. A vector $\bm{\alpha}$ is called $s$-sparse if it has no more than $s$ non-zero entries. Also, for a vector $\bm{\alpha} \in \reals^L$ and set $A \subseteq \{1,\ldots,L\}$, we denote $\bm{\alpha}_{A} \in \reals^L$ to be the vector where all the elements except whose indices are in $A$ are zero. 

For any $n$ by $L$ matrix $\mathbf{M} \in \reals^{n \times L}$, we denote the $(i,j)$ element of matrix $\mathbf{M}$ as $M_{ij}$, the $i$th row as $\mathbf{M}_{i.}$, and the $j$th column as $\mathbf{M}_{.j}$. Let $\mathbf{M}^T$ be the transpose of $\mathbf{M}$. Let $\mathbf{P}_{\mathbf{M}}$ be the $n$ by $n$ orthogonal projection matrix onto the column space of $\mathbf{M}$, specifically $\mathbf{P}_{\mathbf{M}} = \mathbf{M} (\mathbf{M}^T \mathbf{M})^{-1} \mathbf{M}^T$; it is assumed that $\mathbf{M}^T \mathbf{M}$ has a proper inverse, unless otherwise noted. Let $\mathbf{P}_{\mathbf{M}^\perp}$ be the residual projection matrix, specifically $\mathbf{P}_{\mathbf{M}^{\perp}} = \mathbf{I} - \mathbf{P}_{\mathbf{M}}$ where $\mathbf{I}$ is an $n$ by $n$ identity matrix.

For any sets $A \subseteq \{1,\ldots,L\}$, we denote $A^C$ to be the complement of set $A$. Also,  we denote $|A|$ to be the cardinality of set $A$. 
\subsection{Model}
We consider the Additive LInear, Constant Effects (ALICE) model of \citet{holland_causal_1988} and extend it to allow for multiple valid and possibly invalid instruments as in \citet{small_sensitivity_2007}. Let $d',d \in \reals$ be possible values of the exposure and $\mathbf{z}',\mathbf{z} \in \reals^L$ be possible values of the instruments. Let $\epsilon_i = Y_{i}^{(0,\mathbf{0})} - E[Y_{i}^{(0,\mathbf{0})} | \mathbf{Z}_{i.}]$ and the collection of $\epsilon_i$ be denoted as $\bm{\epsilon} = (\epsilon_1,\ldots,\epsilon_n)$. Suppose we have the following potential outcomes model for the outcome
\begin{align}
Y_{i}^{(d',\mathbf{z}')} - Y_{i}^{(d,\mathbf{z})} &=  (\mathbf{z}' - \mathbf{z})^T \bm{\phi}^* + (d' - d) \beta^* \label{eq:model1} \\
E(Y_i^{(0,\mathbf{0})} | \mathbf{Z}_{i.}) &=  \mathbf{Z}_{i.}^T \bm{\psi}^* \label{eq:model2}
\end{align}
where $\bm{\phi}^*, \bm{\psi}^* \in \reals^L$, and $\beta^* \in \reals$ are unknown parameters. In equation \eqref{eq:model1}, the parameter $\beta^*$ represents the causal parameter of interest, the causal effect on the outcome of changing the exposure by one unit. Also in equation \eqref{eq:model1}, the parameter $\bm{\phi}^*$ represents the direct effect of the instruments on the outcome; changing instruments from $\mathbf{z}'$ to $\mathbf{z}$ results in a direct effect on the outcome of $(\mathbf{z}' - \mathbf{z})^T \bm{\phi}^*$. In equation \eqref{eq:model2}, the parameter $\bm{\psi}^*$ represents the confounders that affect the instrument and the outcome. In particular, without any confounders, there should not be any relationship between the instruments $\mathbf{Z}_{i.}$ and the potential outcome $Y_{i}^{(0,\mathbf{0})}$. Instead, in equation \eqref{eq:model2}, they are related via  $\bm{\psi}^*$.

Let $\bm{\alpha}^* = \bm{\phi}^* + \bm{\psi}^*$. When we combine equations \eqref{eq:model1} and \eqref{eq:model2} along with the definition of $\epsilon_i$, we have the observed data model
\begin{equation} \label{eq:model3}
Y_i =  \mathbf{Z}_{i.}^T \bm{\alpha}^* +  D_i \beta^* + \epsilon_i, \quad{} E(\epsilon_i | \mathbf{Z}_{i.}) = 0
\end{equation}
We make the following remarks regarding the model \eqref{eq:model3}. First, the model can include exogenous measured covariates, say $\mathbf{X}_{i.} \in \reals^p$ which may include the intercept term, and we can replace the variables $Y_i$, $D_i$, and $\mathbf{Z}_{i.}$ with the residuals after regressing them on $\mathbf{X}$ (e.g. replace $\mathbf{Y}$ by $(\mathbf{I} - \mathbf{P}_{\mathbf{X}})\mathbf{Y}$) where $\mathbf{X}$ is the $n$ by $p$ matrix of covariates \citep{wang_inference_1998}. The results in this paper will hold generally when working with such data that is transformed by regressing out the effect of $\mathbf{X}$. In the same spirit, the model can be extended to non-linear models by including appropriate basis transformations of $\mathbf{Z}_{i.}$. However, for simplicity of exposition, we will focus on a model without any measured covariates or non-linear terms. We will also assume that $\mathbf{Y}$, $\mathbf{D}$, and the columns of $\mathbf{Z}$ are centered, which can also result from a residual transformation with $\mathbf{X}$ containing only the intercept term.

Second, following \citet{heckman_alternative_1985}, \citet{bjorklund_estimation_1987}, and \citet{small_sensitivity_2007}, we can incorporate heterogeneous effects as follows. Suppose, instead of equation \eqref{eq:model1}, the potential outcomes model for the outcome is 
\begin{equation} \label{eq:heteromodel1}
Y_{i}^{(d',\mathbf{z}')} - Y_{i}^{(d,\mathbf{z})} =  (\mathbf{z}' - \mathbf{z})^T \bm{\phi}^* + (d' - d) \beta_i^*
\end{equation}
where $\beta^* = E(\beta_i^*)$ is the average effect of the exposure for everyone in the population. Then, the observed data model can be derived from \eqref{eq:heteromodel1} as follows.
\begin{equation} \label{eq:heteromodel2}
Y_i =  \mathbf{Z}_{i.}^T \bm{\alpha}^* +  D_i \beta^* + (\beta_i^* - \beta^*)D_i + \epsilon_i, \quad{} E(\epsilon_i | \mathbf{Z}_{i.}) = 0
\end{equation}
If $(\beta_i^* - \beta^*)$ is independent of $D_i$ given $\mathbf{Z}_{i.}$, the heterogeneous model in \eqref{eq:heteromodel2} is identical to model \eqref{eq:model3} and our result for Theorem \ref{prop:1} in Section \ref{sec:identModel} hold. Also, as \citet{small_sensitivity_2007} notes in page 1055, the assumption that $(\beta_i^* - \beta^*)$ is independent of $D_{i}$ given $\mathbf{Z}_{i.}$ is equivalent to that ``units do not select their treatment levels $D_i$ given $\mathbf{Z}_{i.}$ based on the gains they would experience from treatment $D_i$ given $\mathbf{Z}_{i.}$.'' If this assumption is violated, different groups of people will have different treatment effects, which in turn would lead to possibly non-zero $\bm{\alpha}^*$ (see \citet{angrist_two-stage_1995} and \citet{small_sensitivity_2007} for details). For simplicity of exposition, we'll focus on a model with constant linear effect $\beta^*$.

\subsection{Definition of Valid Instruments}
Based on the observed model in \eqref{eq:model3}, the parameter $\bm{\alpha}^*$ combines both the direct effect, represented by $\bm{\phi}^*$, and the effect of confounders on the $\mathbf{Z}_{i.}$ and $Y_i^{(0,0)}$ relationship, represented by $\bm{\psi}^*$. If there is no direct effect and no effect of the confounders, then $\bm{\alpha}^* = 0$. Hence, the value of $\bm{\alpha}^*$ captures the notion of valid and invalid instruments. The definition below formalizes this idea: 
\begin{definition} \label{def:validIV} Suppose we have the models in \eqref{eq:model1} -\eqref{eq:model3} with $L$ instruments. We say instrument $j \in \{1,\ldots,L\}$ is valid if $\alpha_j^* = 0$ and invalid if $\alpha_j^* \neq 0$.
\end{definition}
Definition \ref{def:validIV} distinguishes valid and invalid instruments based on supp$(\bm{\alpha}^*)$, the support of $\bm{\alpha}^*$. If instrument $j = 1,\ldots,L$ is not in the support, it is valid. If the instrument is in the support of $\bm{\alpha}^*$, it is invalid. Consequently, not knowing which instruments are valid and invalid directly translates to not knowing the support of $\bm{\alpha}^*$ in model \eqref{eq:model3}. 

In the case of only one instrument (i.e. $L = 1$), Definition \ref{def:validIV} of a valid instrument matches with the informal definition (A2) and (A3) in the Introduction and the formal definition in \citet{holland_causal_1988}. Specifically, the notion of exclusion restriction (A2), $Y_{i}^{(d,z)} = Y_{i}^{(d,z')}$ for all $z, z' \in \reals$ is equivalent to the parameter $\phi^*$ in equation \eqref{eq:model1} being zero. Also, the assumption of no unmeasured confounding of the IV-outcome relationship (A3) where $Y_{i}^{(d,z)}$ and $D_i^{(z)}$ are independent of $Z_i$ for all $d,z \in \reals$, is encoded by $\psi^*$ in \eqref{eq:model2} being zero. Hence, $\phi^* = \psi^* = 0$, which implies $\alpha^* = 0$ and a valid IV in \citet{holland_causal_1988} is also a valid IV in our definition. Also, for one instrument, our model and definition is a special case of the definition of valid instrument discussed in \citet{angrist_identification_1996} where our model assumes an additive, linear, and constant treatment effect $\beta^*$.

For more than one instruments (i.e. $L > 1$), our model \eqref{eq:model1}-\eqref{eq:model3} and definition of valid IVs can be viewed as a generalization of \citet{holland_causal_1988}. It is important to note that in this generalization, Definition \ref{def:validIV} defines the validity of an instrument $j$ in the context of the set of instruments $\{1,\ldots,L\}$ being considered. Specifically, an instrument $j$ could be valid in the context of the set $\{1,\ldots,L\}$ (i.e. $\alpha_j^* = 0$), but invalid if considered alone because $\mathbf{Z}_{.j}$ may be associated with or causally affect another IV $\mathbf{Z}_{.j'}$, $j \neq j'$ where $\alpha_{j'}^* \neq 0$.  

\section{ESTIMATION OF CAUSAL EFFECT}
\subsection{Identifiability of Model} \label{sec:identModel}
We first address whether the model in equation \eqref{eq:model3} is identifiable, that is whether we can estimate the unknown parameters if we were given infinite data, even without any knowledge about which instruments are valid and invalid. We begin by making the assumptions.
\begin{enumerate}
\item[(a)] $E(\mathbf{Z}^T \mathbf{Z})$ is full rank;
\item[(b)] For $E(\mathbf{Z}^T \mathbf{D}) = E(\mathbf{Z}^T \mathbf{Z}) \bm{\gamma}^*$, the components of $\bm{\gamma}^*$ are all not equal to zero, i.e. $\gamma_j^* \neq 0$ for $j=1,\ldots, L$.
\end{enumerate} 
Assumption (a) states that the matrix of instruments $\mathbf{Z}$ is full rank, a common assumption in the instrumental variables literature \citep{wooldridge_econometrics_2010}. Assumption (b) states that the instruments are associated with the exposure, akin to assumption (A1), that the instruments are relevant to the exposure; note that there does not need to be a causal relationship between the instrument $\mathbf{Z}$ and the exposure $\mathbf{D}$, just an association \citep{hernan_instruments_2006,didelez_mendelian_2007, glymour_credible_2012}. As one reviewer remarked, assumption (b) requires that all $L$ instruments are related to the exposure, $\gamma_j^* \neq 0$ for all $j$. If we have instruments that are not relevant to the exposure, $\gamma_j^* = 0$, we can exclude them from further analysis and concentrate only on those instruments that affect the exposure. 

Now, the model in \eqref{eq:model3} implies the following moment condition.
\begin{equation} \label{eq:momentCond1}
E(\mathbf{Z}^T(\mathbf{Y} - \mathbf{Z}\bm{\alpha}^{*} - \mathbf{D}\beta^{*})) = 0 
\end{equation}
Suppose the assumptions (a) and (b) hold. Then, the moment equation in equation \eqref{eq:momentCond1} simplifies to 
\begin{equation} \label{eq:momentCond2}
\bm{\Gamma}^* = \bm{\alpha}^* + \bm{\gamma}^* \beta^*
\end{equation}
where $\bm{\Gamma}^* = E(\mathbf{Z}^T \mathbf{Y}) E(\mathbf{Z}^T \mathbf{Z})^{-1}$. Since both $\bm{\Gamma}^*$ and $\bm{\gamma}^*$, defined by (b), can be identified by their moments based on observed data $E(\mathbf{Z}^T \mathbf{Y}) E(\mathbf{Z}^T \mathbf{Z})^{-1}$ and $E(\mathbf{Z}^T \mathbf{D}) E(\mathbf{Z}^T \mathbf{Z})^{-1}$, respectively, $\bm{\alpha}^*$ and $\beta^*$ are identified if we can find a bijective mapping between $\bm{\alpha}^*, \beta^{*}$ and $\bm{\Gamma}^*, \bm{\gamma}^*$, i.e. a unique solution $\bm{\alpha}^*, \beta^*$ given $\bm{\Gamma}^*, \bm{\gamma}^*$. 

If we know exactly which instruments are invalid $A^* = $ supp$(\bm{\alpha}^*) = \{j : \alpha_j^* \neq 0\}$ and hence, know the set of valid instruments $(A^*)^C = \{j : \alpha_j^* = 0\}$, equation \eqref{eq:momentCond2} becomes
\[
\bm{\alpha}_{(A^*)^C} + \bm{\gamma}_{(A^*)^C}^* \beta^* =\bm{\gamma}_{(A^*)^C}^* \beta^* = \Gamma_{(A^*)^C}^*
\]
There is a unique $\beta^*$ so long as $|(A^*)^C| > 0$, or there is at least one known valid instrument. This is a special case of the classic identification result for linear simultaneous equation models \citep{koopmans_measuring_1950}.

If we know that there is a valid instrument, but are not sure of the identity of the valid instrument(s), then a unique solution to \eqref{eq:momentCond2} and hence, identification, is not guaranteed. For example, let there be four instruments, $L = 4$ with $\bm{\gamma}^* = (1,2,3,4)$ and $\bm{\Gamma}^* = (1,2,3,8)$. Then, depending on the set of valid instruments $(A^*)^C$, which is unknown, we have two different $\beta^*$ that satisfy equation \eqref{eq:momentCond2}. If the set of valid instruments $(A^*)^C$ is $(A^*)^C = \{1,2,3\}$, we have $\bm{\gamma}_{(A^*)^C}^* \beta^* = \Gamma_{(A^*)^C}^*$ and $\beta^* = 1$. However, if the set of valid instruments is $(A^*)^C = \{4\}$, $\beta^* = 2$. Without knowing exactly which $(A^*)^C$ is the true set of valid instruments, we can't choose between the two $\beta^*$s and hence, there is not a unique solution to \eqref{eq:momentCond2}. 

But, suppose we impose constraints on $A^*$. Specifically, suppose the number of invalid instruments, $s = |A^*|$, has to be less than some number $U$, $s < U$, without knowing which instruments are invalid or knowing exactly the number of invalid instruments. For example, geneticists may have a rough idea on the maximum number of invalid instruments, $U$, but not know exactly the number of invalid instruments nor do they know exactly which instruments are invalid. Note that this condition of knowing the maximum number of invalid instruments is a much weaker requirement than what is traditionally required in IV and MR literature where one must know exactly which instruments are invalid, i.e. know exactly the set $A^*$; here, we only need an upper bound on the cardinality of $A^*$. Under the weaker condition $s < U$, a unique solution to \eqref{eq:momentCond2} can exist and this is stated in Theorem \ref{prop:1}.

\begin{theorem}[Uniqueness of Solution] \label{prop:1}
Suppose we assume assumptions (a) and (b) and the modeling assumption \eqref{eq:model3}. Let $s \in \{0,1,\ldots,L\}$ with $s < U$ where $U = 1,\ldots,L$. 
Consider all sets $C_m \subseteq \{1,\ldots,L\}, m=1,\ldots,M$ of size $|C_m| = L - U +1$ with the property 
\[
\gamma_j^* q_m = \Gamma_j^* \quad{} j\in C_m
\] 
where $q_m$ is a constant. There is a unique solution $\bm{\alpha}^*$ and $\beta^*$ to \eqref{eq:momentCond2} if and only if $q_m = q_{m'}$ for all $m, m' \in \{1,\ldots,M\}$. 
\end{theorem}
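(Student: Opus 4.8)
The plan is to reduce \eqref{eq:momentCond2} to a one-dimensional problem in $\beta$ and then identify its admissible solutions with the constants $q_m$ in the statement, so that uniqueness becomes precisely the assertion that these constants coincide. Throughout I would use only assumption (b), which guarantees that the ratios $r_j \Def \Gamma_j^*/\gamma_j^*$ are well defined for every $j \in \{1,\ldots,L\}$. First, once the identified quantities $\bm{\Gamma}^*$ and $\bm{\gamma}^*$ are fixed, \eqref{eq:momentCond2} forces $\bm{\alpha} = \bm{\Gamma}^* - \bm{\gamma}^*\beta$; hence a solution consistent with there being fewer than $U$ invalid instruments is nothing but a scalar $\beta$ for which $\|\bm{\Gamma}^* - \bm{\gamma}^*\beta\|_0 \le U-1$. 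Writing $V(\beta) \Def \{ j : \Gamma_j^* - \gamma_j^*\beta = 0 \} = \{ j : r_j = \beta \}$ and using $\gamma_j^* \neq 0$, this constraint is exactly $|V(\beta)| \ge L - U + 1$, so the set of admissible causal effects is $\mathcal{B} \Def \{ \beta \in \reals : |\{ j : r_j = \beta \}| \ge L - U + 1 \}$.

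Next I would show that $\mathcal{B} = \{q_1,\ldots,q_M\}$ and that this set is nonempty. If $\beta \in \mathcal{B}$, pick any $C \subseteq V(\beta)$ with $|C| = L - U + 1$; then $\gamma_j^*\beta = \Gamma_j^*$ for all $j \in C$, so $C$ is one of the sets $C_m$, and since the $\gamma_j^*$ are nonzero the associated constant satisfies $q_m = \beta$. Conversely, for each $C_m$ we have $r_j = q_m$ on a set of size $L - U + 1$, so $q_m \in \mathcal{B}$. Finally $\beta^* \in \mathcal{B}$: the hypothesis $s = |A^*| < U$ makes the true valid set $(A^*)^C$ have cardinality $L - s \ge L - U + 1$, with $r_j = \beta^*$ on it; in particular $M \ge 1$ and $\beta^*$ is among the $q_m$.

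Combining these, the solutions of \eqref{eq:momentCond2} that respect the sparsity bound are in bijection with $\mathcal{B} = \{q_1,\ldots,q_M\}$ via the injective map $\beta \mapsto (\bm{\Gamma}^* - \bm{\gamma}^*\beta, \beta)$. Hence there is a unique solution if and only if $|\mathcal{B}| = 1$, i.e.\ if and only if $q_m = q_{m'}$ for all $m,m'$; and when this holds the common value must equal $\beta^*$ by the previous paragraph, so that $\bm{\alpha}^* = \bm{\Gamma}^* - \bm{\gamma}^*\beta^*$ is pinned down as well.

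The only delicate point — the one I would flag as the crux — is the bookkeeping around the fixed cardinality $L - U + 1$: one must check that restricting attention to index sets of exactly this size loses nothing (any larger set on which $r_j$ is constant contains such a subset, and conversely any such subset extends to a full valid set $V(\beta)$), and that the hypothesis $s < U$ is exactly what guarantees $\mathcal{B} \neq \emptyset$, so that the equivalence is neither vacuous nor produces a spurious candidate but genuinely recovers $(\bm{\alpha}^*,\beta^*)$. Everything else follows immediately from $\gamma_j^* \neq 0$.
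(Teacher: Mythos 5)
Your proof is correct and follows essentially the same route as the paper's: both arguments rest on identifying the admissible solutions of \eqref{eq:momentCond2} with the values of $\beta$ for which at least $L-U+1$ of the ratios $\Gamma_j^*/\gamma_j^*$ coincide, and both use $s<U$ to guarantee that the true $\beta^*$ is among the $q_m$, so that the set of candidates is nonempty. Your packaging of this as a bijection between solutions and $\{q_1,\ldots,q_M\}$ is a cleaner, unified version of the paper's two-part argument (the equivalence of uniqueness for $\bm{\alpha}^*$ and $\beta^*$, followed by the contradiction step for sufficiency and the explicit construction for necessity), but the underlying ideas are identical.
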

To understand Theorem \ref{prop:1}, note that if the valid instruments are those in the set $C_m$, then the causal effect $\beta^* =q_m$. Theorem \ref{prop:1} says that $\beta^*$ is identified as long as there are not two subsets of the instruments of cardinality $L - U + 1$ that give internally consistent estimates of $\beta^*$ (i.e. all instruments in each subset give the same estimate of $\beta^*$), but are externally inconsistent (i.e. the estimates of $\beta^*$ from the two subsets are different). We call the property in Theorem \ref{prop:1} that there is a unique solution to $\bm{\alpha}^*$ and $\beta^*$ to \eqref{eq:momentCond2} if and only if $q_m = q_{m'}$ for all $m, m' \in \{1,\ldots,M\}$ the \emph{consistency criterion}. We thank Jack Bowden for his insight and suggestions on terminology for interpreting Theorem \ref{prop:1}.

As an example of applying Theorem \ref{prop:1}, consider our numerical example above with $\bm{\gamma}^* = (1,2,3,4)$ and $\bm{\Gamma}^* = (1,2,3,8)$ and $U = 3$. Then, by Theorem \ref{prop:1} we have 3 sets $C_1 = \{1,2\}, C_2 = \{1,3\}, C_3 = \{2,3\}$ with $q_1 = q_2 = q_3 = 1$. Hence, $\bm{\gamma}^*$ and $\bm{\Gamma}^*$ satisfy the consistency criterion of Theorem \ref{prop:1} and we have a unique solution $\bm{\alpha}^*$ and $\beta^*$ to \eqref{eq:momentCond2}. In contrast, if $\bm{\gamma}^* = (1,2,3,4)$ and $\bm{\Gamma}^* = (1,2,6,8)$, we would have two sets $C_1 = \{1,2\}, C_2 = \{3,4\}$ with $q_1 = 1$ and $q_2=2$, respectively. These $\bm{\gamma}^*$ and $\bm{\Gamma}^*$ do not satisfy the consistency criterion of Theorem \ref{prop:1} because $q_1 \neq q_2$ and there are no unique solutions $\bm{\alpha}^*$ and $\beta^*$ to \eqref{eq:momentCond2}. Further discussion of this particular example is discussed in the Supplementary Materials along with discussion of the implications of Theorem \ref{prop:1} when the additional linearity and normality assumptions of the classical linear simultaneous/structural equation model \citep{koopmans_measuring_1950} are considered.

Checking the consistency criterion can be computationally difficult, especially if $U$ is large; it requires looking at ${L \choose L - U +1}$ possible subsets of $\{1,\ldots,L\}$ and the constants $q_m$ associated with $\bm{\Gamma}^*$ and $\bm{\gamma}^*$. Corollary \ref{coro:0} says that the consistency criterion is automatically satisfied if $U \leq L/2$ (i.e. if 50\% of the total candidate of $L$ instruments are invalid) regardless of the values of $\bm{\gamma}^*$ and $\bm{\Gamma}^*$.
\begin{corollary} \label{coro:0} If $U \leq L/2$, there is always a unique solution to \eqref{eq:momentCond2}
\end{corollary}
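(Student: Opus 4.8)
The plan is to deduce the corollary directly from Theorem \ref{prop:1} by showing that the hypothesis $U \le L/2$ forces the consistency criterion to hold automatically, for any values of $\bm{\gamma}^*$ and $\bm{\Gamma}^*$ allowed by assumptions (a) and (b). Concretely, I would take any two of the index sets $C_m, C_{m'}$ appearing in Theorem \ref{prop:1} --- each of cardinality $L - U + 1$ and each carrying a constant $q_m$ (resp.\ $q_{m'}$) with $\gamma_j^* q_m = \Gamma_j^*$ for $j \in C_m$ --- and argue that necessarily $q_m = q_{m'}$. Once that is established for every pair, the consistency criterion is met and Theorem \ref{prop:1} immediately yields a unique solution $\bm{\alpha}^*, \beta^*$ to \eqref{eq:momentCond2}.

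The key step is a simple counting argument. Since $U \le L/2$, we have $|C_m| = |C_{m'}| = L - U + 1 \ge L/2 + 1 > L/2$, so $|C_m| + |C_{m'}| > L$. By inclusion--exclusion, $|C_m \cap C_{m'}| \ge |C_m| + |C_{m'}| - |C_m \cup C_{m'}| \ge |C_m| + |C_{m'}| - L > 0$, so the two sets share at least one index $j$. For that $j$ we have $\gamma_j^* q_m = \Gamma_j^* = \gamma_j^* q_{m'}$, and assumption (b) guarantees $\gamma_j^* \neq 0$, so dividing by $\gamma_j^*$ gives $q_m = q_{m'}$. If there happens to be at most one such set, the criterion holds vacuously; and since $|(A^*)^C| = L - s \ge L - U + 1$ (because $s < U$ forces $s \le U - 1$), there is in fact always at least one such set, namely any size-$(L-U+1)$ subset of the valid instruments, carrying the constant $\beta^*$ --- though this last remark is not needed for the argument.

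I do not expect a genuine obstacle here: the whole content is the pigeonhole/inclusion--exclusion observation that two subsets of $\{1,\ldots,L\}$ each strictly larger than $L/2$ must overlap, combined with the relevance condition (b) that licenses the cancellation of $\gamma_j^*$. The only points needing mild care are the arithmetic when $L$ is odd (where $U \le L/2$ means $U \le \lfloor L/2 \rfloor$ and $L - U + 1 \ge \lceil L/2 \rceil + 1$, still strictly exceeding $L/2$) and making sure the division step invokes exactly assumption (b) rather than anything stronger.
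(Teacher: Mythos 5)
Your proof is correct and is essentially the paper's own argument: both show that any two sets $C_m, C_{m'}$ of size $L-U+1 \ge L/2+1$ must intersect, and then cancel $\gamma_j^*\neq 0$ (assumption (b)) at a common index to conclude $q_m = q_{m'}$. Your added remarks on the existence of at least one set $C_m$ and on the odd-$L$ arithmetic are fine but not needed.
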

In addition to the computational benefits, compared to Theorem \ref{prop:1}, Corollary \ref{coro:0} is simpler to interpret. For example, for a geneticist, without knowing the entire biology of genetic instruments, specifically knowing which instruments are valid and invalid, as long as the number of invalid instruments is less than 50\% of the total instruments, then the geneticist can rest assured that the parameters can always be identified. If this is not the case, the geneticist can always check the consistency criterion stated in Theorem \ref{prop:1}.

We would like to mention two final points about Theorem \ref{prop:1}. First, Theorem \ref{prop:1} is a statement about uniqueness of solutions for the parameters $\bm{\alpha}^*$, and $\beta^*$ in equation \eqref{eq:momentCond2}. A natural question to ask is whether the uniqueness is guaranteed for just $\beta^*$, the causal effect of interest, at the expense of non-uniqueness of $\bm{\alpha}^*$. In the proof of Theorem \ref{prop:1}, we show that this cannot be the case. Specifically, regardless of the condition on $s$, the parameter $\beta^*$ is a unique solution to \eqref{eq:momentCond2} if and only if the parameter $\bm{\alpha}^*$ is a unique solution to \eqref{eq:momentCond2}. Second, Theorem \ref{prop:1} supposes the existences of the sets $C_m$ and proceeds to compare their corresponding $q_m$. However, one may ask whether these sets $C_m$ even exist in the first place. In the proof of Theorem \ref{prop:1}, we provide a rigorous argument that, indeed, under model \eqref{eq:model3} and $s < U$, at least one set $C_m$ has to exist.

\subsection{Estimation of the Causal Effect of Exposure on Outcome} \label{sec:est}
Given the model \eqref{eq:model3} and $s < U$, Theorem \ref{prop:1} lays out the sufficient and necessary condition for finding a unique solution to the moment equation \eqref{eq:momentCond1}. Specifically, if the model is identified, the moment equation \eqref{eq:momentCond1} is zero at exactly one value, the true value of $\bm{\alpha}^*$ and $\beta^*$. Naturally then, a method to estimate the one true value is to find the values of $\bm{\alpha}^*$ and $\beta^*$ that minimize \eqref{eq:momentCond1} subject to the parameter constraint that $s < U$. Formally, we can write this estimation strategy as 
\begin{equation} \label{eq:L0}
\amin{\bm{\alpha},\beta} \frac{1}{2} \|\mathbf{P}_{\mathbf{Z}} (\mathbf{Y} - \mathbf{Z}\bm{\alpha} - \mathbf{D}\beta)\|_2^2, \quad{} s.t. \quad{}  ||\bm{\alpha}||_0 < U
\end{equation}
where $||\bm{\alpha}||_0$ is the number of non-zero entries of $\bm{\alpha}$ and by Definition \ref{def:validIV}, $s = ||\bm{\alpha}||_0$. However, it is computationally infeasible to go through all subsets of size less than $U$ and this type of problem has been shown to be NP-hard \citep{natarajan_sparse_1995}. Instead, a computationally tractable version of estimation strategies like \eqref{eq:L0} has been proposed in the literature using a convex surrogate of the $\ell_0$ norm \citep{candes_decoding_2005, tropp_relax_2006, donoho_most_2006}. Specifically, the computationally feasible version of the estimation strategy in \eqref{eq:L0} can be written as
\begin{equation} \label{eq:L1}
\amin{\bm{\alpha},\beta} \frac{1}{2} \|\mathbf{P}_{\mathbf{Z}} (\mathbf{Y} - \mathbf{Z}\bm{\alpha} - \mathbf{D}\beta)\|_2^2, \quad{} s.t. \quad{}  ||\bm{\alpha}||_1 \leq t
\end{equation}
where the $\ell_0$ norm is replaced by the convex norm $\ell_1$ and $U$ is replaced by a user-specified tuning parameter $t > 0$. In this paper, we propose the equivalent Lagrangian form as our estimator of the causal effect, called \emph{some invalid some valid IV estimator}, or sisVIVE, as follows
\begin{equation} \label{eq:proceduresimple}
(\hat{\bm{\alpha}}_{\lambda},\hat{\beta}_{\lambda}) \in \amin{\bm{\alpha},\beta} \frac{1}{2} \|\mathbf{P}_{\mathbf{Z}} (\mathbf{Y} - \mathbf{Z}\bm{\alpha} - \mathbf{D}\beta)\|_2^2 + \lambda\|\bm{\alpha}\|_1
\end{equation}
for some tuning parameter $\lambda > 0$ where $\lambda$ corresponds to $t$ in \eqref{eq:L1}. If $\lambda = 0$ in \eqref{eq:proceduresimple}, then \eqref{eq:proceduresimple} is the popular two stage least squares (TSLS) estimator, which is equivalent to the GMM estimator when the $\bm{\epsilon}$ are assumed to be homoscedastic \citep{hansen_large_1982}. Hence, sisVIVE can be viewed as a generalization of TSLS or GMM.

sisVIVE also bears some resemblance to the traditional $\ell_1$ penalization procedure, in particular the Lasso \citep{tibshirani_regression_1996} or the recent $\ell_1$ penalty procedures in IV estimation by \citet{gautier_high_2011} and \citet{belloni_sparse_2012}. However, there are a few important differences. First, with regards to traditional Lasso and the procedure proposed by \citet{gautier_high_2011}, our procedure in \eqref{eq:proceduresimple} only penalizes $\bm{\alpha}^*$. The estimator \eqref{eq:proceduresimple} does not penalize $\beta^*$, the causal effect of the exposure on the outcome, because the causal effect may be far from zero. In contrast, the prior works we mentioned penalize all the parameters in the model. Second, the traditional Lasso only considers regression with all exogenous regressors, which are regressors that are assumed to be independent of the error term or assumed to be fixed. The regressors in our model \eqref{eq:model3} are not all exogenous; specifically, model \eqref{eq:model3} contains one random endogenous variable, $D_i$, which is dependent on the error term. Third, \citet{gautier_high_2011} and \citet{belloni_sparse_2012} assume that either all the $L$ instruments are valid or we know exactly which subset of them are valid. In contrast, our procedure does not assume this.

Finally, a careful reader may have recognized that there may be multiple minimizers to the equation \eqref{eq:proceduresimple}, specifically $\hat{\beta}_\lambda$, because $||\bm{\alpha}||_1$ is not strictly convex and hence, we use the set notation instead of the equality sign in \eqref{eq:proceduresimple}. This might seem to be a concern as there are multiple estimates of $\beta^*$. However, as we will show in Section \ref{sec:estTheory}, all minimizers of \eqref{eq:proceduresimple} are close to the true values $\beta^*$. Also, if the entries of the matrix $\mathbf{P}_{\hat{\mathbf{D}}^\perp} \mathbf{Z}$ where $\hat{\mathbf{D}} = \mathbf{P}_{\mathbf{Z}}\mathbf{D}$ (i.e. the predicted value of the exposure given the instruments) are drawn from a continuous distribution, then the solution to \eqref{eq:proceduresimple} is unique \citep{tibshirani_lasso_2013}. 

Without loss of generality, we assume that the columns of $\mathbf{Z}$ are scaled to unit length. This allows all $L$ instruments to have identical units so no columns of $\mathbf{Z}$ gets unfairly penalized by the penalty term in \eqref{eq:proceduresimple} simply due to their original units.

\subsection{Choice of $\lambda$} \label{sec:lambdaCV}
Like many penalization procedures, the choice of the tuning parameter $\lambda$ affects the performance of the estimation procedure and this is certainly the case with sisVIVE. High values of $\lambda$ force heavy penalization on $\bm{\alpha}$, which will put most elements of $\bm{\hat{\alpha}}_{\lambda}$ to zero and most instruments will be estimated as valid instruments. In contrast, low values of $\lambda$ will put few elements of $\bm{\hat{\alpha}}_{\lambda}$ to zero and most instruments will be estimated as invalid instruments. In short, the optimal choice of $\lambda$ depends on knowing the exact number of invalid and valid instruments, something not implied by the condition $s < U$.

In practice, cross validation is a popular data-driven method to choose $\lambda$. In the same spirit, we use a $K$-fold cross validation where we minimize the estimating equation $||\mathbf{P}_{\mathbf{Z}}(\mathbf{Y} - \mathbf{Z} \bm{\alpha} - \mathbf{D} \beta)||_2$ instead of the predictive error $||(\mathbf{Y} - \mathbf{Z} \bm{\alpha} - \mathbf{D} \beta)||_2$. We minimize the estimating equation instead of the predictive error since the parameter of interest is the causal effect $\beta^*$ that sets the expected value of the estimating equation to zero (see equation \eqref{eq:momentCond1}, Sections \ref{sec:identModel} and \ref{sec:est}). We use the ``one standard error'' rule used in most cross-validation procedures \citep{hastie_elements_2009} and choose the smallest $\lambda$ that is no more than one standard error above the minimum of the estimating equation. In Section \ref{sec:sim}, we discuss the performance of $\hat{\beta}_{\lambda_{cv}}$, where $\lambda_{cv}$ is the cross-validated $\lambda$ based on the estimating equation through various simulation studies. Also, in the Supplementary Materials, we discuss another method of choosing $\lambda$, in particular, choosing $\lambda$ based on the theoretical guidance from Theorem \ref{prop:2} and Corollary \ref{coro:1}. In short, the Supplementary Materials show that for better estimation performance of $\hat{\beta}_{\lambda}$, it is important not to incorrectly set invalid IVs to be valid (i.e. let $\hat{\alpha}_j$ to be zero when the true  $\alpha_j^*$ is not zero), while the reverse is not as important. This observation argues for choosing $\lambda$ that tends to set relatively few elements of $\hat{\bm{\alpha}}_{\lambda}$ to be zero and in the Supplementary Materials, we demonstrate that cross validation achieves this goal in a wide variety of settings.

\subsection{Estimation Performance} 
\label{sec:estTheory}

How well does sisVIVE estimate the causal effect $\beta^*$? In order to analyze the performance of sisVIVE, we first introduce some basic notations and definitions. 

\begin{definition}
 For any matrix $\mathbf{M}$, the upper and lower restricted isometry property (RIP) constants of order $k$, denoted as $\delta_k^+(\mathbf{M})$ and $\delta_k^-(\mathbf{M})$ respectively, are the smallest $\delta_k^+(\mathbf{M})$ and largest $\delta_k^-(\mathbf{M})$ such that
\begin{equation}\label{eq:RIP}
\delta_k^-(\mathbf{M}) \|\bm{\alpha}\|_2^2\leq \|\mathbf{M} \bm{\alpha}\|_2^2\leq \delta_k^+(\mathbf{M})\|\bm{\alpha}\|_2^2
\end{equation}
 holds for all $k$-sparse vectors $\bm{\alpha}$.
\end{definition}
RIP conditions have been widely used in the literature on compressed sensing and high-dimensional linear regression.  See \cite{cai_compressed_2013} and the references therein. The following theorem characterizes the performance of sisVIVE in finite samples using the RIP conditions. Note that this characterizes all the minimizers $\hat{\beta}_{\lambda}$ from sisVIVE in \eqref{eq:proceduresimple}.

\begin{theorem}[Estimation performance of sisVIVE] 
Suppose we have the model given in \eqref{eq:model3}. Let $\hat{\mathbf{D}} = \mathbf{P}_{\mathbf{Z}}  \mathbf{D}$. Let the restricted isometry constants $\delta^{+}_{2s}(\mathbf{Z})$, $\delta^-_{2s}(\mathbf{Z})$, $\delta_{2s}^+(\mathbf{P}_{\hat {\mathbf{D}}}\mathbf{Z})$ be defined as in \eqref{eq:RIP}, where $s$ is the number of invalid instruments. Suppose
\begin{equation}
\label{RIP.Cond.}
2\delta^-_{2s}(\mathbf{Z}) > \delta_{2s}^+(\mathbf{Z}) + 2\delta_{2s}^+(\mathbf{P}_{\hat{\mathbf{D}}}\mathbf{Z})
\end{equation}
holds, then the estimate $\hat{\beta}_{\lambda}$ given by \eqref{eq:proceduresimple} with tuning parameter $\lambda \ge 3 \|\mathbf{Z}^T \mathbf{P}_{\hat{\mathbf{D}}^\perp}  \bm{\epsilon}\|_\infty $ has the following performance guarantee 
\begin{equation} \label{eq:boundRIP}
|\hat{\beta}_{\lambda} - \beta^*| \leq \frac{ | \hat{\mathbf{D}}^T \bm{\epsilon} |}{\|\hat{\mathbf{D}}\|_2^2} + \frac{1}{\|\hat{\mathbf{D}}\|_2} \left( \frac{(4/3\sqrt{5}) \lambda \sqrt{s\delta_{2s}^+(\mathbf{P}_{\hat{\mathbf{D}}}\mathbf{Z})}}{2\delta_{2s}^-(\mathbf{Z}) - \delta_{2s}^+(\mathbf{Z}) - 2\delta_{2s}^+(\mathbf{P}_{\hat{\mathbf{D}}}\mathbf{Z})}\right).
\end{equation}
\label{prop:2}
\end{theorem}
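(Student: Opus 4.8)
The plan is to turn sisVIVE into an $\ell_1$-penalized regression in $\bm{\alpha}$ alone by profiling out $\beta$, analyze that regression with the standard restricted-isometry machinery for the Lasso, and then translate a bound on $\hat{\bm{\alpha}}_\lambda-\bm{\alpha}^*$ back into a bound on $\hat{\beta}_\lambda-\beta^*$. \textbf{Step 1 (profiling out $\beta$).} Since $\mathbf{P}_{\Z}\Z=\Z$ and $\mathbf{P}_{\Z}\D=\hat{\D}$, the objective in \eqref{eq:proceduresimple} equals $\tfrac12\|\mathbf{P}_{\Z}\Y-\Z\bm{\alpha}-\hat{\D}\beta\|_2^2+\lambda\|\bm{\alpha}\|_1$, which is a convex quadratic in $\beta$ for fixed $\bm{\alpha}$; assumption (b) forces $\hat{\D}\neq\mathbf 0$, so the unique minimizing $\beta$ is $\hat{\D}^T(\mathbf{P}_{\Z}\Y-\Z\bm{\alpha})/\|\hat{\D}\|_2^2$. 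Hence every minimizer of \eqref{eq:proceduresimple} has $\hat{\beta}_\lambda=\hat{\D}^T(\mathbf{P}_{\Z}\Y-\Z\hat{\bm{\alpha}}_\lambda)/\|\hat{\D}\|_2^2$, and $\hat{\bm{\alpha}}_\lambda$ minimizes the profiled criterion $\tfrac12\|\mathbf{P}_{\hat{\D}^\perp}(\mathbf{P}_{\Z}\Y-\Z\bm{\alpha})\|_2^2+\lambda\|\bm{\alpha}\|_1$. Plugging in model \eqref{eq:model3} and using $\mathbf{P}_{\hat{\D}^\perp}\hat{\D}=\mathbf 0$, this is, up to an additive constant, a Lasso problem with design $\widetilde{\Z}:=\mathbf{P}_{\hat{\D}^\perp}\Z$, truth $\bm{\alpha}^*$, and noise whose correlation with the design is $\widetilde{\Z}^T\mathbf{P}_{\hat{\D}^\perp}\mathbf{P}_{\Z}\bm{\epsilon}=\Z^T\mathbf{P}_{\hat{\D}^\perp}\bm{\epsilon}$ (using that $\mathbf{P}_{\Z},\mathbf{P}_{\hat{\D}}$ commute and $\mathbf{P}_{\hat{\D}^\perp}$ is idempotent); this is exactly the quantity the hypothesis $\lambda\ge 3\|\Z^T\mathbf{P}_{\hat{\D}^\perp}\bm{\epsilon}\|_\infty$ controls.

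\textbf{Step 2 (basic inequality and cone condition).} Write $\mathbf{h}=\hat{\bm{\alpha}}_\lambda-\bm{\alpha}^*$ and $A^*=\mathrm{supp}(\bm{\alpha}^*)$, $s=|A^*|$. Comparing the profiled criterion at $\hat{\bm{\alpha}}_\lambda$ and at $\bm{\alpha}^*$, expanding the square, bounding the cross term by $\|\mathbf{h}\|_1\|\Z^T\mathbf{P}_{\hat{\D}^\perp}\bm{\epsilon}\|_\infty\le(\lambda/3)\|\mathbf{h}\|_1$, and using $\|\hat{\bm{\alpha}}_\lambda\|_1\ge\|\bm{\alpha}^*\|_1-\|\mathbf{h}_{A^*}\|_1+\|\mathbf{h}_{(A^*)^C}\|_1$, one obtains $\tfrac12\|\widetilde{\Z}\mathbf{h}\|_2^2\le\tfrac{4\lambda}{3}\|\mathbf{h}_{A^*}\|_1-\tfrac{2\lambda}{3}\|\mathbf{h}_{(A^*)^C}\|_1$. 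Nonnegativity of the left side gives the cone condition $\|\mathbf{h}_{(A^*)^C}\|_1\le 2\|\mathbf{h}_{A^*}\|_1$, and we keep $\tfrac12\|\widetilde{\Z}\mathbf{h}\|_2^2\le\tfrac{4\lambda}{3}\|\mathbf{h}_{A^*}\|_1\le\tfrac{4\lambda}{3}\sqrt{s}\,\|\mathbf{h}_{A^*}\|_2$.

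\textbf{Step 3 (restricted-isometry lower bound) and Step 4 (back to $\beta$).} The identity $\mathbf{P}_{\hat{\D}^\perp}=\I-\mathbf{P}_{\hat{\D}}$ together with $\mathrm{col}(\mathbf{P}_{\hat{\D}}\Z)\subseteq\mathrm{col}(\Z)$ yields the Pythagorean relation $\|\widetilde{\Z}\mathbf v\|_2^2=\|\Z\mathbf v\|_2^2-\|\mathbf{P}_{\hat{\D}}\Z\mathbf v\|_2^2$ for every $\mathbf v$; hence $\delta_k^-(\widetilde{\Z})\ge\delta_k^-(\Z)-\delta_k^+(\mathbf{P}_{\hat{\D}}\Z)$, $\delta_k^+(\widetilde{\Z})\le\delta_k^+(\Z)$, and the near-orthogonality constant of $\widetilde{\Z}$ for disjoint $s$-sparse vectors is at most $\tfrac12(\delta_{2s}^+(\Z)-\delta_{2s}^-(\Z))+\delta_{2s}^+(\mathbf{P}_{\hat{\D}}\Z)$. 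Running the usual sorted-block argument (order $\mathbf{h}_{(A^*)^C}$ by magnitude into blocks $T_1,T_2,\dots$ of size $s$, set $T_{01}=A^*\cup T_1$, use $\sum_{j\ge2}\|\mathbf{h}_{T_j}\|_2\le s^{-1/2}\|\mathbf{h}_{(A^*)^C}\|_1\le 2\|\mathbf{h}_{T_{01}}\|_2$, and lower-bound $\|\widetilde{\Z}\mathbf{h}_{T_{01}}\|_2^2=\langle\widetilde{\Z}\mathbf{h}_{T_{01}},\widetilde{\Z}\mathbf{h}\rangle-\sum_{j\ge2}\langle\widetilde{\Z}\mathbf{h}_{T_{01}},\widetilde{\Z}\mathbf{h}_{T_j}\rangle$) produces a restricted-eigenvalue inequality $\|\widetilde{\Z}\mathbf{h}\|_2\ge\kappa\|\mathbf{h}_{T_{01}}\|_2$ with $\kappa$ proportional to $2\delta_{2s}^-(\Z)-\delta_{2s}^+(\Z)-2\delta_{2s}^+(\mathbf{P}_{\hat{\D}}\Z)$, positive precisely by \eqref{RIP.Cond.}; combined with Step 2 this bounds $\|\mathbf{h}_{T_{01}}\|_2$ (hence all the block sums, via the cone condition) by a multiple of $\lambda\sqrt{s}$ over that RIP gap. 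Finally, from Step 1, $\hat{\beta}_\lambda-\beta^*=\hat{\D}^T\bm{\epsilon}/\|\hat{\D}\|_2^2-\hat{\D}^T\Z\mathbf{h}/\|\hat{\D}\|_2^2$ (using $\mathbf{P}_{\Z}\hat{\D}=\hat{\D}$); since $\hat{\D}$ lies in the range of $\mathbf{P}_{\hat{\D}}$, $|\hat{\D}^T\Z\mathbf{h}|=|\hat{\D}^T\mathbf{P}_{\hat{\D}}\Z\mathbf{h}|\le\|\hat{\D}\|_2\|\mathbf{P}_{\hat{\D}}\Z\mathbf{h}\|_2$, and $\|\mathbf{P}_{\hat{\D}}\Z\mathbf{h}\|_2\le\sqrt{\delta_{2s}^+(\mathbf{P}_{\hat{\D}}\Z)}\big(\|\mathbf{h}_{T_{01}}\|_2+\sum_{j\ge2}\|\mathbf{h}_{T_j}\|_2\big)$, which is at most a constant times $\sqrt{\delta_{2s}^+(\mathbf{P}_{\hat{\D}}\Z)}\,\lambda\sqrt{s}$ over the RIP gap. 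Dividing through by $\|\hat{\D}\|_2^2$ and $\|\hat{\D}\|_2$ and tracking constants gives \eqref{eq:boundRIP}.

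The hard part is Step 3: extracting a clean, positive restricted-eigenvalue constant for $\widetilde{\Z}=\mathbf{P}_{\hat{\D}^\perp}\Z$ written purely in terms of RIP constants of $\Z$ and $\mathbf{P}_{\hat{\D}}\Z$, and doing the sorted-block bookkeeping sharply enough that the condition which emerges is exactly \eqref{RIP.Cond.} and the leading constant is exactly $4/(3\sqrt 5)$. The delicate points are that $\mathbf{h}$ is not sparse (forcing the block decomposition and the use of order-$2s$ rather than order-$s$ RIP), that every isometry statement must be routed through $\|\widetilde{\Z}\mathbf v\|_2^2=\|\Z\mathbf v\|_2^2-\|\mathbf{P}_{\hat{\D}}\Z\mathbf v\|_2^2$ rather than assumed for $\widetilde{\Z}$ directly, and that the rank-one structure of $\mathbf{P}_{\hat{\D}}\Z$ is what keeps \eqref{RIP.Cond.} a mild requirement.
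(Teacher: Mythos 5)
Your proposal is correct and its architecture coincides with the paper's proof in every main move: you profile out $\beta$ to arrive at exactly the Lasso problem in $\bm{\alpha}$ with design $\mathbf{P}_{\hat{\mathbf{D}}^{\perp}}\mathbf{Z}$ and outcome $\mathbf{P}_{\hat{\mathbf{D}}^{\perp}}\mathbf{P}_{\mathbf{Z}}\mathbf{Y}$ that the paper reaches through its two-step reformulation (Theorem \ref{prop:3}); you obtain the same cone condition $\|\mathbf{h}_{(A^*)^C}\|_1\leq 2\|\mathbf{h}_{A^*}\|_1$ (the paper's Lemma \ref{lm:Lasso_RIP} with $r=3$); you transfer isometry constants from $\mathbf{P}_{\hat{\mathbf{D}}^{\perp}}\mathbf{Z}$ to $\mathbf{Z}$ and $\mathbf{P}_{\hat{\mathbf{D}}}\mathbf{Z}$ by the same Pythagorean identity; and you finish with the identical decomposition $\hat{\beta}_{\lambda}-\beta^* = \hat{\mathbf{D}}^T\bm{\epsilon}/\|\hat{\mathbf{D}}\|_2^2 - \hat{\mathbf{D}}^T\mathbf{P}_{\hat{\mathbf{D}}}\mathbf{Z}\mathbf{h}/\|\hat{\mathbf{D}}\|_2^2$. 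The one genuine divergence is the device you use to control the non-sparse part of $\mathbf{h}$: you propose the classical sorted-block (Cand\`es--Tao style) decomposition, whereas the paper invokes the Cai--Zhang ``sparse representation of a polytope'' lemma, writing $\mathbf{h}_{-\max(s)}$ as a convex combination of $s$-sparse vectors $v_i$ with $\|\mathbf{h}_{\max(s)}+v_i\|_2^2\leq 5\|\mathbf{h}_{\max(s)}\|_2^2$; that factor of $5$ is exactly where the $\sqrt{5}$ in the stated constant originates, and the paper additionally routes the $\ell_2$ bound on $\mathbf{h}_{\max(s)}$ through the KKT inequality $\|\mathbf{Z}^T\mathbf{Z}\mathbf{h}\|_\infty\leq\lambda+\|\mathbf{Z}^T\bm{\epsilon}\|_\infty$ together with a restricted-orthogonality bound rather than through the basic inequality alone. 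Your block argument would deliver a bound of the same form, with the denominator proportional to the same RIP gap and positivity guaranteed by the same hypothesis, but --- as you yourself anticipate in flagging the bookkeeping as the hard part --- the leading constant would not come out as $4\sqrt{5}/3$ with that route; the polytope-representation lemma is the instrument that makes the condition appear with coefficient exactly one and pins down the stated constant.
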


Condition \eqref{RIP.Cond.} includes the RIP constants, $\delta_{2s}^-(\mathbf{Z})$, $\delta_{2s}^+(\mathbf{Z})$, and $\delta_{2s}^+(\mathbf{P}_{\hat{\mathbf{D}}}\mathbf{Z})$. Unfortunately, these RIP constants in \eqref{RIP.Cond.} are difficult to evaluate. Hence, in some applications, it is more convenient to use a slightly stronger but much simpler and interpretable condition called the ``mutual incoherence property" (MIP). Specifically, let $\hat{\mathbf{D}} = \mathbf{P}_{\mathbf{Z}}  \mathbf{D}$ and $\|\mathbf{Z}_{.j}\|_2 = 1$ for all $j = 1,\ldots,L$. Define the constants $\mu$ and $\rho$ as 
\begin{equation}
\label{MIP.Cond}
\mu =  \max_{ i \neq j} |\mathbf{Z}_{.i}^T \mathbf{Z}_{.j} |\quad\mbox{and} \quad \rho =  \max_{j} | \hat{\mathbf{D}}^T \mathbf{Z}_{.j} | /  \|\hat{\mathbf{D}}\|_2. 
\end{equation}
First, the constant $\mu$ measures the maximum correlation between any two columns of the matrix of instruments $\mathbf{Z}$. This is related to Assumption (a) in Section \ref{sec:identModel} where a full rank $\mathbf{Z}$ means the columns of $\mathbf{Z}$ are linearly independent. In fact, if $\mu < 1/(L-1)$, $\mathbf{Z}$ is full rank. Second, the constant $\rho$ measures the maximum strength of individual instruments. A high $\rho$ doesn't necessarily imply that all $L$ instruments are individually strong; it just implies that one of the $L$ instruments is strong (i.e. has a high correlation to $\mathbf{D}$); it's possible that the rest of the $L - 1$ instruments are weak. This notion of strength by $\rho$ is slightly different than the concentration parameter, which measures the overall strength of all the $L$ instruments (see Section 4 for details). Also, $\rho$ stands in contrast to Condition (b) in Theorem \ref{prop:1} which looks at the individual values of $\gamma_j, j=1,\ldots,L$, instead of the maximum of $\gamma_j$s. 

Given the two MIP constants $\mu$ and $\rho$, we have the following result on estimation performance. Like Theorem \ref{prop:2}, Corollary \ref{coro:1} characterizes all the minimizers $\hat{\beta}_{\lambda}$ from sisVIVE in \eqref{eq:proceduresimple}.
\begin{corollary}[Estimation performance of sisVIVE under MIP]  
Let the MIP constants $\mu$ and $\rho$ be given in \eqref{MIP.Cond}. If the number of invalid instruments, $s$, satisfies
\begin{equation} \label{eq:constraintMIP}
s < \min( \frac{1}{12 \mu}, \frac{1}{10 \rho^2} )
\end{equation}
the estimate $\hat{\beta}_{\lambda}$ given by \eqref{eq:proceduresimple} with tuning parameter $\lambda\ge 3 \|\mathbf{Z}^T \mathbf{P}_{\hat{\mathbf{D}}^\perp}  \bm{\epsilon}\|_\infty$ has the following performance guarantee 
\begin{equation} \label{eq:boundMIP}
|\hat{\beta}_{\lambda} - \beta^*| \leq \frac{ | \hat{\mathbf{D}}^T \bm{\epsilon} |}{\|\hat{\mathbf{D}}\|_2^2} + \frac{1}{\|\hat{\mathbf{D}}\|_2} \left( \frac{4 \sqrt{105}/9 \lambda s \rho}{1 - s(5 \rho^2 + 6 \mu)}\right).
\end{equation}
\label{coro:1}
\end{corollary}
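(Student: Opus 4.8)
The plan is to obtain Corollary~\ref{coro:1} as a corollary of Theorem~\ref{prop:2}, by converting the mutual incoherence constants $\mu$ and $\rho$ into bounds on the restricted isometry constants that enter \eqref{RIP.Cond.} and \eqref{eq:boundRIP}. First I would record three elementary RIP estimates. Since the columns of $\mathbf{Z}$ have unit length, expanding $\|\mathbf{Z}\bm{\alpha}\|_2^2 = \|\bm{\alpha}\|_2^2 + \sum_{i\neq j}\alpha_i\alpha_j\mathbf{Z}_{.i}^T\mathbf{Z}_{.j}$ and using $|\mathbf{Z}_{.i}^T\mathbf{Z}_{.j}|\le\mu$ together with $\|\bm{\alpha}\|_1^2\le k\|\bm{\alpha}\|_2^2$ for $k$-sparse $\bm{\alpha}$ gives, with $k=2s$,
\[
\delta^+_{2s}(\mathbf{Z})\le 1+(2s-1)\mu\le 1+2s\mu,\qquad \delta^-_{2s}(\mathbf{Z})\ge 1-(2s-1)\mu\ge 1-2s\mu .
\]
For the third constant, note $\mathbf{P}_{\hat{\mathbf{D}}}=\hat{\mathbf{D}}\hat{\mathbf{D}}^T/\|\hat{\mathbf{D}}\|_2^2$ is a rank-one projection, so the $j$th column of $\mathbf{P}_{\hat{\mathbf{D}}}\mathbf{Z}$ equals $\bigl(\hat{\mathbf{D}}^T\mathbf{Z}_{.j}/\|\hat{\mathbf{D}}\|_2^2\bigr)\hat{\mathbf{D}}$, which has length at most $\rho$; applying Cauchy--Schwarz to $\|\mathbf{P}_{\hat{\mathbf{D}}}\mathbf{Z}\bm{\alpha}\|_2^2=\bigl(\sum_j\alpha_j\,\hat{\mathbf{D}}^T\mathbf{Z}_{.j}/\|\hat{\mathbf{D}}\|_2\bigr)^2$ over a support of size $2s$ yields $\delta^+_{2s}(\mathbf{P}_{\hat{\mathbf{D}}}\mathbf{Z})\le 2s\rho^2$.

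Next I would verify that the sparsity constraint \eqref{eq:constraintMIP} forces the RIP condition \eqref{RIP.Cond.}, so that Theorem~\ref{prop:2} is applicable. Combining the three estimates,
\[
2\delta^-_{2s}(\mathbf{Z})-\delta^+_{2s}(\mathbf{Z})-2\delta^+_{2s}(\mathbf{P}_{\hat{\mathbf{D}}}\mathbf{Z})\ \ge\ 2(1-2s\mu)-(1+2s\mu)-4s\rho^2\ =\ 1-6s\mu-4s\rho^2 ,
\]
and $s<1/(12\mu)$ gives $6s\mu<1/2$ while $s<1/(10\rho^2)$ gives $4s\rho^2<2/5$, so the left side is $>1/10>0$; in particular \eqref{RIP.Cond.} holds and the denominator appearing in \eqref{eq:boundRIP} is strictly positive.

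Finally I would substitute the same estimates into \eqref{eq:boundRIP}. The first (oracle) term $|\hat{\mathbf{D}}^T\bm{\epsilon}|/\|\hat{\mathbf{D}}\|_2^2$ is unchanged. For the second term, bound the numerator by $\sqrt{s\,\delta^+_{2s}(\mathbf{P}_{\hat{\mathbf{D}}}\mathbf{Z})}\le\sqrt{2}\,s\rho$, and bound the denominator from below by $1-6s\mu-4s\rho^2\ge 1-s(5\rho^2+6\mu)>0$; since the latter is exactly the denominator of \eqref{eq:boundMIP} and $(4/(3\sqrt5))\sqrt{2}=4\sqrt{10}/15\le 4\sqrt{105}/9$, the second term is at most $\tfrac{1}{\|\hat{\mathbf{D}}\|_2}\cdot\tfrac{(4\sqrt{105}/9)\,\lambda s\rho}{1-s(5\rho^2+6\mu)}$, which gives \eqref{eq:boundMIP}.

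The computational heavy lifting was already carried out in the proof of Theorem~\ref{prop:2}; here the only real care needed is bookkeeping --- using valid \emph{lower} bounds for $\delta^-$ and \emph{upper} bounds for the $\delta^+$'s in the directions that enlarge the fraction in \eqref{eq:boundRIP}, and checking that both denominators stay strictly positive under \eqref{eq:constraintMIP}. I expect this sign/monotonicity tracking, rather than any deep estimate, to be the main (mild) obstacle; the slack in the final constant $4\sqrt{105}/9$ simply reflects that \eqref{eq:constraintMIP} is a clean, slightly-stronger-than-necessary sufficient condition.
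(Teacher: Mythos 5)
Your proposal is correct and follows essentially the same route as the paper: translate $\mu$ and $\rho$ into bounds $\delta^\pm_{2s}(\mathbf{Z}) \lessgtr 1\pm(2s-1)\mu$ and a bound on $\delta^+_{2s}(\mathbf{P}_{\hat{\mathbf{D}}}\mathbf{Z})$, verify \eqref{RIP.Cond.} under \eqref{eq:constraintMIP}, and substitute into \eqref{eq:boundRIP}. The only (harmless) differences are that you bound $\delta^+_{2s}(\mathbf{P}_{\hat{\mathbf{D}}}\mathbf{Z})\le 2s\rho^2$ directly via the rank-one structure of $\mathbf{P}_{\hat{\mathbf{D}}}$, where the paper uses the slightly looser $2s\rho^2\,\delta^+_{2s}(\mathbf{Z})$, and that the constant in \eqref{eq:boundRIP} is $4\sqrt{5}/3$ in the paper's proof rather than $4/(3\sqrt{5})$ — but your final inequality $\bigl(4\sqrt{5}/3\bigr)\sqrt{2}\,s\rho \le \bigl(4\sqrt{105}/9\bigr) s\rho$ still holds under that reading, so the argument goes through either way.
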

We make the following remarks. First, in the Supplementary Materials, we show the condition in equation \eqref{eq:constraintMIP} directly implies the condition in equation \eqref{eq:RIP}. We also provide an example of a matrix of instruments $\mathbf{Z}$ where the RIP condition is satisfied, but the MIP condition is not satisfied. Second, the constraint on the number of invalid instruments, $s$, in Corollary \ref{coro:1} is strict, but is required to precisely characterize the bound on estimation performance. As two reviewers pointed out, if the instruments are even slightly correlated at $\mu = 0.1$, $s < 10/12$, no invalid instruments are allowed, and Corollary \ref{coro:1} is not useful in characterizing the performance of sisVIVE. In Section \ref{sec:sim} and in the Supplementary Materials, we study the behavior of sisVIVE when this constraint in \eqref{eq:constraintMIP} may not hold. Third, in the case where all the instruments are uncorrelated with each other so that $\mu = 0$, a small $\rho$ provides a less restrictive upper bound on $s$. At first glance, this may be counterintuitive since a small $\rho$ implies that all the instruments' individual correlation to the exposure is weak and, therefore, having weak instruments allow one to have more invalid instruments. However, we note that the denominator of the bound \eqref{eq:boundMIP}, specifically $\|\hat{\mathbf{D}}\|_2^2$ is a function of the correlation of the instruments, and having a small $\rho$ would translate to having a small $\|\hat{\mathbf{D}}\|_2^2$. Hence, even though the condition \eqref{eq:constraintMIP} allows for more invalid instruments, the upper bound \eqref{eq:boundMIP} becomes worse and our estimator $\hat{\beta}_\lambda$ will be far from $\beta^*$. Finally, we emphasize that the conditions in both Theorem \ref{prop:2} and Corollary \ref{coro:1} are sufficient, but not necessary conditions for the performance bounds to hold. In particular, a violation of these conditions does not imply that sisVIVE will perform badly (see Section 4 and the Supplementary Materials).

\subsection{Fast Numerical Algorithm} \label{sec:computation}
In addition to the theoretical guarantees on estimation performance, in practice, a fast, scalable numerical algorithm for estimation is desirable, especially for MR where genetic data can be large. Theorem \ref{prop:3} outlines a two-step numerical method whose solution is identical to sisVIVE in \eqref{eq:proceduresimple}, but is as fast as ordinary least squares.
\begin{theorem}[Fast two-step numerical algorithm] \label{prop:3}
Let $\mathbf{P}_{\hat{\mathbf{D}}}$ be the projection matrix onto the vector $\hat{\mathbf{D}}$ and $\mathbf{P}_{\hat{\mathbf{D}}^{\perp}} = \mathbf{I} - \mathbf{P}_{\hat{\mathbf{D}}}$. We propose the two-step algorithm as follows.
\begin{enumerate}
\item[]Step 1: For a given $\lambda > 0$, solve: 
\[
\hat{\bm{\alpha}}_{\lambda} \in \amin{\bm{\alpha}} \frac{1}{2}||\mathbf{P}_{\hat{\mathbf{D}}^{\perp}}\mathbf{P}_{\mathbf{Z}} \mathbf{Y} - \mathbf{P}_{\hat{\mathbf{D}}^{\perp}}\mathbf{Z}\bm{\alpha} ||_2^2 + \lambda ||\bm{\alpha}||_1
\]
\item[]Step 2: Use $\hat{\bm{\alpha}}_{\lambda}$ from Step 1 to estimate $\hat{\beta}_{\lambda}$ by 
\[
\hat{\beta}_{\lambda} = \frac{ \hat{\mathbf{D}}^T (\mathbf{Y} - \mathbf{Z}\hat{\bm{\alpha}}_{\lambda})}{||\hat{\mathbf{D}}||_2^2} 
\]
\end{enumerate}
The solution to the two-step algorithm is identical to the solution to sisVIVE in \eqref{eq:proceduresimple}
\end{theorem}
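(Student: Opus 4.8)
The plan is to show that the sisVIVE objective in \eqref{eq:proceduresimple}, after two elementary projection identities, decouples into a minimization over $\bm{\alpha}$ alone followed by a closed-form scalar update for $\beta$ — and that these are exactly Step 1 and Step 2.

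First I would simplify the objective. Since $\mathbf{Z}\bm{\alpha}$ and $\hat{\mathbf{D}} = \mathbf{P}_{\mathbf{Z}}\mathbf{D}$ lie in the column space of $\mathbf{Z}$, we have $\mathbf{P}_{\mathbf{Z}}\mathbf{Z}\bm{\alpha} = \mathbf{Z}\bm{\alpha}$ and $\mathbf{P}_{\mathbf{Z}}\mathbf{D}\beta = \hat{\mathbf{D}}\beta$, so for every $(\bm{\alpha},\beta)$,
\[
\tfrac{1}{2}\|\mathbf{P}_{\mathbf{Z}}(\mathbf{Y} - \mathbf{Z}\bm{\alpha} - \mathbf{D}\beta)\|_2^2 + \lambda\|\bm{\alpha}\|_1 = \tfrac{1}{2}\|\mathbf{P}_{\mathbf{Z}}\mathbf{Y} - \mathbf{Z}\bm{\alpha} - \hat{\mathbf{D}}\beta\|_2^2 + \lambda\|\bm{\alpha}\|_1 \Def f(\bm{\alpha},\beta).
\]

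Next I would profile out $\beta$. Holding $\bm{\alpha}$ fixed, $f(\bm{\alpha},\cdot)$ is a strictly convex quadratic in the scalar $\beta$ with leading coefficient $\tfrac12\|\hat{\mathbf{D}}\|_2^2 > 0$ (the exposure is not orthogonal to the column space of $\mathbf{Z}$, so $\hat{\mathbf{D}}\neq\mathbf{0}$, consistent with assumptions (a)--(b)), hence it has the unique minimizer
\[
\beta(\bm{\alpha}) = \frac{\hat{\mathbf{D}}^T(\mathbf{P}_{\mathbf{Z}}\mathbf{Y} - \mathbf{Z}\bm{\alpha})}{\|\hat{\mathbf{D}}\|_2^2} = \frac{\hat{\mathbf{D}}^T(\mathbf{Y} - \mathbf{Z}\bm{\alpha})}{\|\hat{\mathbf{D}}\|_2^2},
\]
the last equality using $\mathbf{P}_{\mathbf{Z}}\hat{\mathbf{D}} = \hat{\mathbf{D}}$, i.e. $\hat{\mathbf{D}}^T\mathbf{P}_{\mathbf{Z}} = \hat{\mathbf{D}}^T$. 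This is precisely Step 2. Substituting $\beta(\bm{\alpha})$ back gives $\mathbf{P}_{\mathbf{Z}}\mathbf{Y} - \mathbf{Z}\bm{\alpha} - \hat{\mathbf{D}}\beta(\bm{\alpha}) = \mathbf{P}_{\hat{\mathbf{D}}^{\perp}}(\mathbf{P}_{\mathbf{Z}}\mathbf{Y} - \mathbf{Z}\bm{\alpha})$, so
\[
\min_{\beta} f(\bm{\alpha},\beta) = \tfrac{1}{2}\|\mathbf{P}_{\hat{\mathbf{D}}^{\perp}}\mathbf{P}_{\mathbf{Z}}\mathbf{Y} - \mathbf{P}_{\hat{\mathbf{D}}^{\perp}}\mathbf{Z}\bm{\alpha}\|_2^2 + \lambda\|\bm{\alpha}\|_1,
\]
whose minimizers over $\bm{\alpha}$ are exactly the $\hat{\bm{\alpha}}_{\lambda}$ of Step 1.

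Finally I would assemble: using $\min_{\bm{\alpha},\beta} f = \min_{\bm{\alpha}}\min_{\beta} f$, a pair $(\hat{\bm{\alpha}}_{\lambda},\hat{\beta}_{\lambda})$ minimizes $f$ if and only if $\hat{\bm{\alpha}}_{\lambda}$ minimizes the Step 1 objective and $\hat{\beta}_{\lambda} = \beta(\hat{\bm{\alpha}}_{\lambda})$; the uniqueness of the inner minimizer in $\beta$ makes this an exact correspondence, and it also shows any multiplicity of sisVIVE solutions resides entirely in $\bm{\alpha}$. The only thing requiring care is the bookkeeping for multiple minimizers — that every joint minimizer arises this way and conversely — together with the observation $\hat{\mathbf{D}}\neq\mathbf{0}$ so that Step 2 is well-defined; there is no substantive analytic obstacle.
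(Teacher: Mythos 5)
Your proposal is correct and follows essentially the same route as the paper: the paper splits the objective via the orthogonal decomposition $\mathbf{P}_{\mathbf{Z}} = (\mathbf{P}_{\hat{\mathbf{D}}} + \mathbf{P}_{\hat{\mathbf{D}}^{\perp}})\mathbf{P}_{\mathbf{Z}}$ and observes the $\mathbf{P}_{\hat{\mathbf{D}}}$ component can be zeroed by choice of $\beta$, which is exactly your profiling of $\beta$ out of the objective. If anything, you are slightly more explicit than the paper in verifying that the minimizing $\beta(\bm{\alpha})$ coincides with the Step 2 formula and in handling the bookkeeping for multiple minimizers, but the substance is identical.
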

In the two-step algorithm, step 1 is the standard Lasso problem with outcome $\mathbf{P}_{\hat{\mathbf{D}}^{\perp}}\mathbf{P}_{\mathbf{Z}} \mathbf{Y}$ and $\mathbf{P}_{\hat{\mathbf{D}}^{\perp}}\mathbf{Z}$; remember, sisVIVE in \eqref{eq:proceduresimple} is not the standard Lasso problem as discussed in Section \ref{sec:est}. Fast algorithms for the Lasso exist, most notably LARS \citep{efron_least_2004}. In fact, LARS is able to solve $\hat{\bm{\alpha}}_{\lambda}$ for all values of $\lambda >0$ at the same computational efficiency as ordinary least squares. Step 2 is also numerically efficient, requiring a simple dot product operation between $\hat{\mathbf{D}}$ and $\mathbf{Y} - \mathbf{Z}\hat{\bm{\alpha}}_{\lambda}$. Thus, the proposed two-step algorithm is, practically speaking, as fast as ordinary least squares. Best of all, the estimate from this two-step algorithm is identical to sisVIVE.

\section{SIMULATION STUDY} \label{sec:sim}
We conduct various simulation studies to study the estimation performance, measured by $|\hat{\beta}- \beta^*|$, for different methods. Specifically, we compare sisVIVE with TSLS, the most popular estimator in IV and MR, and ordinary least squares (OLS) under various settings that vary the instruments' absolute/overall and relative strength, their validity and correlation among each other, and endogeneity.

Let there be $n = 2000$ individuals and $L = 10$ potential candidate instruments. The observations $(Y_i,D_i,\mathbf{Z}_{i.}), i=1,\ldots,n$ are generated by
\begin{align*}
\left.
\begin{array}{r@{\mskip\thickmuskip}l}
Y_i &=  \pi^* + \mathbf{Z}_{i.}^T \bm{\alpha}^* +  D_i \beta^* + \epsilon_i \\
D_i &= \gamma_0^* + \mathbf{Z}_{i.}^T \bm{\gamma}^* + \xi_i 
\end{array}
\quad{},
\begin{array}{r@{\mskip\thickmuskip}l}
\begin{pmatrix}
\epsilon_i \\
\xi_i 
\end{pmatrix} &\iid N \left( \begin{bmatrix} 0 \\ 0 \end{bmatrix}, \begin{bmatrix} 1 & \sigma_{\epsilon \xi}^* \\ \sigma_{\epsilon \xi}^* & 1\end{bmatrix} \right)
\end{array}\right.
\end{align*}
where $\mathbf{Z}_{i.}$ is drawn from a multivariate normal with mean $\mathbf{0}$ and covariance matrix where the diagonals are all one. Throughout the simulation, the parameters $\pi^*,  \beta^*$, and $\gamma_0^*$  are fixed. However, we vary (i) the endogeneity parameter $\sigma_{\epsilon \xi}^*$, (ii) the direct effect parameter $\bm{\alpha}^* = (1,1,\ldots,0,0)$ where we change $s$ in $\| \bm{\alpha}^*\|_0 = s$, (iii) the pairwise correlation between instruments, i.e. $\mu$ in equation \eqref{MIP.Cond}, (iv) the absolute/overall strength of instruments, and (v) the relative strength of instruments, the latter two by changing the parameter $\bm{\gamma}^*$. 

In particular, for (i), we vary $\sigma_{\epsilon \xi}^*$ from $0$ to $0.9$. For (ii), we vary $s$ from $0$ to $9$. For (iii), we set $\mu$ at four different values, $0, 0.25, 0.5,$ and $0.75$, by setting all the off-diagonal elements of the covariance matrix of $\mathbf{Z}_{i.}$ to this value. For (iv), we vary the absolute/overall instrument strength by the concentration parameter. The concentration parameter is a popular measure for instrument strength; high values of the concentration parameter indicate the overall strength of all $L$ instruments are strong and vice versa. The concentration parameter is also the population value of the first stage F statistic for the instruments when the exposure is regressed on them; this first stage F statistic is often used to check instrument strength \citep{stock_survey_2002}. Based on Table 1 in \citet{stock_survey_2002}, a set of instruments with a concentration parameter (scaled by the number of valid instruments) of around 10 is considered weak in the absolute/overall sense and instruments with a concentration parameter (scaled by the number of valid instruments) of around 100 is considered strong in the absolute/overall sense. Finally for (v), we vary the relative instrument strength by changing the individual entries of the vector $\bm{\gamma}^*$ while keeping the concentration parameter fixed. Specifically, for a particular concentration parameter, we consider instruments to have equal relative strength if $\gamma_j^* = \gamma_k^*$ for all $j \neq k$ and variable relative strength if $\gamma_{j}^* = 2*\gamma_{k}^*$ for various values of $j \neq k$.

For each simulation setting, we repeat the simulation 1000 times. For each repetition, we compute sisVIVE's estimate of the causal effect, $\hat{\beta}_{\lambda}$, where $\lambda$ is chosen by 10-fold cross validation outlined in Section \ref{sec:lambdaCV}. We also compute estimates from TSLS and OLS. For TSLS, we run two types of TSLS. First, we run the ``naive'' TSLS as if all the instruments are valid. This is quite common in MR studies where all the instruments are assumed to be valid and the causal estimate is computed using TSLS. When some of the instruments are in fact invalid, naive TSLS should give biased estimates. Second, we run TSLS as if we knew exactly which instruments are valid, i.e. the ``oracle'' TSLS. Specifically, we use the knowledge of the support of $\bm{\alpha}^*$ and run TSLS controlling for the invalid instruments that are in the support of $\bm{\alpha}^*$ as covariates. Finally, we run OLS with $\mathbf{Z}$ and $\mathbf{D}$ as our regressors and $\mathbf{Y}$ as our outcome. We expect OLS to perform poorly when there is substantial endogeneity by $\mathbf{D}$ since OLS cannot control for endogenous variables. But, OLS should be more efficient than IV methods if there is no endogeneity \citep{richardson_note_1971}.

Figure \ref{fig:combinedPlot-endo} shows the estimation error when endogeneity is varied. The number of invalid instruments is fixed at $s = 3$ and we consider 16 different sets of instruments based on their absolute and relative strength as well as their pairwise correlations. For example, the top lefthand plot of Figure \ref{fig:combinedPlot-endo} corresponds to instruments whose overall strength is strong (i.e. scaled concentration parameter is around $100$) , their relative strength is equal (i.e. $\gamma_j^*$ are identical for all $j=1,\ldots,L$), and their pairwise correlations are $0$. In contrast, the bottom right plot of Figure \ref{fig:combinedPlot-endo} corresponds to instruments whose their overall strength is weak (i.e. scaled concentration parameter is around $10$), their relative strength is variable (i.e. $\gamma_j^* = 2*\gamma_k^*$ for $j \neq k$) and their pairwise correlations are equal to $0.75$. 

As expected, OLS dominates naive TSLS, oracle TSLS, and sisVIVE when the endogeneity is small and close to zero, with the dominance being greater for weak instruments. Once there is a sufficient amount of endogeneity, oracle TSLS, which knows exactly which instruments are valid and invalid, does best. However, sisVIVE, which is a feasible rather than infeasible oracle estimator, is close to the oracle TSLS; the gap between oracle TSLS and sisVIVE gets larger as the instruments' absolute strength gets weaker. Regardless of instrument strength, naive TSLS, which assumes all the $L$ instruments are valid, has a high error since it cannot take into account the bias introduced by invalid instruments. 

\begin{figure}[htbp!]
\centering
\includegraphics[width=7in,height=7.3in]{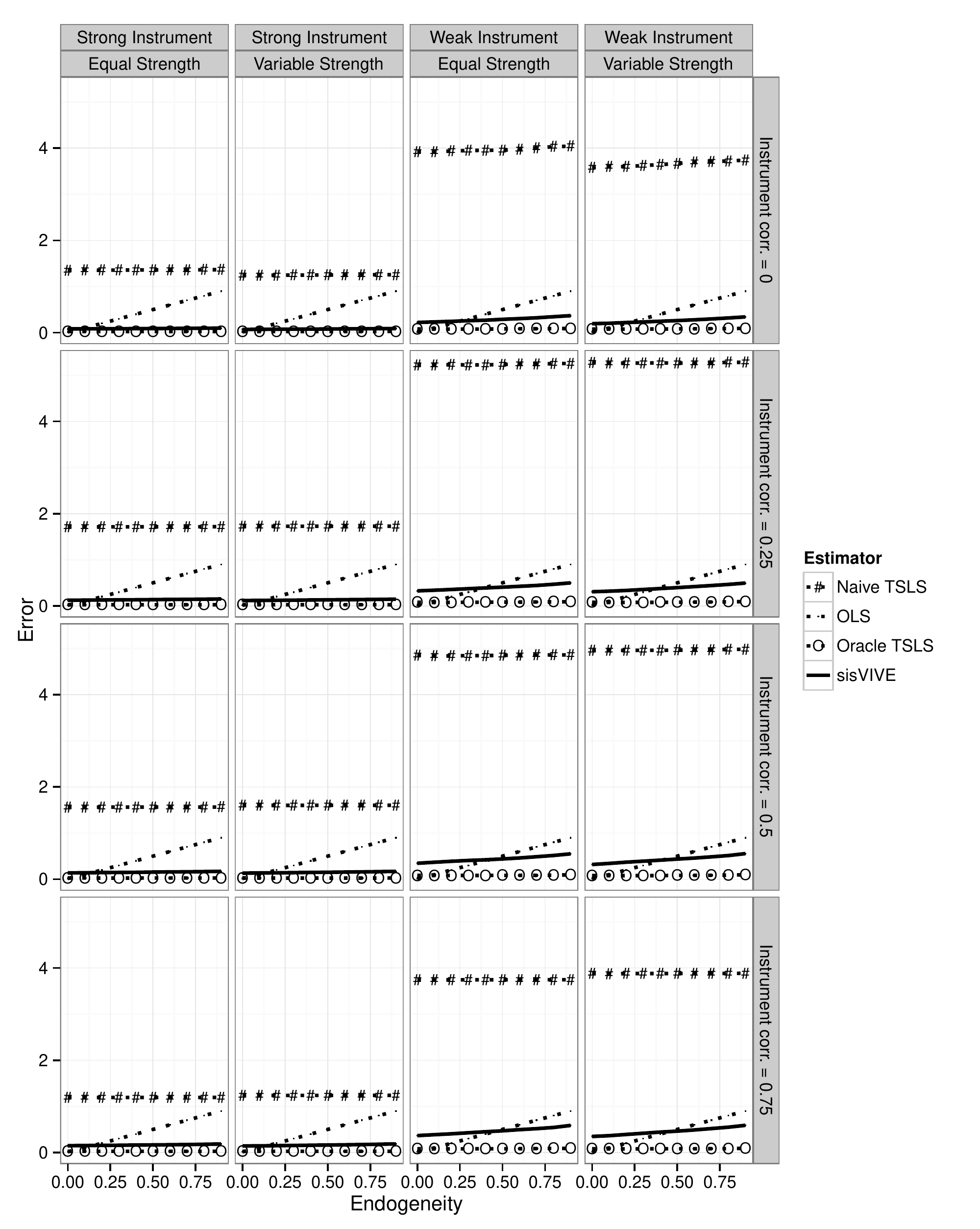}
\caption{Simulation Study of Estimation Performance Varying Endogeneity. There are ten $(L = 10)$ instruments. Each line represents median absolute estimation error ($|\beta^* - \hat{\beta}|$) after 1000 simulations. We fix the number of invalid instruments to $s = 3$. Each column in the plot corresponds to a different variation of instruments' absolute and relative strength. There are two types of absolute strengths, ``Strong'' and ``Weak'', measured by the concentration parameter. There are two types of relative strengths, ``Equal'' and ``Variable'', measured by varying $\bm{\gamma}^*$ while holding the absolute strength (i.e. concentration parameter) fixed. Each row corresponds to the maximum correlation between instruments.}
\label{fig:combinedPlot-endo}
\end{figure}

Figure \ref{fig:combinedPlot-s} shows the estimation error when the number of invalid instruments is varied. The endogeneity, $\sigma_{\epsilon \xi}^*$, is fixed at $0.8$. Like Figure \ref{fig:combinedPlot-endo}, we consider the same 16 sets of instruments. We first see that at $s =0$, i.e. when there are no invalid instruments, sisVIVE's performance is nearly identical to naive and oracle TSLS. However, sisVIVE does not use the knowledge that one knows exactly which instruments are valid while the two TSLS estimators do. Also, sisVIVE's performance degrades slightly for instruments with weak absolute strength when the correlation between instruments increases. 

When $s < L/2 = 5$, sisVIVE's performance is comparable to oracle TSLS and better than naive TSLS. However, for instruments with weak absolute strength, sisVIVE does slightly worse compared to the oracle TSLS than for instruments with strong absolute strength. Once we reach the identification boundary in Corollary \ref{coro:0}, $s < L/2 = 5$, sisVIVE's performance becomes similar to naive TSLS. This is the case regardless of the instruments' absolute and relative strength. Finally, for any $s$, oracle TSLS performs much better than all the other estimators.
\begin{figure}[htbp!]
\centering
\includegraphics[width=7in,height=7.3in]{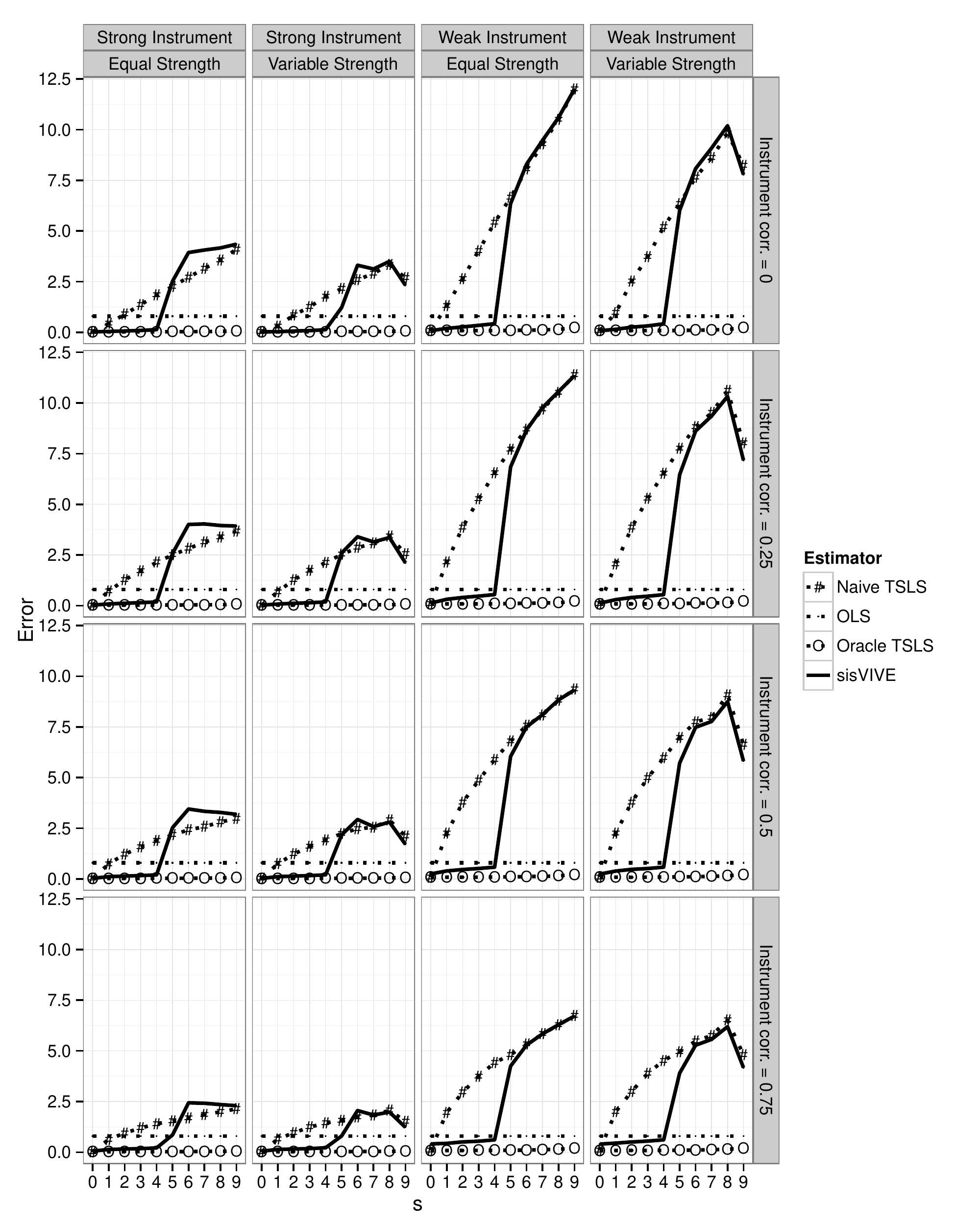}
\caption{Simulation Study of Estimation Performance Varying the Number of Invalid Instruments ($s$). There are ten $(L = 10)$ instruments. Each line represents median absolute estimation error ($|\beta^* - \hat{\beta}|$) after 1000 simulations. We fix the endogeneity $\sigma_{\epsilon \xi}^*$ to $\sigma_{\epsilon \xi}^* = 0.8$. Each column in the plot corresponds to a different variation of instruments' absolute and relative strength. There are two types of absolute strengths, ``Strong'' and ``Weak'', measured by the concentration parameter. There are two types of relative strengths, ``Equal'' and ``Variable'', measured by varying $\bm{\gamma}^*$ while holding the absolute strength fixed. Each row corresponds to maximum correlation between instruments. }
\label{fig:combinedPlot-s}
\end{figure}

Also, in all 16 sets of instruments, we compute the $\rho$ and $\mu$ found in the condition for Corollary \ref{coro:1} from the simulated data and this is detailed in the Supplementary Materials. For example, the top lefthand plot of Figure \ref{fig:combinedPlot-endo} has $\rho$ of approximately $0.31$ and $\mu = 0$. Based on this, the upper bound on $s$ in Corollary \ref{coro:1} is 1.04. However, since $s = 3$ for the simulations in Figure \ref{fig:combinedPlot-endo}, the condition \eqref{eq:constraintMIP} in Corollary \ref{coro:1} is violated and cannot be used to characterize the behavior of sisVIVE. Regardless, in our simulation study presented in this Section, sisVIVE performs just as well as the oracle TSLS. 

In the Supplementary Materials, we expand the simulation study to cover different types of instrument strength, correlation structure between instruments, and total number of potential instruments. We also explore different metrics of error, such as the proportion of correctly selected valid instruments and invalid instruments, to analyze the relationship between these proportion-based error metrics and the median bias error metric used in this Section. In addition, we also compute the conditions for Corollary \ref{coro:1}, specifically $\rho$, $\mu$, and $\lambda$ required to achieve the performance bound. The Supplementary Materials show that in every case considered, sisVIVE performs no worse than the next best alternative, naive TSLS. In fact, in most cases, sisVIVE beats naive TSLS and performs similarly to the oracle TSLS. The only case where sisVIVE's performance deviated greatly from the oracle TSLS was when the invalid instruments were weaker than the valid instruments and $s = 4$. In addition, the Supplementary Materials show that a good estimate of $\beta^*$ depends strongly on correctly selecting the invalid instruments more than correctly selecting the valid instruments and choosing $\lambda$ based on cross validation seems to favor this situation. We also find that choosing $\lambda$ based on Corollary \ref{coro:1} leads to a higher $\lambda$ than one based on cross validation. Finally, we find that sisVIVE based on $\lambda$ chosen by cross validation always performed at least as well as sisVIVE based on $\lambda$ chosen by Corollary \ref{coro:1}. In fact, in most cases, sisVIVE with a cross-validated $\lambda$ performs better than sisVIVE with a $\lambda$ chosen by Corollary \ref{coro:1}. 

Overall, sisVIVE using a cross-validated $\lambda$ does much better than naive TSLS, the most frequently used estimator in MR and IV. In many cases, sisVIVE beats the naive TSLS and it is comparable to oracle TSLS. The promising simulation results suggest that sisVIVE should be used whenever there is concern about invalid instruments.

\section{DATA ANALYSIS}

 
We demonstrate the potential benefit of using sisVIVE in MR by analyzing the effect of obesity, the exposure, on health-related quality of life, the outcome. An individual quality of life is the general well-being of the individual; an individual's health quality of life is the subset of quality of life related to the individual's health \citep{torrance_utility_1987}. Previous non-MR studies by \citet{trakas_health_2001} and \citet{sach_relationship_2006} have shown that there is a negative association between obesity and health-related quality of life. However, a fundamental difficulty with these studies is that the outcome, health-related quality of life, encompasses various factors about the individual, making it difficult to control for all possible confounders that may affect the causal effect \citep{cawley_medical_2012}. An MR approach offers the potential of controlling for unmeasured confounders.

For the analysis, we use the data from the Wisconsin Longitudinal Study (WLS), a well-known longitudinal study that has kept track of American high school graduates from Wisconsin since 1957. We look at graduates that were reinterviewed in 2003-2005 \citep{hauser_survey_2005} and who have been genotyped. Similar to another analysis with the WLS genetic data, we remove individuals with more than 10\% missing genotype data \citep{roetker_multigene_2012}. Our analysis of the data set contains $n = 3712$ individuals with 1913 females and 1799 males born mostly between 1938 to 1940.

To measure health-related quality of life, we use the Health Utility Index Mark 3 (HUI3) which was also used in \citet{trakas_health_2001}. HUI3 is a composite score of utility between 0 and 1, with 1 indicating highest health state and 0 indicating a health state equivalent to death; negative utility is possible and indicates that the person is alive, but in a state worse than death. To measure obesity, we looked at the body mass index (BMI) across several categories of obesity. The categories were based on US National Institute of Health clinical guidelines \citep{initiative_clinical_1998} and were also used in \citet{trakas_health_2001} and \citet{sach_relationship_2006}. Table 1 summarizes the different classes of obesity and their associations to HUI3. Different classes of obesity have different median HUI3 scores and simply classifying individuals by obese versus not obese would not capture the magnitude of the differences between the obesity classes. To account for this, we define the exposure as a censored BMI that takes the maximum of BMI $- 30$ and $0$ (i.e. $\max(BMI - 30, 0)$), to not only indicate obesity, but also to measure its severity.

\begin{table}[h!]	
\centering
\begin{tabular}{l r r r r } 
\multicolumn{5}{c}{Table 1. Relationship Between Obesity and Health Utility Index Mark 3 (HUI3)} \\ \hline \hline
& &  \multicolumn{3}{c}{Health Utility Index Mark 3} \\ \cline{3-5}
Obesity Categories & $N$ & 1st quartile & Median & 3rd quartile \\ \hline 
Not obese (BMI $< 30$) & 2581 & 0.84 & 0.92 & 0.97  \\
Obese class I ($30 \leq$ BMI $ < 35$)& 777 & 0.73 & 0.91 & 0.97 \\
Obese class II ($35 \leq$ BMI $ < 40$) & 246 & 0.66 & 0.85 & 0.97 \\
Obese class III ($40 \leq$ BMI ) & 108 & 0.51 & 0.72 & 0.91 \\ 
All categories & 3712 & 0.78 & 0.92 & 0.97 
\end{tabular}
\end{table}

For potential candidate instruments, we use the following single nucleotide polymorphisms (SNPs) in the WLS that have been previously shown to be associated with obesity: rs1421085, rs1501299, and rs2241766 (see Table 2). rs1421085 is in the FTO gene and it has been shown to be strongly associated with obesity \citep{dina_variation_2007, price_fto_2008}. rs1501299 (i.e. $+276$G$>$T) is in the ADIPOQ gene that encodes adiponectin, a protein encoding for lipid metabolism, and has been associated with obesity \citep{bouatia_acdc_2006, yang_adiponectin_2007}. Finally, rs2241766 is also in the ADIPOQ gene that has been associated with obesity \citep{ukkola_mutations_2003, yang_allele-specific_2003, beckers_association_2009}. For all the SNPs, we follow an MR study done by \citet{timpson_c-reactive_2005} and assume an additive model. Although we have no particular reason to think any of the SNPs is an invalid IV, we are uncertain due to the lack of complete knowledge about the biological functions of the SNPs, a common scenario in MR studies. Our sisVIVE estimator will provide a good estimate as long as least two of the three SNPs are valid IVs.

\begin{table}[h!]
\centering
\begin{tabular}{l r r r r r  } 
\multicolumn{5}{c}{Table 2. Summary of Instruments in the Data Analysis. MAF stands for minor allele frequency} \\ \hline \hline
Instruments & Major alleles & Heterozygote & Minor alleles & MAF (SE) \\ \hline
rs1421085 & 1281 (34.5\%; TT) & 1818 (49.0\%; CT) & 613 (16.5\%; CC) & 0.39 (0.0057)  \\
rs1501299 & 1950 (52.5\%; CC) & 1502 (40.5\%; AC) & 260 (7.0\%; AA) & 0.24 (0.0049)\\
rs2241766 & 2956 (79.6\%; TT) & 719 (19.4\%; TG) & 37 (1.0\%; GG) & 0.10 (0.0036)\\
rs6265 & 2437 (65.7\%; GG) & 1112 (30.0\%; AG)& 163 (4.4\%; AA) & 0.19 (0.0046) \\
\end{tabular}
\end{table}

A simple ordinary least squares analysis estimates that an increase in the censored BMI leads to a $-0.013$ $($SE: $0.0010)$ decrease in HUI-3 score. This is consistent with \citet{trakas_health_2001} which found that obese individuals (i.e. BMI $> 30$), on average, have lower HUI-3 scores, 0.04 to be exact, than non-obese individuals. The reduced form estimates are summarized in the Supplementary Materials.

If we use TSLS, under the operating assumption that all the instruments are valid, the estimated effect is $-0.00019$ $($SE: $ 0.022)$. Our estimator, sisVIVE, which operates only under the assumption that a proportion of instruments are invalid, estimates $-0.00019$ as the causal effect, which is identical to the estimate by TSLS. Also, sisVIVE does not select any SNPs as an invalid IV. The overidentifying restrictions test is summarized in the Supplementary Materials.
 
To further validate our method, we include another instrument, rs6265 (i.e. Val66Met). rs6265 is in the brain-derived neurotrophic factor BDNF gene and has been shown to not only be associated with BMI \citep{thorleifsson_genome_2008, shugart_two_2009}, but also neurological and cognitive function \citep{hwang_val66met_2006, rybakowski_prefrontal_2006}. Hence, there is some reason to believe that rs6265 may be pleiotropic; rs6265 may impact obesity, but also affect health-related quality of life through mechanisms other than obesity. sisVIVE should be able to pick up on this instrument being invalid in contrast to TSLS, which will always assume that all the instruments used are valid.

If we use TSLS under the operating assumption that all the four instruments are valid, the estimated effect is $0.00091$ $($SE:$ 0.022)$. sisVIVE, on the other hand, estimates the causal effect to be $-0.00011$, similar to the estimates when we used three instruments. sisVIVE also throws out the instrument, rs6265, which we suspect to be invalid. The reduced form estimates and the overidentifying restrictions test are summarized in the Supplementary Materials.

In both data analyses, sisVIVE operates under the assumption of possibly invalid instruments, which are typical in MR studies, while TSLS operates under the assumption of all  valid instruments. In the first data analysis where there was no reason to believe that the instruments were invalid, sisVIVE provides the same answer as TSLS, but without assuming that all the instruments were valid. In the second data analysis where one instrument was suspect, sisVIVE removed the suspected instrument. In both cases, sisVIVE was robust to possibly invalid instruments compared to TSLS.
 
\section{DISCUSSION}
This paper demonstrates that proper estimation of causal effects using the IV method is possible without knowledge of all the instruments' validity. Our results show that simply knowing a proportion of the instrument is valid, without knowing which are valid, is sufficient and we construct the sisVIVE estimator that dominates the naive TSLS in almost every aspect while performing similarly to the oracle TSLS. Both the simulation result and data analysis show that sisVIVE is a robust alternative to TSLS in the presence of possibly invalid instruments.

Future work could involve generalizing the model considered. In particular, the current paper discusses a model in which treatment effects are constant. \citet{angrist_identification_1996} discusses the setting in which the treatment effects are not constant and individuals may select into treatment based on expected gains from treatment. Then, $q_{m}$ and $q_{m'}$ in Theorem \ref{prop:1} might not be equal to each other for different sets of valid instruments and Theorem \ref{prop:1} does not apply. It would be useful to understand what sisVIVE is estimating under this setting of treatment effect heterogeneity. Other useful directions for future work are relaxing the conditions on Corollary \ref{coro:1} to encompass more invalid instruments $s$ and deriving tests for identification. Also, we have focused on the applications of our method to Mendelian randomization. In economic applications, it is also common to have multiple candidate instruments and be concerned that some proportion of the instruments are invalid \citep{murray_avoiding_2006}. Our current work demonstrates that instrumental variable estimation is definitely possible even in the presence of possibly invalid instruments.

\appendix
\section{Additional Discussion About Theorem 1}
\subsection{Numerical Example}
In Section 3.1 of the main manuscript, we discussed the identification result and illustrated it with a numerical example where $L = 4$, $\bm{\gamma}^* = (1,2,3,4)$, $\bm{\Gamma}^* = (1,2,6,8)$, and $s < U$ where $U = 3$. We showed that there are two sets $C_1 = \{1,2\}$ and $C_2 = \{3,4\}$ with $q_1 = 1$ and $q_2 = 2$. Since $q_1 \neq q_2$, by Theorem 1, identification is not possible with this numerical example. 

One of the reviewers, however, mentioned an interesting numerical example where the setup is identical to our numerical example above, except $\bm{\Gamma}^*$ is perturbed by $\epsilon > 0$ such that $\tilde{\bm{\Gamma}}^* = (1,2,6, 8+ \epsilon)$. With $\tilde{\bm{\Gamma}}^*$, there is only one set $C_1 = \{1,2\}$ where $q_1 = 1$ and we have identification for any $\epsilon$. However, we can shrink $\epsilon$ to be arbitrary small such that $\bm{\Gamma}^*$ and $\tilde{\bm{\Gamma}}^* = (1,2,6, 8 + \epsilon)$, are arbitrarily close to each other. As the reviewer stated ``As a result, in any finite sample, it will be impossible to distinguish between the two cases, and hence no estimation or inference results that rely on Theorem 1 can be uniformly valid.'' 

However, consider the identical setup as before, except $\bm{\Gamma}^* = (1,2,7,9)$.  Then, there is only one subset $C_1 = \{1,2\}$ where $q_1 = 1$ and identification is achieved. Furthermore, any small perturbation of $\bm{\Gamma}^*$ by $\delta > 0$ and $\epsilon > 0$, i.e. $\tilde{\bm{\Gamma}}^* = (1,2,7+\delta,9+\epsilon)$, will still produce only subset $C_1 = \{1,2\}$ and identification is maintained. 

The two numerical examples with $\bm{\Gamma}^* = (1,2,6,8)$ and $\bm{\Gamma}^* = (1,2,7,9)$ illustrate what we call the \emph{identification boundary}. The vector $\bm{\Gamma}^* = (1,2,6,8)$ lies just at the identification boundary where any small perturbation can render the model unidentified or identified. In contrast, for $\bm{\Gamma}^* = (1,2,7,9)$, the vector $\bm{\Gamma}^*$ lies far from the identification boundary and any small perturbation can still make the model identifiable. Exploration of the identification boundary for different values of $\bm{\Gamma}^*$ and $\bm{\gamma}^*$ is a topic for future research.

\subsection{Normality Assumption and Identification}
We consider two additional modeling assumptions which are not needed for identification, but are part of the classical linear simultaneous/structural equations model \citep{koopmans_measuring_1950} and discuss the identification result in Section 3.1 of the main manuscript. First, we assume that the relationship between $D_i$ and $\mathbf{Z}_{i.}$ is assumed to be linear
\begin{equation} \label{eq:modelFirstStage}
D_i = \mathbf{Z}_{i.}^T \bm{\gamma}^* + \xi_i, \quad{} E(\xi_i | \mathbf{Z}_{i.}) = 0
\end{equation}
where $\bm{\gamma}^*$ relates the instruments to the exposure and the error terms are bivariate Normal
\begin{equation} \label{eq:bivariateNormal}
(\epsilon_i, \xi_i) \iid N(\mathbf{0},\mathbf{\Sigma})
\end{equation}
Under these assumptions in \eqref{eq:modelFirstStage} and \eqref{eq:bivariateNormal}, the distributions of $Y_{i}$ and $D_{i}$ conditional on $\mathbf{Z}_{i.}$ are fully characterized by finite-dimensional parameters $\bm{\alpha}^*, \beta^*, \bm{\gamma}^*$, and $\mathbf{\Sigma}$ known as ``structural'' parameters in econometrics \citep{wooldridge_econometrics_2010}. Let $\epsilon'_i = \beta^* \xi_i + \epsilon_i$. Then, we have the ``reduced forms'' \citep{wooldridge_econometrics_2010}
\begin{align*}
Y_i &= \mathbf{Z}_{i.}^T\bm{\Gamma}^* + \epsilon'_i \\
D_i &= \mathbf{Z}_{i.}^T \bm{\gamma}^* + \xi_i
\end{align*}
where $\bm{\Gamma}^* = \bm{\alpha}^* + \beta^* \bm{\gamma}^*$ and the covariance matrix of $(\epsilon'_i,\xi_i)$ is $\mathbf{\Sigma'} = \mathbf{M} \mathbf{\Sigma} \mathbf{M}^T$ with
\[
M = \begin{pmatrix}
1 & \beta^* \\
0 & 1
\end{pmatrix}
\] 
We see that the distribution of $Y_i$ and $D_i$ are also fully characterized by the reduced form parameters $\bm{\Gamma}^*, \bm{\gamma}^*$ and $\mathbf{\Sigma'}$. By \citet{rothenberg_identification_1971}, the reduced form parameters, $ \bm{\Gamma}^*, \bm{\gamma}^*$, and $\mathbf{\Sigma'}$, are globally identified. Also, by \citet{rothenberg_identification_1971}, the structural parameters, $\bm{\alpha}^*$, $\beta^*$, $\bm{\gamma}^*$, and $\mathbf{\Sigma}$, are identified if and only if the mapping between the reduced form parameters, $\bm{\Gamma}^*, \bm{\gamma}^*, \mathbf{\Sigma'}$, and the structural parameters, $\bm{\alpha}^*$, $\beta^*$, $\bm{\gamma}^*,\mathbf{\Sigma}$, represented by equations $\mathbf{\Sigma'} = \mathbf{M} \mathbf{\Sigma} \mathbf{M}^T$, $\bm{\gamma}^* = \bm{\gamma}^*$, and $\bm{\Gamma}^* = \bm{\alpha}^* + \beta^* \bm{\gamma}^*$, is bijective. We see that $\mathbf{M}$ is an invertible matrix for any $\beta^*$ and hence there is a bijective map between $\mathbf{\Sigma}$ and $\mathbf{\Sigma}'$. For $\bm{\gamma}^*$, it maps onto itself between the structural and reduced form parameters. Consequently, whether there is a bijection between the structural parameters and reduced form parameters is determined only by whether there is a unique solution $\bm{\alpha}^*$ and $\beta^*$ to the equation \eqref{eq:momentCond2} given $\bm{\gamma}^*$ and $\bm{\Gamma}^*$. Theorem 1 in the main manuscript states that a unique solution $\bm{\alpha}^*$ and $\beta^*$ of \eqref{eq:momentCond2} exists if and only if the consistency criterion holds, that $q_{m} = q_{m'}$ for all $m,m' \in \{1,\ldots,M\}$. Hence, with the modeling assumptions \eqref{eq:modelFirstStage} and \eqref{eq:bivariateNormal}, we have identification of the structural parameters if and only if the consistency criterion holds.

\section{Simulation}
\subsection{Values of $\rho$ and $\mu$} \label{sec:rhomuCorollary2}
In Section 4 of the main manuscript, we conduct a simulation study to study the performance of sisVIVE compared to other competitors such as two stage least squares. In addition, in Section 3.4 of the main manuscript, Corollary 2 characterizes the performance of sisVIVE theoretically if certain conditions based on constants $\rho$ and $\mu$ are satisfied. In this section, we check whether these theoretical conditions are met for the simulation setup we considered in the main manuscript. 

We first computed $\rho$ from each simulated data set and take the median value of it after 1000 simulations. To compute $\mu$, we use the true values of the correlation of $\mathbf{Z}_{i.}$, specifically $\mu = 0, 0.25, 0.5$, and $0.75$. Table 1 shows the value for $\mu$ and $\rho$ for the simulation setup in the main manuscript.
\begin{table}[h!]
\centering
\begin{tabular}{p{2cm} p{3.4cm} p{3.4cm} p{3.4cm} p{3.4cm} }
\multicolumn{5}{c}{Table 1. Values of $\rho$ defined in Corollary 2 for the Simulation Study} \\ \hline \hline
Instrument Corr. $(\mu)$ & Strong Instrument, Equal Strength & Strong Instrument, Variable Strength & Weak Instrument, Equal Strength & Weak Instrument, Variable Strength \\ \hline
0 & 0.31 & 0.39 & 0.20 & 0.22 \\
0.25 & 0.54 & 0.58 & 0.36 & 0.37 \\
0.5 & 0.72 & 0.73 & 0.53 & 0.53 \\
0.75 & 0.87 & 0.87 & 0.73 & 0.73
\end{tabular}
\label{tab:table1}
\end{table}
Second, based on the values of $\mu$ and $\rho$ in Table 1, we check the condition required in Corollary 2, specifically the upper bound on $s$, $\min(1/(12 \mu), 1/(10 \rho^2) )$, in equation (14) of the main manuscript. These upper bounds are evaluated in Table 2.
\begin{table}[h!]
\centering
\begin{tabular}{p{2cm} p{3.4cm} p{3.4cm} p{3.4cm} p{3.4cm} }
\multicolumn{5}{c}{Table 2. Condition on $s$ in Corollary 2 for the Simulation Study} \\ \hline \hline
Instrument Corr. $(\mu)$ & Strong Instrument, Equal Strength & Strong Instrument, Variable Strength & Weak Instrument, Equal Strength & Weak Instrument, Variable Strength \\ \hline
0 & 1.04 & 0.66 & 2.50 & 2.07 \\
0.25 & 0.33 & 0.33 & 0.33 & 0.33 \\
0.5 & 0.17 & 0.17 & 0.17 & 0.17 \\
0.75 & 0.11 & 0.11 & 0.11 & 0.11
\end{tabular}
\label{tab:table2}
\end{table}
Table 2 shows that in most settings, the condition for Corollary 2 is only satisfied when $s = 0$, i.e. when there are no invalid instruments. For example, when instrument are correlated and $\mu > 0$, Corollary 2 cannot be used to characterize the performance of sisVIVE if invalid instruments are present. Table 2 also illustrates the point we illustrated in the main manuscript, that the condition for Corollary 2, even though it's interpretable, are strict. In the main manuscript, we provide a generalization of Corollary 2 in Theorem 2 at the expense of interpretability.

\subsection{Varying Correlation Structure} \label{sec:varysimstructure}
In this section, we extend the simulation study in Section 4 of the main manuscript by considering other correlation structures between the instruments beyond those considered in the main manuscript. First, Figures \ref{fig:withinZcorrEndo} and \ref{fig:withinZcorrS} of the Supplementary Materials
represent the setting where the pairwise correlation between valid instruments is set to $\mu$ and the pairwise correlation between invalid instruments is also set to $\mu$. However, there is no correlation between any pair consisting of one valid and one invalid instrument. The new setup differs from the main manuscript where all the pairwise correlation between any two instruments is set to $\mu$. Second, Figures \ref{fig:interZcorrEndo} and \ref{fig:interZcorrS} represent the setting where the pairwise correlation between a valid instrument and an invalid instrument is set to $\mu$. However, there is no pairwise correlation between any pair of valid instruments or any pair of invalid instruments. Under the two new correlation structures, we rerun the simulation study in the main manuscript except we reduce the simulation number from $1000$ to $500$ and we only vary $s$ with values $s=1, 3, 4, 5, 7,$ and $9$ for computational reasons. Also, note that as a result of repeating the same simulation, the conditions for Corollary 2 in the main manuscript are similar to those in Tables 1 and 2 of Section \ref{sec:rhomuCorollary2} in the Supplementary Materials.

In both Figures \ref{fig:withinZcorrEndo} and \ref{fig:interZcorrEndo} of the Supplementary Materials where we vary endogeneity, but the number of invalid instruments is fixed at $s = 3$, the behavior of all the estimators are similar to each other and to those in the main manuscript. OLS dominates naive TSLS, oracle TSLS, and sisVIVE when the endogeneity is small and close to zero, with the dominance being greater for weaker instruments. Once there is a sufficient amount of endogeneity, oracle TSLS, which knows exactly which instruments are valid and invalid, does best. sisVIVE also resembles the oracle in terms of performance. Naive TSLS, which assumes all the $L$ instruments are valid, does worst since it assumes that all the $L$ instruments are valid. 

Similarly, in Figures \ref{fig:withinZcorrS} and \ref{fig:interZcorrS} of the Supplementary Materials where we vary the number of invalid instruments, $s$, but fix the endogeneity to $0.8$, the estimators behave similarly across the two Figures and to those in the main manuscript. We first see that at $s =0$, i.e. when there are no invalid instruments, sisVIVE's performance is nearly identical to naive and oracle TSLS, although it degrades slightly for instruments with weak absolute strength. Also, when $s < L/2 = 5$, sisVIVE's performance is comparable to oracle TSLS and better than naive TSLS. Once we reach the identification boundary, $s < L/2 = 5$, sisVIVE's performance becomes similar to naive TSLS. This is the case regardless of the instruments' absolute and relative strength. 

\subsection{Performance of Estimate of $\hat{\bm{\alpha}}_\lambda$} \label{sec:propBasedError}
In this section, we extend the simulation study in Section 4 of the main manuscript by examining the estimation performance of $\bm{\alpha}^*$ for sisVIVE. As we noted in the main manuscript, in Mendelian randomization, the target of estimation is $\beta^*$, the causal effect of the exposure on the outcome, and our procedure, sisVIVE, was designed to estimate $\beta^*$. However, in the process of estimating $\beta^*$, sisVIVE does produce an estimate for $\bm{\alpha}^*$. This section explores the relationship between this intermediate estimate for $\bm{\alpha}^*$, $\hat{\bm{\alpha}}_{\lambda}$, and our desired estimate for $\beta^*$, $\hat{\beta}_{\lambda}$.

To evaluate the estimate $\hat{\bm{\alpha}}_\lambda$, we consider two metrics for error, the proportion of correctly selected valid instruments and the proportion of correctly selected invalid instruments. To illustrate these proportion-based error metrics, consider the following numerical example. Suppose there are $L = 10$ instruments of which the first three instruments are invalid, $\alpha_j^* \neq 0$ for $j =1,2,3$ and the last seven instruments are valid, $\alpha_j^* = 0$ for $j=4, 5,\ldots, 10$. If sisVIVE estimates the first two instruments to be invalid, $\hat{\alpha}_{j} \neq 0$ for $j=1,2$ and the last eight to be valid, $\hat{\alpha}_j = 0$ for $j =3,4,\ldots,10$, the proportion of correctly selected valid instruments is $7/7 = 1$ and sisVIVE makes no error in estimating the valid instruments. However, the proportion of correctly selected invalid instruments is $2/3$ and sisVIVE makes an error in estimating the invalid instruments.

We rerun the simulation setup in Section 4 of the main manuscript and in Section \ref{sec:varysimstructure} in the Supplementary Materials. However, instead of measuring the median absolute deviation, $|\hat{\beta}_{\lambda} - \beta^*|$, we instead measure the two proportion-based error metrics. Similar to Section \ref{sec:varysimstructure} in the Supplementary Materials, we reduce the simulation from $1000$ to $500$ and only consider $s = 1,3,4,5,7,$ and $9$ for computational reasons. The results are in Figures \ref{fig:equalZcorrEndoPercent} to \ref{fig:interZcorrSPercent}. 

When we vary endogeneity but fix the number of invalid instruments to be $s = 3$ (Figures \ref{fig:equalZcorrEndoPercent}, \ref{fig:withinZcorrEndoPercent}, and \ref{fig:interZcorrEndoPercent}),  the proportion of correctly selected invalid instruments is $1$ and sisVIVE never makes a mistake in selecting the invalid instruments. However, sisVIVE does make mistakes in selecting the valid instruments as the proportion of correctly selected valid instruments is mostly below $1$. Also, depending on the correlation structure between instruments, we get different behaviors for the proportion of correctly selected valid instruments. For example, when every pair of instruments has non-zero pairwise correlation (Figure \ref{fig:equalZcorrEndoPercent}), the proportion of correctly selected valid instruments remains roughly the same for different values of endogeneity. When there is only pairwise correlation within valid and invalid instruments (Figure \ref{fig:withinZcorrEndoPercent}), the proportion of correctly selected valid instruments decreases as endogeneity increases, most notably among weak instruments. Finally, when there is only pairwise correlation between valid and invalid instruments (Figure \ref{fig:interZcorrEndoPercent}), the proportion of correctly selected valid instruments increases as endogeneity increases. Despite these differences in the proportion of correctly selected valid instruments between different correlation structures, as the simulations in Section 4 of the main manuscript and Section \ref{sec:varysimstructure} of the Supplementary Materials showed, sisVIVE's median absolute deviation from the truth, $|\hat{\beta}_{\lambda} - \beta^*|$, remains relatively small and constant for all values of the endogeneity. This constant behavior is also present in the proportion of correctly selected invalid instruments, which remains at $1$ for all correlation structures. This suggests that there is a strong relationship between correctly selecting the invalid instruments and sisVIVE's median absolute deviation from $\beta^*$ while there is at most a weak relationship between correctly selecting valid instruments and sisVIVE's median absolute deviation from $\beta^*$. In fact, it appears that correctly selecting invalid instruments is more important than valid instruments if a small median absolute deviation is desired.

When we vary the number of invalid instruments $s$, but fix the endogeneity (Figures \ref{fig:equalZcorrSPercent}, \ref{fig:withinZcorrSPercent}, and \ref{fig:interZcorrSPercent}), the proportion of correctly selected invalid instrument decreases significantly at the $s = 5$ boundary, regardless of the correlation structure between instruments. For example, for strong instruments in the three Figures, when $s < 5$, the proportion of correctly selected invalid instruments remain at $1$. However, when $s \geq 5$, the proportion of correctly selected invalid instruments moves sharply away from $1$. For weak instruments in the three Figures, when $s < 5$, the proportion of correctly selected invalid instruments remains close to $1$, although there is a slightly decrease in the proportion when $s$ moves from $s = 3$ to $s = 4$ and when $\mu$ is away from zero. However, similar to the strong instruments, when $s \geq 5$, the proportion of correctly selected invalid instruments moves away from $1$. In contrast, the proportion of correctly selected valid instruments decreases steadily as $s$ increases, regardless of the type of correlation structure between instruments. For strong instruments in the three Figures, the decrease in the proportion of correctly selected valid instruments begins immediately after $s = 1$. For weak instruments in the three Figures, there is considerable fluctuation of the proportion of correctly selected valid instruments. For Figures \ref{fig:equalZcorrSPercent} and Figures \ref{fig:withinZcorrSPercent}, the proportion of correctly selected valid instruments generally decreases as $s$ increase, with the notable exception in the first row, third column of both Figures. For Figure \ref{fig:interZcorrSPercent}, the proportion of correctly selected valid instruments decreases when $s < 5$, but increases again after $s \geq 5$. 

The behaviors of the proportions of correctly selected invalid and valid instruments from Figures \ref{fig:equalZcorrSPercent}, \ref{fig:withinZcorrSPercent}, and \ref{fig:interZcorrSPercent} reaffirms our previous observation that there is a strong association between the proportion of correctly selected invalid instruments and the median absolute deviation of $\hat{\beta}_\lambda$, $|\hat{\beta}_{\lambda} - \beta^*|$. In particular, from Figure 3 of the main manuscript and Figures \ref{fig:withinZcorrS} and \ref{fig:interZcorrS} of the Supplementary Materials, when $s < 5$, sisVIVE's median absolute deviation is just as small as the oracle two stage least squares. However, when $s \geq 5$, sisVIVE's median absolute deviation is just as large as the naive two stage least squares. The proportion of correctly selected invalid instruments in Figures \ref{fig:equalZcorrSPercent}, \ref{fig:withinZcorrSPercent}, and \ref{fig:interZcorrSPercent} closely corresponds to this sharp change in behavior between $s < 5$ and $s \geq 5$. In contrast, the proportion of correctly selected valid instruments does not have this sharp behavior at $s = 5$ across all the figures. 

Overall, by measuring the estimation performance of $\hat{\bm{\alpha}}_{\lambda}$ using the two proportion-based error metrics, we notice a strong relationship between the proportion of correctly selected invalid instruments and the median absolute deviation of $\hat{\beta}_{\lambda}$. For any type of correlation structure between instruments and different variations on endogeneity and $s$, sisVIVE deviates far from the truth if we incorrectly select the invalid instruments. Hence, it is much more important to correctly select invalid instruments at the expense of incorrectly selecting valid instruments for better estimation of $\beta^*$. This relationship makes sense since using invalid instruments creates bias whereas using at least one valid instrument and not using other valid instruments does not create bias, but just reduces efficiency. The relationship also suggests that when we choose the tuning parameter $\lambda$, which controls the number of non-zero $\hat{\bm{\alpha}}_{\lambda}$ and consequently, controls the proportion of correctly selected valid and invalid instruments, we should choose $\lambda$ that correctly selects the invalid instruments, even if some valid instruments are selected as invalid. In particular, $\lambda$ should generally be small so that there is less $\ell_1$ penalty on $\|\bm{\alpha}\|_1$, but not too small so that the penalty has no effect. As a result, few elements of $\hat{\bm{\alpha}}_{\lambda}$ will be zero and more instruments will be selected as invalid.  We discuss the choice of $\lambda$ in more detail in Section \ref{sec:choiceofLambda}.

\subsection{Varying Instrument Strength} \label{sec:strengthvary}
In this section, we extend the simulation study in Section 4 of the main manuscript by considering other  types of instrument strength beyond those considered in the main manuscript. Specifically, we look at two cases where the invalid instruments are ``stronger'' than the valid instruments and the valid instruments are ``stronger'' than the invalid instruments. To simulate these two new cases, we first fix the concentration parameter, a global/overall measure of instrument strength, similar to the simulation setup in the main manuscript. Second, given a concentration parameter, for the case when the invalid instruments are stronger than the valid instruments, we find $\bm{\gamma}^*$ where $\gamma_j^* = 2 * \gamma_k^*$ for $j \in$supp$(\bm{\alpha}^*)$ (i.e. set of invalid instruments) and $k \in$supp$(\bm{\alpha}^*)^C$ (i.e. set of invalid instruments). In other words, the $\gamma_j^*$s associated with invalid instruments have twice the magnitude of the $\gamma_j^*$s associated with the valid instruments. For the case when the valid instruments are stronger than the invalid instruments, we flip the roles of $j$ and $k$ where $j$ now belongs to supp$(\bm{\alpha}^*)^C$ and $k$ belongs to supp$(\bm{\alpha}^*)^C$. Finally, we rerun the simulation setup in Section 4 of the main manuscript and Sections \ref{sec:varysimstructure} and \ref{sec:propBasedError} of the Supplementary Materials, except we replace the ``Equal'' and ``Variable'' strengths with the two new types of instrument strength introduced in this Section, denoted as ``Stronger Invalid'' (i.e. the case when the invalid instruments are stronger than the valid instruments) and ''Stronger Valid'' (i.e. the case when the valid instruments are stronger than the invalid instruments). We also reduce the number of simulations $1000$ to $500$ for computational reasons.

In addition, for each of the simulation setups, we repeat the exercise we did in Section \ref{sec:rhomuCorollary2} of the Supplementary Materials where we compute $\rho$ and $\mu$ that appear in Corollary 2 of the main manuscript. Table 3 and 4 show the results when the instruments have the identical pairwise correlation; for other correlation structures, the condition on $s$ is similar and hence, they are not presented (see Section \ref{sec:varysimstructure} of the Supplementary Materials for discussion on this). The column and row labels in the two tables are identical as those found in Section \ref{sec:varysimstructure} of the Supplementary Materials, except the new headings ``Stronger Invalid'' and ``Stronger Valid.''
\begin{table}[h!]
\centering
\begin{tabular}{p{2cm} p{3.4cm} p{3.4cm} p{3.4cm} p{3.4cm} }
\multicolumn{5}{c}{Table 3. Values of $\rho$ defined in Corollary 2 for the Simulation Study} \\ \hline \hline
Instrument Corr. $(\mu)$ & Strong Instrument, Stronger Invalid & Strong Instrument, Stronger Valid & Weak Instrument, Stronger Invalid & Weak Instrument, Stronger Valid \\ \hline
0 & 0.41 & 0.33 & 0.28 & 0.18 \\
0.25 & 0.60 & 0.54 & 0.47 & 0.33 \\
0.5 & 0.75 & 0.71 & 0.64 & 0.49 \\
0.75 & 0.88 & 0.86 & 0.81 & 0.70
\end{tabular}
\label{tab:varystrengthrho}
\end{table}
\begin{table}[h!]
\centering
\begin{tabular}{p{2cm} p{3.4cm} p{3.4cm} p{3.4cm} p{3.4cm} }
\multicolumn{5}{c}{Table 4. Condition on $s$ in Corollary 2 for the Simulation Study} 
\\ \hline \hline
Instrument Corr. $(\mu)$ & Strong Instrument, Stronger Invalid & Strong Instrument, Stronger Valid & Weak Instrument, Stronger Invalid & Weak Instrument, Stronger Valid \\ \hline
0 & 0.60 & 0.90 & 1.27 & 3.02 \\
0.25 & 0.28 & 0.33 & 0.33 & 0.33 \\
0.5 & 0.17 & 0.17 & 0.17 & 0.17 \\
0.75 & 0.11 & 0.11 & 0.11 & 0.11
\end{tabular}
\label{tab:varystrengthS}
\end{table}

Figures \ref{fig:equalZcorrEndo-awkward} to \ref{fig:equalZcorrSPercent-awkward} represent the cases where the instruments have identical pairwise correlation $\mu$. When we vary endogeneity, but fix $s = 3$ (Figure \ref{fig:equalZcorrEndo-awkward}), sisVIVE performs as well as the oracle for strong instruments. For weak instruments, sisVIVE does better when the valid instruments are stronger than the invalid instruments (i.e. ``Stronger Valid'') than when the invalid instruments are stronger than the valid instruments (i.e. ``Stronger Invalid''). In both the strong and weak cases, sisVIVE does much better than the next best alternative, naive two stage least squares. 

When we vary $s$, but fix endogeneity to $0.8$ (Figure \ref{fig:equalZcorrS-awkward}), sisVIVE deviates from the oracle at $s = 4$ for the case when the invalid instruments are stronger than the valid instruments (i.e. ``Stronger Invalid'') and at $s = 7$ for the case when the valid instruments are stronger than the invalid instruments (i.e. ``Stronger Valid''). When sisVIVE deviates from oracle TSLS, sisVIVE's performance is no worse than naive two stage least squares. 

When we look at the proportion-based error metrics for estimating $\bm{\alpha}_\lambda^*$ (Figures \ref{fig:equalZcorrEndoPercent-awkward} and \ref{fig:equalZcorrSPercent-awkward}), the behavior of the two curves are similar to what we observed in Section \ref{sec:propBasedError}. That is, whenever sisVIVE performs badly, there is a large decrease in the proportion of correctly selected invalid instruments. Also, there is no relationship between sisVIVE's median absolute bias of $\hat{\beta}_{\lambda}$ and the proportion of correctly selected valid instruments. When we vary endogeneity (Figure \ref{fig:equalZcorrEndoPercent-awkward}), the proportion of correctly selected invalid instruments remain at 1 except when the overall strength of the instruments is weak and the invalid instruments are stronger than the valid instruments (i.e. ``Stronger Invalid''). However, in all cases, a smaller median absolute deviation in Figure \ref{fig:equalZcorrEndo-awkward} corresponds with having a high proportion of correctly selected invalid instruments in Figure \ref{fig:equalZcorrEndoPercent-awkward}. In contrast, the proportion of correctly selected valid instruments remains below $1$ if the invalid instruments are stronger than the valid instruments (i.e. ``Stronger Invalid'') and close to $1$ if the valid instruments are stronger than the invalid instruments (i.e. ``Stronger Valid'').

Similarly, when we vary $s$ (Figure \ref{fig:equalZcorrSPercent-awkward}) and are under the case where the invalid instruments are stronger than the valid instruments (i.e. ``Stronger Invalid''), the proportion of correctly selected invalid instruments move away from $1$ at $s = 4$ when the overall strength of the instruments is strong and at $s = 3$ when the overall strength of the instruments is weak. When the valid instruments are stronger than the invalid instruments (i.e. ``Stronger Valid''), the proportion of correctly selected invalid instruments move away from $1$ at $s = 7$ for strong instruments and $s = 6$ for weak instruments. Again, similar to what we observed in Section \ref{sec:propBasedError} of the Supplementary Materials, these points of $s$ correspond to sisVIVE's deviation from the oracle in Figure \ref{fig:equalZcorrS-awkward}. In contrast, the proportion of correctly selected valid instruments vary widely in Figure \ref{fig:equalZcorrSPercent-awkward} and there does not seem to be any relationship between it and sisVIVE's deviation from the oracle.

For other correlation structures, specifically when (i) there is only correlation within valid and invalid instruments, and (ii) there is only correlation between valid and invalid instruments, we observe the same phenomena as the case where all the instruments are correlated. This is in alignment with Sections \ref{sec:varysimstructure} and \ref{sec:propBasedError}. The result from the two correlation structures under the different types of instrument strengths considered in this Section are  in Figures \ref{fig:withinZcorrEndo-awkward} to \ref{fig:interZcorrSPercent-awkward}.

The simulation study in this Section showed that in vast majority of cases, sisVIVE estimates the causal effect of interest better than the next best alternative, naive two stage least squares and in many cases, sisSIVE's performance is similar to the oracle. However, when the invalid instruments are stronger than the valid instruments (i.e. ``Stronger Invalid''), sisVIVE's performance does not do as well relative to the oracle, even though by the identification result in Corollary 1 of the main manuscript, at $s = 4$, identification is guaranteed. The degradation in performance of sisVIVE may be due to a number of reasons. It may follow from the fact that the condition in Corollary 2 are not met since Table 4 shows that in the ``Stronger Invalid'' case, $s$ has to be less than $1$ or $2$. It may be that we chose a bad tuning parameter $\lambda$; based on the results on the proportion of correctly selected invalid instruments, we may need a smaller $\lambda$ than what we used was chosen by cross validation. A closer analysis of this particular case more closely is a topic for future research. Regardless, even when sisVIVE's performance degrades, it does no worse than the next best alternative, naive two stage least squares.

In addition, the simulation study reaffirmed the points mentioned in Sections \ref{sec:varysimstructure} and \ref{sec:propBasedError} of the Supplementary Materials that (i) sisVIVE seems to do well under different correlation structures, and (ii) $\hat{\beta}_{\lambda}$'s deviation from $\beta^*$ depends heavily on the proportion of correctly selected invalid instruments more so than the proportion of correctly selected valid instruments.

\subsection{Number of potential instruments} \label{sec:Lchange}
In this section, we extend the simulation study in Section 4 of the main manuscript by increasing the potential number of instruments from $L = 10$ to $L = 100$. We note that in Mendelian randomization settings, it is rare to have 100 potential genetic instruments since all $100$ of the genetic instruments must affect the exposure (see the Introduction and Section 3.1 of the main manuscript for details). Usually, the number of potential instruments is far less than $100$ (see citations in the Introduction of our main manuscript for examples). However, for completeness, we demonstrate sisVIVE's performance when $L = 100$ potential instruments are present.

We rerun the simulation setup in Section 4 of the main manuscript and Section \ref{sec:propBasedError} in the Supplementary Materials except $L = 100$ and when we vary endogeneity, we fix the number of invalid instruments to be $30$ (instead of 3); note that based on the simulation results in Section \ref{sec:varysimstructure} where other correlation structures did not impact the performance of sisVIVE, we only consider the correlation structure in the main manuscript, specifically where all the instruments are correlated to each other with pairwise correlation $\mu$. Also, for computational reasons, we reduce the simulation number from $1000$ to $500$. Finally, we repeat the exercise in Section \ref{sec:rhomuCorollary2} by computing $\rho$ and $\mu$ defined in Corollary 2. Table 5 and 6 show the results. 
\begin{table}[h!]
\centering
\begin{tabular}{p{2cm} p{3.4cm} p{3.4cm} p{3.4cm} p{3.4cm} }
\multicolumn{5}{c}{Table 5. Values of $\rho$ defined in Corollary 2 for the Simulation Study} \\ \hline \hline
Instrument Corr. $(\mu)$ & Strong Instrument, Equal Strength & Strong Instrument, Variable Strength & Weak Instrument, Equal Strength & Weak Instrument, Variable Strength \\ \hline
0 & 0.15 & 0.17 & 0.16 & 0.17 \\
0.25 & 0.54 & 0.54 & 0.53 & 0.53 \\
0.5 & 0.73 & 0.73 & 0.53 & 0.73 \\
0.75 & 0.87 & 0.87 & 0.88 & 0.87
\end{tabular}
\label{tab:tablerhoL100}
\end{table}

\begin{table}[h!]
\centering
\begin{tabular}{p{2cm} p{3.4cm} p{3.4cm} p{3.4cm} p{3.4cm} }
\multicolumn{5}{c}{Table 6. Condition on $s$ in Corollary 2 for the Simulation Study} \\ \hline \hline
 Instrument Corr. $(\mu)$ & Strong Instrument, Equal Strength & Strong Instrument, Variable Strength & Weak Instrument, Equal Strength & Weak Instrument, Variable Strength \\ \hline
0 & 4.2 & 3.3 & 4.0 & 3.4 \\
0.25 & 0.33 & 0.33 & 0.33 & 0.33 \\
0.5 & 0.17 & 0.17 & 0.17 & 0.17 \\
0.75 & 0.11 & 0.11 & 0.11 & 0.11
\end{tabular}
\label{tab:tableSLimitL100}
\end{table}

Figures \ref{fig:equalZcorrEndoL100} and \ref{fig:equalZcorrSL100} represent the results from the simulation setup when we measure the median of $|\hat{\beta}^* - \beta^*|$ over 500 simulations; this setup is identical to Section 4 in the main manuscript except for the exceptions mentioned in the previous paragraph. The behavior of all four estimators are similar to Figures 2 and 3 in the main manuscript. For example, when we vary endogeneity (Figure \ref{fig:equalZcorrEndoL100}), sisVIVE tends to perform slightly worse when the overall strength of the instruments is weak. Also, when the number of invalid instruments, $s$, is varied (Figure \ref{fig:equalZcorrSL100}), sisVIVE has a sharp peak at $s = 50$, similar to the sharp peak at $s = 5$ in Figures 3 of the main manuscript. 

Figures \ref{fig:equalZcorrEndoL100Percent} and \ref{fig:equalZcorrSL100Percent} represent the simulation setups in Section \ref{sec:propBasedError} of the Supplementary Materials. Similar to what we observed in Section \ref{sec:propBasedError} when $L = 10$, when we vary endogeneity (Figure \ref{fig:equalZcorrEndoL100Percent}), but fix the number of invalid instruments to $30$, we see that the proportion of correctly selected invalid instruments are $1$. When we vary $s$ (Figure \ref{fig:equalZcorrSL100Percent}), we again notice a sharp decrease in the proportion of correctly selected valid invalid instruments around $s=50$ for all instrument strength and magnitude of the correlation. 

Overall, the simulation study suggests that sisVIVE does scale as $L$ increases and that its performance at large values of $L$ is similar to its performance at smaller values of $L$, such as $L = 10$.

\subsection{Choice of $\lambda$} \label{sec:choiceofLambda}
In this section, we look at different ways to select $\lambda$. As discussed in the main manuscript, the choice of $\lambda$ impacts the performance of sisVIVE where a high value of $\lambda$ will push most elements of $\hat{\bm{\alpha}}_{\lambda}$ to zero while a low value of $\lambda$ will do the opposite. In Section 3.3 of the main manuscript, we suggested cross-validation with the ``one standard error'' rule as a data-driven method to choosing the tuning parameter. In addition, in Section 3.4, we provided theoretical results which suggested choosing a $\lambda$ that is greater than $ 3 \|\mathbf{Z}^T \mathbf{P}_{\hat{\mathbf{D}}^\perp}  \bm{\epsilon}\|_\infty$. We explore these two possible choices of $\lambda$ and their impact on estimation.

We begin with a simulation study similar to the one in the main manuscript. In particular, we have $L = 10$ instruments of which the pairwise correlation between all instruments is $0.75$ and the endogeneity is fixed at $0.8$. We vary $s$, the number of invalid instruments and vary instruments' absolute strength, relative strength, and other strengths considered in Section \ref{sec:strengthvary} of the Supplementary Materials. In short, the simulation setups we consider correspond to the last row of Figure 3 in the main manuscript and the last row of Figure \ref{fig:equalZcorrS-awkward} in the Supplementary Materials. We do not simulate other correlation structures or different $L$ because the simulation results in Sections \ref{sec:varysimstructure} and \ref{sec:Lchange} of the Supplementary Materials showed sisVIVE behaves similarly as the cases we consider in this Section.

Table 7 shows the different values of $\lambda$ averaged across 500 simulations where the overall, absolute instrument strength is strong (see Section 4 of the main manuscript for details on the definition of an absolute instrument strength). We use the same column heading labels in Figure 3 of the main manuscript and Figure \ref{fig:equalZcorrS-awkward} in the Supplementary Materials. We also use the column labeled ``CV'' to denote the average $\lambda$s based on cross validation laid out in Section 3.3 of the main manuscript. Also, the column labeled ``Theory'' denotes the average $\lambda$s based on Theorem 2, specifically the average of $3 \|\mathbf{Z}^T \mathbf{P}_{\hat{\mathbf{D}}^\perp}  \bm{\epsilon}\|_\infty$ over 500 simulations. In almost all cases, cross validation tends to choose a smaller $\lambda$ than one prescribed by Theorem 2, with the exception of $s = 9$ in the ``Equal'' column and $s = 7,8$, and $9$ in the ``Stronger Valid'' column. Except for these cases, cross validation tends to prefer a small $\lambda$, thereby preferring $\hat{\bm{\alpha}}_\lambda$ to have more non-zero entries than zero entries and more instruments selected as invalid instruments than valid instruments. 
\begin{table}[h!]
\centering
\begin{tabular}{c c c c c c c c c}
\multicolumn{9}{p{12cm}}{Table 7. Average $\lambda$ from cross validation and Theorem 2 after $500$ simulations for instruments whose overall strength is strong.} \\ \hline \hline
& \multicolumn{2}{c}{Equal} & \multicolumn{2}{c}{Variable} & \multicolumn{2}{c}{Stronger Invalid} & \multicolumn{2}{c}{Stronger Valid} \\ \cline{2-9}
$s$ & CV & Theory & CV & Theory & CV & Theory & CV & Theory \\ \hline
1 & 1.88 & 2.70 & 2.04 & 2.71 & 1.53 & 2.70 & 2.06 & 2.72 \\
2 & 1.36 & 2.66 & 1.39 & 2.67 & 0.95 & 2.65 & 1.58 & 2.68 \\
3 & 1.06 & 2.64 & 1.12 & 2.66 & 0.84 & 2.64 & 1.33 & 2.68 \\ 
4 & 0.84 & 2.64 & 0.86 & 2.65 & 1.08 & 2.63 & 1.16 & 2.68 \\
5 & 1.70 & 2.63 & 1.33 & 2.64 & 0.87 & 2.62 & 0.99 & 2.67 \\
6 & 1.78 & 2.62 & 1.10 & 2.63 & 0.85 & 2.61 & 0.96 & 2.67 \\
7 & 2.02 & 2.62 & 0.79 & 2.64 & 0.91 & 2.61 & 3.40 & 2.68 \\
8 & 2.41 & 2.62 & 0.86 & 2.62 & 1.01 & 2.61 & 3.74 & 2.67 \\
9 & 3.19 & 2.62 & 0.45 & 2.62 & 1.31 & 2.60 & 6.03 & 2.67 
\end{tabular}
\end{table}

Table 8 shows the estimation performance of sisVIVE, the median of $|\beta^* - \hat{\beta}_{\lambda}|$ over 500 simulations, based on two different $\lambda$s, one based on cross validation and one based on Theorem 2. In most cases, sisVIVE with a cross validated $\lambda$ performs just as well as sisVIVE with a theory-based $\lambda$. For the ``Equal'' and ''Variable'' case, when $s < 5$, sisVIVE with a cross-validated $\lambda$ performs better than sisVIVE with a theory-based $\lambda$. For the ``Stronger Invalid'' case, when $s < 3$, sisVIVE with a cross validated $\lambda$ performs better than sisVIVE with a theory-based $\lambda$. However, when $s \geq 3$, sisVIVE with a cross validated $\lambda$ performs worse than sisVIVE with a theory-based $\lambda$, although the differences between the two decrease as $s$ increases. For the ``Stronger Valid'' case, sisVIVE with a cross validated $\lambda$ always dominates sisVIVE with a theory-based $\lambda$, although the differences between the two are slight when $s \geq 7$.
\begin{table}[h!]
\centering
\begin{tabular}{c c c c c c c c c}
\multicolumn{9}{p{12cm}}{Table 8. Median absolute estimation error ($|\beta^* - \hat{\beta}_{\lambda}|$) after 500 simulations from $\lambda$ chosen by cross-validation and Theorem 2. The table only considers instruments whose overall strength is strong.} \\ \hline \hline
& \multicolumn{2}{c}{Equal} & \multicolumn{2}{c}{Variable} & \multicolumn{2}{c}{Stronger Invalid} & \multicolumn{2}{c}{Stronger Valid} \\ \cline{2-9}
$s$ & CV & Theory & CV & Theory & CV & Theory & CV & Theory \\ \hline
1 & 0.13 & 0.17 & 0.14 & 0.16 & 0.13 & 0.19 & 0.14 & 0.16 \\
2 & 0.16 & 0.27 & 0.16 & 0.27 & 0.16 & 0.34 & 0.16 & 0.24 \\
3 & 0.18 & 0.39 & 0.18 & 0.37 & 0.24 & 0.54 & 0.18 & 0.32 \\ 
4 & 0.21 & 0.53 & 0.22 & 0.53 & 1.57 & 1.34 & 0.20 & 0.41 \\
5 & 0.71 & 1.15 & 0.76 & 1.43 & 1.43 & 1.25 & 0.23 & 0.55 \\
6 & 2.43 & 2.34 & 2.05 & 1.93 & 1.35 & 1.23 & 0.28 & 0.71 \\
7 & 2.42 & 2.37 & 1.83 & 1.95 & 1.28 & 1.21 & 3.83 & 3.95 \\
8 & 2.35 & 2.34 & 1.98 & 2.05 & 1.22 & 1.18 & 4.24 & 4.39 \\
9 & 2.29 & 3.01 & 1.23 & 1.37 & 1.17 & 1.16 & 4.34 & 4.51
\end{tabular}
\end{table}

Table 9 considers the same setup as Table 7, except we now look at instruments where their overall, absolute strength is weak. Under this case, we see drastic differences between $\lambda$s chosen based on cross validation and Theorem 2. For example, for the ``Equal'' and ``Variable'' cases, when $s < 5$,  $\lambda$ chosen based on cross validation is, on average, smaller than $\lambda$ chosen based on Theorem 2. When $s \geq 5$, $\lambda$ chosen based on cross validation is, on average, bigger than $\lambda$ chosen based on Theorem 2. For the ``Stronger Invalid'' case, when $s < 3$, $\lambda$ based on cross validation is, on average, smaller than $\lambda$ based on Theorem 2. But, when $s \geq 3$, the opposite is the case. Finally, for the ``Stronger Valid'' case, this phenomena occurs at $s = 6$.

\begin{table}[h!]
\centering
\begin{tabular}{c c c c c c c c c}
\multicolumn{9}{p{12cm}}{Table 9. Average $\lambda$ from cross validation and Theorem 2 after $500$ simulations for instruments whose overall strength is weak.} \\ \hline \hline
& \multicolumn{2}{c}{Equal} & \multicolumn{2}{c}{Variable} & \multicolumn{2}{c}{Stronger Invalid} & \multicolumn{2}{c}{Stronger Valid} \\ \cline{2-9}
$s$ & CV & Theory & CV & Theory & CV & Theory & CV & Theory \\ \hline
1 & 1.36 & 3.20 & 1.56 & 3.23 & 1.05 & 3.13 & 1.52 & 3.24 \\
2 & 1.25 & 3.00 & 1.22 & 3.01 & 0.93 & 2.92 & 1.47 & 3.07 \\
3 & 1.12 & 2.91 & 1.11 & 2.94 & 3.67 & 2.81 & 1.26 & 3.00 \\
4 & 2.06 & 2.86 & 1.83 & 2.89 & 9.47 & 2.75 & 1.13 & 2.97 \\
5 & 6.30 & 2.80 & 4.34 & 2.84 & 10.52 & 2.71 & 1.20 & 2.92 \\
6 & 11.99 & 2.78 & 7.48 & 2.80 & 10.74 & 2.69 & 3.36 & 2.93 \\
7 & 14.14 & 2.76 & 5.92 & 2.77 & 10.58 & 2.67 & 7.79 & 2.93 \\
8 & 14.04 & 2.75 & 5.94 & 2.75 & 9.92 & 2.66 & 9.70 & 2.93 \\
9 & 13.16 & 2.74 & 2.02 & 2.68 & 9.47 & 2.64 & 7.09 & 2.96
\end{tabular}
\end{table}

Table 10 considers the same setup as Table 8, except we now look at instruments where their overall, absolute strength is weak. Similar to Table 8, sisVIVE with a cross validated $\lambda$ performs better than sisVIVE with a theory-based $\lambda$, with the only  exception at $s = 5$ under ``Equal'' column. In fact, sisVIVE with a cross validated $\lambda$ performs drastically better than sisVIVE based on Theorem 2 in the following cases: $s < 5$ (for ``Equal'' and ``Variable'' cases), $s < 3$ (for ``Stronger Invalid'' case), and $s < 7$ (for ``Stronger Valid'' case).

\begin{table}[h!]
\centering
\begin{tabular}{c c c c c c c c c}
\multicolumn{9}{p{12cm}}{Table 10. Median absolute estimation error ($|\beta^* - \hat{\beta}_{\lambda}|$) after 500 simulations from $\lambda$ chosen by cross-validation and Theorem 2. The table only considers instruments whose overall strength is weak.} \\ \hline \hline
& \multicolumn{2}{c}{Equal} & \multicolumn{2}{c}{Variable} & \multicolumn{2}{c}{Stronger Invalid} & \multicolumn{2}{c}{Stronger Valid} \\ \cline{2-9}
$s$ & CV & Theory & CV & Theory & CV & Theory & CV & Theory \\ \hline
1 & 0.44 & 0.63 & 0.44 & 0.60 & 0.43 & 0.69 & 0.44 & 0.61 \\
2 & 0.51 & 0.96 & 0.50 & 0.94 & 0.50 & 1.13 & 0.52 & 0.88 \\
3 & 0.55 & 1.30 & 0.55 & 1.26 & 0.70 & 1.86 & 0.56 & 1.13 \\
4 & 0.61 & 1.74 & 0.61 & 1.75 & 3.19 & 3.77 & 0.58 & 1.43 \\
5 & 4.10 & 3.80 & 3.98 & 3.93 & 3.25 & 3.78 & 0.62 & 1.83 \\
6 & 5.28 & 6.03 & 5.28 & 5.54 & 3.36 & 3.79 & 0.73 & 2.52 \\
7 & 5.84 & 6.55 & 5.58 & 5.63 & 3.47 & 3.77 & 7.51 & 7.68 \\
8 & 6.29 & 6.75 & 6.19 & 6.19 & 3.52 & 3.70 & 9.69 & 9.77 \\
9 & 6.72 & 6.90 & 4.18 & 4.34 & 3.56 & 3.64 & 10.86 & 10.91 
\end{tabular}
\end{table}

Based on these simulations, sisVIVE based on cross-validation generally performs better than sisVIVE based on Theorem 2, especially when the overall instrument strength is weak. We also note that cross validation tends to choose a smaller $\lambda$ than the one based on Theorem 2, suggesting that for better estimation, it is preferable to set only a few elements of $\hat{\bm{\alpha}}_\lambda$ to zero and declare more instruments to be invalid than valid. This observation was also seen in our simulation in Section \ref{sec:propBasedError} where low median absolute error, $|\beta^* - \hat{\beta}_{\lambda}|$, was tied to high proportion of correctly chosen invalid instruments. We note that this observation is in contrast with estimating sparse vectors in typical high dimensional regression settings where many zeroed elements are desirable in the estimated sparse vector.

Despite the simulation evidence suggesting the use of cross validation to choose $\lambda$ over Theorem 2 to choose $\lambda$, unfortunately, there is little theory to justify the use of cross validation in $\ell_1$ penalization settings \citep{hastie_elements_2009, buhlmann_statistics_2011}. However, Section 2.5.1 of \citet{buhlmann_statistics_2011} does provide limited theoretical results suggesting that $\lambda$ based on cross validation tends to set few elements of $\hat{\bm{\alpha}}_{\lambda}$ to zero, a desirable property in our setting where we want to select more instruments to be invalid than valid for better estimation performance of $\hat{\beta}_\lambda$.

Besides cross validation and Theorem 2, there is another way to choose $\lambda$ if we assume Corollary 1 holds for our data. That is, if we are in the always identified region where $s < U \leq L/2$, one possible method of choosing $\lambda$ would be to find the $\lambda$ where exactly $U = L/2$, say $\lambda_{L/2}$. From there, we grid the values of potential $\lambda$s between $0$ and $\lambda_{L/2}$ and choose the $\lambda$ that minimizes the estimating equation $||\mathbf{P}_{\mathbf{Z}}(\mathbf{Y} - \mathbf{Z} \bm{\alpha} - \mathbf{D} \beta)||_2$. It would be interesting to investigate this method in future research.

\section{Additional Discussion about Theorem 2}
Theorem 2 is written in terms of the restricted isometry type (RIP) condition while its corresponding Corollary 2 is written in terms of the mutual incoherence property (MIP) condition. As the main text states, the RIP condition implies the MIP condition, but not vice versa. We illustrate this relationship with the following simple example. Suppose the matrix of instruments $\mathbf{Z}$ is an $n$ by $L$ matrix where each entry $Z_{ij}$ are from i.i.d. standard Normal. Based on Theorem 5.2 in \citet{Baraniuk}, when $n \geq Cs\log (L/s)$ for some $C$ not dependent on $L$ and $s$, we are able to ensure the RIP condition $2\delta_{2s}^-(\mathbf{Z}) > 3\delta_{3s}^+(\mathbf{Z})$ with high probability. Here, $2\delta_{2s}^-(\mathbf{Z}) > 3\delta_{3s}^+(\mathbf{Z})$ is a stronger condition than $2\delta_{2s}^-(\mathbf{Z}) > \delta_{3s}^+(\mathbf{Z}) + 2 \delta_{2s}^+(\mathbf{P}_{\hat{\mathbf{D}}}\mathbf{Z})$, the RIP condition we need for Theorem 2. However, based on Theorem 8 in \citet{Cai_Fan_Jiang}, to guarantee our MIP condition $\mu < \frac{1}{12s}$, we need $n \geq C s^2\log L$ for some $C$ not dependent on $L$ and $s$. In short, when the order of $n$ is between $s\log (L/s)$ and $s^2\log L$, $\mathbf{Z}$ meet the RIP condition but not the MIP condition, with high probability.

\section{Wisconsin Longitudinal Data}
\subsection{Background of Data}
This research uses data from the Wisconsin Longitudinal Study (WLS) of the University of Wisconsin-Madison. Since 1991, the WLS has been supported principally by the National Institute on Aging (AG-9775, AG-21079, AG-033285, and AG-041868), with additional support from the Vilas Estate Trust, the National Science Foundation, the Spencer Foundation, and the Graduate School of the University of Wisconsin-Madison. Since 1992, data have been collected by the University of Wisconsin Survey Center. A public use file of data from the Wisconsin Longitudinal Study is available from the Wisconsin Longitudinal Study, University of Wisconsin-Madison, 1180 Observatory Drive, Madison, Wisconsin 53706 and at http://www.ssc.wisc.edu/wlsresearch/data/. The opinions expressed herein are those of the authors.

\subsection{Reduced form estimates}
Tables 7 and 8 summarize the reduced form estimates for the data analysis in the Main manuscript. The reduced form estimates are computed by using ordinary least squares (OLS) where the genetic instruments are the explanatory variables and the dependent variables are body mass index (BMI) and Health Utility Index Mark 3 (HUI3). 

\begin{table}[h!]
\centering
\begin{tabular}{l r r}
\multicolumn{3}{c}{Table 7. Reduced Form Estimates for HUI-3 and BMI for Three Instruments} \\ \hline \hline
Instruments & BMI (SE) & HUI-3 (SE) \\ \hline 
rs1421085 & -0.20 (0.07) & 0.0003 (0.004) \\ 
rs1501299 & 0.03 (0.08) & 0.002 (0.005) \\ 
rs2241766 & -0.04 (0.11) & -0.0001 (0.007)  
\end{tabular}
\end{table} 

\begin{table}[h!]
\centering
\begin{tabular}{l r r}
\multicolumn{3}{c}{Table 8. Reduced Form Estimates for HUI-3 and BMI for Four Instruments} \\ \hline \hline
Instruments & BMI (SE) & HUI-3 (SE) \\ \hline 
rs1421085 & -0.20 (0.07) & 0.0004 (0.004) \\ 
rs1501299 & 0.03 (0.08) & 0.002 (0.005) \\ 
rs2241766 & -0.04 (0.11) & -0.0004 (0.007)  \\
rs6265 & -0.008 (0.08) & -0.008 (0.005) 
\end{tabular}
\end{table} 

\subsection{Sargan overidentification test} 
For the data analysis with three SNPs, the Sargan overidentification test \citet{sargan_estimation_1958}, which tests assumptions (A2) and (A3) in the presence of multiple instruments, gives a Chi-squared value of $0.12$ (p-value: $0.94$), retaining the null hypothesis that the instruments are all valid under the $0.05$ significance level. For the data analysis with four SNPs, the Sargan overidentification test gives a Chi-squared value of $2.49$ (p-value: $0.48$).

\section{Proofs}
We adopt the following notations for the proofs. For any sets $A,B \subseteq \{1,\ldots,L\}$, denote $A \cap B$ to be the intersection of sets $A$ and $B$, $A \cup B$ to be the union of sets $A$ and $B$, and $A^C$ and $B^C$ to be the complement of sets $A$ and $B$, respectively. If $A \subseteq B$, denote $B \setminus A$ to be the set that comprises of all the elements of $B$ except those that are in $A$. Let $|A|$ and $|B|$ denote the cardinality of the sets $A$ and $B$, respectively. 

For any vector $\bm{\alpha} \in \reals^L$ and set $A \subseteq \{1,\ldots,L\}$, denote $\bm{\alpha}_{A} \in \reals^L$ to be the vector where all the elements except whose indices are in $A$ are zero. Also, denote the $j$th element as $\alpha_j$. Let supp$(\bm{\alpha}) \subseteq \{1,\ldots,L\}$ to be the support of the vector $\bm{\alpha}$ and supp$(\bm{\alpha})^C$ be the complement set. For any matrix $\mathbf{M} \in \reals^{n \times L}$ and set $A \subseteq \{1,\ldots,p\}$, let  $\mathbf{M}_A \in \reals^{n \times L}$ be an $n$ by $|A|$ matrix where the columns are specified by set $A$.

\subsection{Proof of Theorem 1}
First, we prove that, $\beta^*$ is a unique solution if and only if $\bm{\alpha}^{*}$ is a unique solution. Suppose $\beta^*$ has a unique solution; that is, for any two solutions $\bm{\alpha}^{(1)}$ $\beta^{(1)}$ and $\bm{\alpha}^{(2)}, \beta^{(2)}$, in equation (7)
\begin{subequations}
\label{eq:momentCond4}
\begin{align}
\bm{\alpha}^{(1)} + \bm{\gamma}^* \beta^{(1)} &= \mathbf{\Gamma}^* \\
 \bm{\alpha}^{(2)} + \bm{\gamma}^* \beta^{(2)} &= \mathbf{\Gamma}^* 
\end{align}
\end{subequations}
we have $\beta^{(1)} = \beta^{(2)}$. Subtracting $\bm{\gamma}^* \beta^{(1)}$ from equations \eqref{eq:momentCond4} gives $\bm{\alpha}^{(1)} = \bm{\alpha}^{(2)}$. Now, suppose $\bm{\alpha}^*$ is unique, which implies $\bm{\alpha}^{(1)} = \bm{\alpha}^{(2)}$. Again, subtracting $\bm{\alpha}^{(1)}$ from \eqref{eq:momentCond4} reveals $\beta^{(1)} = \beta^{(2)}$. 

Second, we prove the necessary and sufficient conditions for Theorem 1. Suppose the subspace conditions on $\bm{\gamma}^*$ and $\bm{\Gamma}^*$ hold, specifically $q_m = q_{m'}$ for any $m \neq m'$, but there are two distinct sets of parameters, $\bm{\alpha}^{(1)}, \beta^{(1)}$ and $\bm{\alpha}^{(2)}, \beta^{(2)}$ that solve the moment equation in equation \eqref{eq:momentCond4}. Let $A^{(1)} = $supp$(\bm{\alpha}^{(1)})$ and $A^{(2)} = $supp$(\bm{\alpha}^{(2)})$ be the sets of invalid instruments for the two distinct parameter sets, not equal to each other; if the supports are equal to each other, we have the degenerate case whereby from equation \eqref{eq:momentCond4}, for any $j \in A^{(1)} = A^{(2)}$ $\gamma_j^* \beta^{(1)} = \Gamma_j^*$ and $\gamma_j^* \beta^{(2)} = \Gamma_j^*$, which implies that $\beta^{(1)} = \beta^{(2)}$ and $\bm{\alpha}^{(1)} = \bm{\alpha}^{(2)}$, a contradiction. Because the number of invalid instruments, $s$, is less than $U$, $s < U$, the number of valid instruments, $L - s$, must be greater than $L - U$, $L -s > L - U$. Thus, $|(A^{(1)})^C|, |(A^{(2)})^C| > L - U$. 

Now, pick any subsets, $(A^{(1')})^C$ and $(A^{(2')})^C$, of $(A^{(1)})^C$ and $(A^{(2)})^C$, respectively, where $|(A^{(1')})^C| = |(A^{(2')})^C| = L - U + 1$. These subsets $(A^{(1')})^C$ and $(A^{(2')})^C$ inherit the following property from their larger sets $(A^{(1)})^C$ and $(A^{(2)})^C$, respectively.
\begin{align*}
\alpha_{j}^{(1)} + \gamma_{j}^* \beta^{(1)} = \gamma_j^* \beta^{(1)} = \Gamma_j^*, \quad{} j \in (A^{(1')})^C \subseteq (A^{(1)})^C \\
 \alpha_{k}^{(2)} + \gamma_{k}^* \beta^{(2)} = \gamma_k^* \beta^{(2)} = \Gamma_k^*, \quad{} k \in (A^{(2')})^C \subseteq (A^{(2)})^C
\end{align*}
The subspace condition on $\bm{\gamma}^*$ and $\bm{\Gamma}^*$ in Theorem 1 state that for any sets $C_m$ with size $|C_m| = L - U + 1$ and with the property that $\gamma_j q_m = \Gamma_j, j \in C_m$, we have $q_m = q_{m'}$ for any $m, m'$. The subsets we constructed, $(A^{(1')})^C$ and $(A^{(2')})^C$, satisfy these subspace condition with constants $q_{1'} = \beta^{(1)}$ and $q_{2'} = \beta^{(2)}$. Hence, $\beta^{(1)} = q_{1'} = q_{2'} = \beta^{(2)}$, which is a contradiction. Hence, the two sets of parameters $\bm{\alpha}^{(1)}, \beta^{(1)}$ and $\bm{\alpha}^{(2)}, \beta^{(2)}$ are identical to each other and the solution is unique.

Now, suppose the solution is unique. Then, we show that the subspace conditions on $\bm{\gamma}^*$ and $\bm{\Gamma}^*$ must hold. Pick any two sets $A^{(1)}, A^{(2)} \subseteq \{1,\ldots,L\}$ with their complements having the size $|(A^{(1)})^C| = |(A^{(2)})^C| = L - U + 1$ and corresponding constants $q_1$ and $q_2$, respectively, defined in the Theorem. We have to show that $q_1 = q_2$ for any pair of two sets.

Note that at least one set of these sets and its corresponding constant $q$ must exist because at the true parameter values, $\bm{\alpha}^*$ and $\beta^*$, equation (7) is satisfied. Specifically, if $A^* = $supp$(\bm{\alpha}^*)$ where, by $s < U$, $|(A^*)^C| = |$supp$(\bm{\alpha}^*)^C| > L - U$, we can take any subset $(A^{(*')})^C \subseteq (A^{*})^C$ of size $|(A^{(*')})^C| = L - U + 1$. For any $j \in (A^{(*')})^C$, by equation (7), $\gamma_j^* \beta^* = \Gamma_j^*$ and thus, its corresponding constant $q_{*'}$ is $q_{*'} = \beta^*$. If there is exactly one set $A^{(1)}$, the subspace condition holds automatically. 

Suppose there are two or more sets and let $A^{(1)}$ and $^{(2)}$ be any pair of the sets. Based on the sets $A^{(1)}$ and $A^{(2)}$ and their corresponding constants $q_1$ and $q_2$, we construct the following sets of parameters $\bm{\alpha}^{(1)}, \beta^{(1)}$ and $\bm{\alpha}^{(2)}, \beta^{(2)}$
\begin{align*}
\beta^{(1)} = q_1, &\quad{} \alpha_j^{(1)} = \begin{cases}
0 & j \in (A^{(1)})^C \\
\Gamma_j^* - q_1 \gamma_j^* & j \in A^{(1)}
\end{cases} \\
\beta^{(2)} = q_2, &\quad{} \alpha_j^{(2)} = \begin{cases}
0 & j \in (A^{(2)})^C \\
\Gamma_j^* - q_2 \gamma_j^* & j \in A^{(2)}
\end{cases}
\end{align*}
The cardinality of $\bm{\alpha}^{(1)}$ and $\bm{\alpha}^{(2)}$ are less than $U$. In addition, they satisfy the moment equation in equation (7).
\begin{align*}
\alpha_j^{(1)} + \gamma_j^* \beta^{(1)} &= \begin{cases}
\gamma_j^*  q_1 = \Gamma_j^* & j \in (A^{(1)})^C \\
 \Gamma_j^* - q_1 \gamma_j^* + \gamma_j^* q_1 = \Gamma_j^*  & j \in A^{(1)}  
\end{cases} \\
\alpha_j^{(2)} + \gamma_j^* \beta^{(2)} &= \begin{cases}
\gamma_j^*   q_2 = \Gamma_j^* & j \in (A^{(2)})^C \\
 \Gamma_j^* - q_2 \gamma_j^* + \gamma_j^* q_2 = \Gamma_j^* & j \in A^{(2)}  
\end{cases} 
\end{align*}
Since the equation has only one unique solution, this implies that $\beta^{(1)} = \beta^{(2)}$, or $q_1 = q_2$. Since this holds for any two sets $(A^{(1)})^C, (A^{(2)})^C$ with constants $q_1$ and $q_2$ and cardinality $L - U + 1$, we arrive at the subspace condition $q_m = q_{m'}$ for any $m, m'$.
\qed

\subsection{Proof of Corollary 1}
Consider any two sets $C_m$ and $C_{m'}$ with the constants $q_m$ and $q_{m'}$ in Theorem 1. Take an element $j$ from the intersection $C_m \cap C_{m'}$; this intersection is non-empty because $|C_m| = |C_{m'}| = L - U + 1 \geq L/2 + 1$. At element $j \in C_m \cap C_{m'}$, we have $ \gamma_j^* q_m = \Gamma_j^*$ and $\gamma_j^* q_{m'} = \Gamma_j^*$, which implies $q_m = q_{m'}$. Since this holds for any two sets $C_m$ and $C_{m'}$, $q_m = q_{m'}$ for $m,m'$, the subspace restriction condition in Theorem 1 always holds whenever $U \geq L/2$ and we have identification.
\qed

\subsection{Proof of Theorem 2}

We begin by introducing some notations and terminologies. For $\aalpha \in \reals^p$ and $s \in \{1,\ldots,p\}$, $\aalpha_{\max(s)}$ is defined as the vector where all but the largest $s$ elements set to zero and $\aalpha_{-\max(s)}$ is defined as $\aalpha - \aalpha_{\max(s)}$.
\begin{definition}
The restricted orthogonal constant (ROC) of single matrix of order $k_1$ and $k_2$, denoted as $\theta_{k_1,k_2}(\M)$, is the smallest $\theta_{k_1,k_2}(\M)$ where for any $k_1$-sparse vector $\aalpha_1$ and $k_2$-sparse vector $\aalpha_2$ with non-overlapping support, we have
\[
|\langle \M\aalpha_1,\M\aalpha_2\rangle | \leq \theta_{k_1,k_2}(\M) \|\aalpha_1\|_2 \|\aalpha_2\|_2.
\]
\end{definition}

Next, we introduce two lemmas. The first Lemma relates the RIP and ROC constants.
\begin{lemma} \label{lm:RIP-ROC}
For any matrix $\M$ and positive integers $s_1$ and $s_2$, 
\[
\theta_{s_1,s_2}(\M) \leq \frac{1}{2}\left(\delta_{s_1 + s_2}^{+}(\M) - \delta_{s_1 + s_2}^{-}(\M)\right).
\]
\end{lemma}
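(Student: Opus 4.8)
The plan is to bound $\langle \M\aalpha_1,\M\aalpha_2\rangle$ via the polarization (parallelogram) identity and then invoke the RIP inequality \eqref{eq:RIP} at order $s_1+s_2$. By the homogeneity of the claimed inequality in $\|\aalpha_1\|_2$ and in $\|\aalpha_2\|_2$ separately (and triviality when either vector is zero), it suffices to establish it when $\aalpha_1$ is an $s_1$-sparse unit vector and $\aalpha_2$ is an $s_2$-sparse unit vector with non-overlapping support.

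First I would record the elementary sparsity bookkeeping: since $\mathrm{supp}(\aalpha_1)$ and $\mathrm{supp}(\aalpha_2)$ are disjoint, each of $\aalpha_1+\aalpha_2$ and $\aalpha_1-\aalpha_2$ is $(s_1+s_2)$-sparse, and $\|\aalpha_1\pm\aalpha_2\|_2^2 = \|\aalpha_1\|_2^2+\|\aalpha_2\|_2^2 = 2$. Then the polarization identity gives
\[
\langle \M\aalpha_1,\M\aalpha_2\rangle = \tfrac14\bigl(\|\M(\aalpha_1+\aalpha_2)\|_2^2 - \|\M(\aalpha_1-\aalpha_2)\|_2^2\bigr).
\]
Applying \eqref{eq:RIP} with $k = s_1+s_2$ to the two $(s_1+s_2)$-sparse vectors $\aalpha_1\pm\aalpha_2$ yields $\|\M(\aalpha_1+\aalpha_2)\|_2^2 \le 2\,\delta_{s_1+s_2}^{+}(\M)$ and $\|\M(\aalpha_1-\aalpha_2)\|_2^2 \ge 2\,\delta_{s_1+s_2}^{-}(\M)$, hence
\[
\langle \M\aalpha_1,\M\aalpha_2\rangle \le \tfrac12\bigl(\delta_{s_1+s_2}^{+}(\M) - \delta_{s_1+s_2}^{-}(\M)\bigr).
\]
To get the absolute value, I would repeat the argument with $\aalpha_2$ replaced by $-\aalpha_2$ (which swaps the roles of the two squared norms), giving the same upper bound on $-\langle \M\aalpha_1,\M\aalpha_2\rangle$. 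Since $\theta_{s_1,s_2}(\M)$ is by definition the smallest constant for which the bound holds over all admissible pairs, this proves $\theta_{s_1,s_2}(\M) \le \tfrac12(\delta_{s_1+s_2}^{+}(\M) - \delta_{s_1+s_2}^{-}(\M))$.

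There is no real obstacle here: the argument is a short application of polarization plus RIP. The only points requiring a line of care are (i) the reduction to unit vectors by homogeneity, (ii) the observation that the union of two disjoint supports of sizes at most $s_1$ and $s_2$ has size at most $s_1+s_2$, so the order-$(s_1+s_2)$ RIP constants are the correct ones to use, and (iii) the sign-flip trick that upgrades the one-sided bound to a bound on the absolute value.
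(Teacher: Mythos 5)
Your proposal is correct and follows essentially the same route as the paper's proof: polarization applied to the $(s_1+s_2)$-sparse vectors $\aalpha_1\pm\aalpha_2$ of squared norm $2$, followed by the order-$(s_1+s_2)$ RIP bounds, with the absolute value handled by symmetry (the paper writes it as a maximum over the two sign orderings, which is the same as your sign-flip). No gaps.
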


\begin{proof}
For any vectors $x$ and $y$ with disjoint supports and $\|x\|_2 = \|y\|_2 = 1$, we must have $x+y$, $x-y$ are both $(s_1+s_2)$-sparse and $\|x+y\|_2^2 = \|x-y\|_2^2 =2$. Hence,
\begin{equation*}
\begin{split}
|\langle \M x, \M y\rangle| = & \frac{1}{4} \left|\|\M(x+y)\|_2^2 - \|\M(x-y)\|_2^2\right| \\
 = & \frac{1}{4}\max\left\{\|\M(x+y)\|_2^2 - \|\M(x-y)\|_2^2, \|\M(x-y)\|_2^2 - \|\M(x+y)\|_2^2\right\}\\
 \leq & \frac{1}{4}\max\Big\{\delta_{s_1+s_2}^+(\M)\|x+y\|_2^2 - \delta_{s_1+s_2}^-(\M)\|x-y\|_2^2,\\
 &  \delta_{s_1+s_2}^+(\M)\|x-y\|_2^2 - \delta_{s_1+s_2}^-(\M)\|x+y\|_2^2\Big\}\\
 \leq & \frac{1}{2}\left(\delta_{s_1+s_2}^+(\M) - \delta_{s_1+s_2}^-(\M)\right),
\end{split}
\end{equation*}
which implies $\theta_{s_1, s_2}(\M) \leq \frac{1}{2}\left(\delta_{s_1+s_2}^+(\M) - \delta_{s_1+s_2}^-(\M)\right)$.
\end{proof}

The second Lemma proves a standard property of the Lasso.
\begin{lemma}\label{lm:Lasso_RIP}
Suppose we have the model $Y_i = \Z_{i.}^T \aalpha^*+ \epsilon_i$ where 
$\aalpha^*$ is $s$-sparse. Further suppose that matrix $\Z$ has upper and lower RIP constants $\delta_{s}^{+}(\Z)$ and $\delta_{s}^{-}(\Z)$, respectively. Define $\hat{\aalpha}$ as the Lasso estimator
\begin{equation} \label{eq:lasso1}
\hat{\aalpha}_{\lambda} = \amin{\aalpha} \frac{1}{2}\|Y - \Z\aalpha\|_2^2 + \lambda\|\aalpha \|_1
\end{equation}
and let $h = \hat{\aalpha}_{\lambda} - \aalpha^*$ measure the errors of the estimator. 

If $r\|\Z^T\bm{\epsilon}\|_\infty \leq \lambda$ for some $r>1$, we have
\begin{equation} \label{eq:lasso2}
\|h_{-\max(s)}\|_1 \leq \frac{r+1}{r-1} \|h_{\max(s)}\|_1.
\end{equation}
Furthermore, if $ (r+1)\delta_{2s}^+(\Z) <  (3r-1)\delta_{2s}^-(\Z) $, 
\begin{equation} \label{eq:lasso3}
\|h_{\max(s)}\|_2 \leq \frac{2\lambda\sqrt{s}(r-1)(r+1)/r}{(3r-1)\delta_{2s}^-(\Z)-(r+1)\delta_{2s}^+(\Z)}.
\end{equation}
\end{lemma}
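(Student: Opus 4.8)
The plan is to prove the two displayed inequalities separately. Inequality \eqref{eq:lasso2} (the cone condition) follows purely from the optimality defining $\hat{\aalpha}_\lambda$ in \eqref{eq:lasso1} and uses no RIP information at all; inequality \eqref{eq:lasso3} is then a restricted-eigenvalue-type estimate obtained by combining \eqref{eq:lasso2} with Lemma \ref{lm:RIP-ROC} and the Lasso subgradient conditions.

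First I would derive \eqref{eq:lasso2}. Write $T = \mathrm{supp}(\aalpha^*)$, so $|T| \le s$, and compare the objective of \eqref{eq:lasso1} at $\hat{\aalpha}_\lambda$ versus at $\aalpha^*$. Substituting $Y = \Z\aalpha^* + \bm{\epsilon}$, so that $Y - \Z\hat{\aalpha}_\lambda = \bm{\epsilon} - \Z h$, and expanding the square, the $\tfrac12\|\bm{\epsilon}\|_2^2$ terms cancel; dropping the nonnegative term $\tfrac12\|\Z h\|_2^2$ leaves $0 \le \langle \Z^T\bm{\epsilon}, h\rangle + \lambda(\|\aalpha^*\|_1 - \|\hat{\aalpha}_\lambda\|_1)$. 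Bounding $\langle \Z^T\bm{\epsilon}, h\rangle \le \|\Z^T\bm{\epsilon}\|_\infty\|h\|_1 \le \tfrac{\lambda}{r}\|h\|_1$, using the decomposition $\|h\|_1 = \|h_T\|_1 + \|h_{T^c}\|_1$, and the triangle inequality $\|\aalpha^*\|_1 - \|\hat{\aalpha}_\lambda\|_1 \le \|h_T\|_1 - \|h_{T^c}\|_1$, one rearranges to $\|h_{T^c}\|_1 \le \tfrac{r+1}{r-1}\|h_T\|_1$ (here $r>1$ is what makes the coefficient finite and positive). Finally, since $|T| \le s$ the $s$ largest coordinates of $h$ satisfy $\|h_{\max(s)}\|_1 \ge \|h_T\|_1$, hence $\|h_{-\max(s)}\|_1 = \|h\|_1 - \|h_{\max(s)}\|_1 \le \|h\|_1 - \|h_T\|_1 = \|h_{T^c}\|_1$, which yields \eqref{eq:lasso2}.

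For \eqref{eq:lasso3}, let $T_0$ index the $s$ largest coordinates of $h$ and split $h = h_{T_0} + h_{T_1} + h_{T_2} + \cdots$ into consecutive blocks of $s$ coordinates of decreasing magnitude. The RIP lower bound gives $\|\Z h_{T_0}\|_2^2 \ge \delta_{2s}^-(\Z)\|h_{T_0}\|_2^2$, and I expand $\|\Z h_{T_0}\|_2^2 = \langle \Z h, \Z h_{T_0}\rangle - \sum_{j\ge 1}\langle \Z h_{T_j}, \Z h_{T_0}\rangle$. The first term I control via the Lasso subgradient (KKT) identity $\Z^T\Z h = \Z^T\bm{\epsilon} - \lambda g$ with $\|g\|_\infty \le 1$: this gives $\langle \Z h, \Z h_{T_0}\rangle = \langle \Z^T\bm{\epsilon}, h_{T_0}\rangle - \lambda\langle g, h_{T_0}\rangle \le \big(\tfrac1r + 1\big)\lambda\|h_{T_0}\|_1 \le \tfrac{r+1}{r}\lambda\sqrt{s}\,\|h_{T_0}\|_2$. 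Each cross term satisfies $|\langle \Z h_{T_j}, \Z h_{T_0}\rangle| \le \theta_{s,s}(\Z)\|h_{T_0}\|_2\|h_{T_j}\|_2$ by disjointness and $s$-sparsity, and Lemma \ref{lm:RIP-ROC} replaces $\theta_{s,s}(\Z)$ with $\tfrac12\big(\delta_{2s}^+(\Z) - \delta_{2s}^-(\Z)\big)$. Summing, controlling $\sum_{j\ge1}\|h_{T_j}\|_2$ by a shifting/$\ell_1$-to-$\ell_2$ bound in terms of $\|h_{-\max(s)}\|_1$, and then invoking the cone condition \eqref{eq:lasso2} to replace $\|h_{-\max(s)}\|_1$ by a multiple of $\sqrt{s}\,\|h_{T_0}\|_2$, collapses everything to $\big(\text{const}\cdot\delta_{2s}^-(\Z) - \text{const}\cdot\delta_{2s}^+(\Z)\big)\|h_{T_0}\|_2 \le \tfrac{r+1}{r}\lambda\sqrt{s}$; under the hypothesis $(r+1)\delta_{2s}^+(\Z) < (3r-1)\delta_{2s}^-(\Z)$ the coefficient is strictly positive, and dividing gives \eqref{eq:lasso3} after noting $\|h_{\max(s)}\|_2 = \|h_{T_0}\|_2$.

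The main obstacle is the constant bookkeeping in this last step. The crude estimate $\sum_{j\ge 1}\|h_{T_j}\|_2 \le \tfrac{1}{\sqrt s}\|h\|_1$ loses a factor and would only produce \eqref{eq:lasso3} under a condition strictly stronger than $(r+1)\delta_{2s}^+(\Z) < (3r-1)\delta_{2s}^-(\Z)$. To land exactly the constants $(3r-1)$ and $(r+1)$ \emph{while keeping every restricted isometry constant at order $2s$} rather than $3s$, the leading block-remainder $h_{T_1}$ must be handled sharply — e.g.\ via the sparse-representation/polytope argument of \citet{cai_compressed_2013} or an equivalent refinement of the shifting inequality — so that the contribution of $\sum_{j\ge1}\|h_{T_j}\|_2$ is effectively governed by $\tfrac{1}{\sqrt s}\|h_{-\max(s)}\|_1$ and not by $\|h_{T_0}\|_2$ directly. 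The remaining ingredients — the optimality inequality, the KKT identity, and the invocation of Lemma \ref{lm:RIP-ROC} — are routine.
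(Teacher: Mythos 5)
Your proposal is correct and follows essentially the same route as the paper: the cone condition \eqref{eq:lasso2} comes from the basic optimality inequality combined with $r\|\Z^T\bm{\epsilon}\|_\infty\le\lambda$, and the $\ell_2$ bound \eqref{eq:lasso3} comes from the KKT condition, the restricted-orthogonality bound of Lemma \ref{lm:RIP-ROC}, and the sharp sparse-representation device of \citet{cai_compressed_2013} (their Lemma 5.1) for the cross term $\langle\Z h_{\max(s)},\Z h_{-\max(s)}\rangle$ --- exactly the refinement you correctly identify as necessary to land the constants $(3r-1)$ and $(r+1)$ at sparsity order $2s$. The only cosmetic difference is that the paper bounds that cross term in a single step rather than through an explicit block decomposition $h_{T_1},h_{T_2},\ldots$.
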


\begin{proof}
Since $\hat{\aalpha}_{\lambda}$ is the minimizer of \eqref{eq:lasso1} ,  we have
\[
\frac{1}{2}\|Y - \Z\hat{\aalpha}_{\lambda}\|_2^2 + \lambda\|\hat{\aalpha}_{\lambda}\|_1 \leq \frac{1}{2}\|y - \Z\aalpha^*\|_2^2 + \lambda\|\aalpha^*\|_1.
\]
By the assumed model $Y_i = \Z_i^T \aalpha^*+\epsilon_i$, we have
\begin{equation}\label{eq:lasso4}
 \frac{1}{2}\left(\|\bm{\epsilon} - \Z h\|_2^2 - \|\bm{\epsilon}\|_2^2\right) \leq \lambda(\|\aalpha^*\|_1 - \|\hat{\aalpha}_{\lambda}\|_1).
\end{equation}
For the upper bound of \eqref{eq:lasso4}, the fact that $\aalpha^*$ is $s$-sparse gives a useful bound. Specifically, 
\begin{align*}
\|\aalpha^*\|_1 - \|\hat{\aalpha}_{\lambda}\|_1 & = \|\aalpha_{supp(\aalpha^*)}^*\|_1 - \|\hat\aalpha_{supp(\aalpha^*)}\|_1 - \|\hat\aalpha_{supp(\aalpha^*)^c}\|_1 \\ 
 & \leq \|\aalpha_{supp(\aalpha^*)}^* - \hat\aalpha_{supp(\aalpha^*)}\|_1 - \|h_{supp(\aalpha^*)^c}\|_1\\
 & \leq \|h_{supp(\aalpha^*)}\|_1 - \|h_{supp(\aalpha^*)^c}\|_1\\
&\leq \|h_{\max(s)}\|_1 - \|h_{-\max(s)}\|_1.
\end{align*}
For the lower bound of \eqref{eq:lasso4}, $\|\bm{\epsilon} - \Z h\|_2^2 - \|\bm{\epsilon}\|_2^2$, we can simplify as
\begin{align*}
 \frac{1}{2}\left(\|\bm{\epsilon} - \Z h\|_2^2 - \|\bm{\epsilon}\|_2^2\right) &= -\frac{1}{2}(\Z h)^T(2\bm{\epsilon} - \Z h) \geq -h^T\Z^T\bm{\epsilon} 
 \geq -  \|\Z^T\bm{\epsilon}\|_\infty \|h\|_1 \\
&=- \|\Z^T \bm{\epsilon}\|_{\infty} (\|h_{\max(s)}\|_1 + \|h_{-\max(s)}\|_1).
\end{align*}
Hence, by \eqref{eq:lasso4} and the condition $r\|\Z^T \bm{\epsilon}\|_{\infty} \leq \lambda$ where $r > 1$, we have 
\[
r (\|h_{\max(s)}\|_1 - \|h_{-\max(s)}\|_1) \geq - (\|h_{\max(s)}\|_1 + \|h_{-\max(s)}\|_1).
\]
which yields \eqref{eq:lasso2}, the first part of the theorem.

For \eqref{eq:lasso3}, the second part of the theorem, suppose $(r+1)\delta_{2s}^{+}(\Z) < (3r - 1)\delta_{2s}^{-}(\Z)$ holds. By the Karush-Kuhn-Tucker (KKT) condition of the minimization problem in \eqref{eq:lasso1}, we we have $\|\Z^T(y - \Z\hat\aalpha)\|_\infty \leq \lambda$ and
\[
\|\Z^T\Z h\|_\infty \leq \|\Z^T(y-\Z\hat\aalpha)\|_\infty + \|\Z^T(y-\Z\aalpha^*)\|_\infty\leq \lambda+ \|\Z^T \bm{\epsilon}\|_{\infty}.
\]
Lemma 5.1 in \citet{cai_compressed_2013} with $\lambda = \max(\|h_{-\max(s)}\|_\infty, \|h_{-\max(s)}\|_1/s)$ implies
\begin{align*}
|\langle \Z h_{\max(s)}, \Z h_{-\max(s)}\rangle | & \leq \theta_{s,s}(\Z) \|h_{\max(s)}\|_2\cdot \sqrt{s} \cdot \max(\|h_{-\max(s)}\|_\infty, \|h_{-\max(s)}\|_1/s) \\
& \leq \sqrt{s}\theta_{s,s}(\Z) \|h_{\max(s)}\|_2\cdot\frac{r+1}{r-1}\|h_{\max(s)}\|_1/s \\
& \leq \theta_{s,s}(\Z) \frac{r+1}{r-1}\|h_{\max(s)}\|_2^2,
\end{align*}
where the last inequality uses \eqref{eq:lasso2}. We then have
\begin{align*}
\sqrt{s}(\lambda+\|\Z^T \bm{\epsilon} \|_{\infty} ) \|h_{\max(s)}\|_2 & \geq (\lambda+\|\Z^T \bm{\epsilon} \|_{\infty} )\|h_{\max(s)}\|_1
  \geq \langle \Z^T\Z h, h_{\max(s)}\rangle \\
 & = \langle \Z h_{\max(s)}, \Z h_{\max(s)}\rangle + \langle \Z h_{\max(s)}, \Z h_{-\max(s)}\rangle \\
 & \geq \|\Z h_{\max(s)}\|_2^2 - \theta_{s,s} \frac{r+1}{r-1}\|h_{\max(s)}\|_2^2 \\
 & = \left(\delta_{2s}^-(\Z) - \theta_{s,s}(\Z)\frac{r+1}{r-1}\right)\|h_{\max(s)}\|_2^2\\
 & \geq \left(\frac{3r-1}{2(r-1)}\delta_{2s}^-(\Z) - \frac{r+1}{2(r-1)}\delta_{2s}^+\right)\|h_{\max(s)}\|_2^2,
\end{align*}
where the last inequality uses Lemma \ref{lm:RIP-ROC}. Moving $\|h_{\max(s)}\|$ to the right hand side and using the condition $r\|\Z^T \bm{\epsilon}\|_{\infty} \leq \lambda$ where $r > 1$ yields \eqref{eq:lasso3}.
\end{proof}

Now we move on to the proof of Theorem 2. Section 3.5 in the  main paper states that the original estimation method can be reinterpreted as a two-step method where the first step is the Lasso step and the second step is a dot product. The proof will first analyze step 1 using the lemmas about Lasso performance and use it to analyze step 2.

First, in lieu of step 1, the model in equation (3) from the original paper can be modified to 
\begin{equation} \label{eq:model4}
\PP_{\hat{\D}^{\perp}} \PP_{\Z}Y = \PP_{\hat{\D}^{\perp}} \Z \aalpha^{*} + \PP_{\hat{\D}^{\perp}}\PP_{\Z} \bm{\epsilon}.
\end{equation}
Here, $\PP_{\hat{\D}^{\perp}} \Z$ becomes the design matrix, $\PP_{\hat{\D}^{\perp}} \PP_{\Z}Y$ becomes the outcome, and $\PP_{\hat{\D}^{\perp}}\PP_{\Z} \bm{\epsilon}$ is the new error term. In addition, from the condition $3\|\Z^T \PP_{\hat{\D}^\perp} \bm{\epsilon} \| \leq \lambda$, we have
\[
\lambda \geq  3 \|\Z^T (I- \PP_{\hat{\D}}) \bm{\epsilon} \|_{\infty} = 3 \|\Z^T (\PP_{\Z}- \PP_{\hat{\D}}) \bm{\epsilon} \|_{\infty} = 3 \|\Z^T (I- \PP_{\hat{\D}}) \PP_{\Z} \bm{\epsilon} \|_{\infty} = 3\| (\PP_{\hat{\D}^{\perp}} \Z)^T \PP_{\Z} \bm{\epsilon} \|_{\infty}.
\]

Second, note that \eqref{eq:bound1} is in terms of the RIP constants of $\PP_{\hat{\D}^\perp}\Z$. To relate the RIP constants of $\PP_{\hat{\D}^\perp}\Z$ with that of $\Z$, we see that for any $2s$-sparse vector $x \in \reals^L$,
$
\|\PP_{\hat{\D}^{\perp}} \Z x\|_{2}^2 = \|\Z x\|_2^2 - \| \PP_{\hat{\D}} \Z x\|_2^2 \leq \| \Z x \|_2^2 \leq \delta_{2s}^{+}(\Z) \| x \|_2^2.
$
By the definition of $\delta_{2s}^+(\PP_{\hat{\D}^{\perp}} \Z)$, this implies 
\begin{equation}\label{eq:ripconvert1}
\delta_{2s}^+(\PP_{\hat{\D}^{\perp}} \Z) \leq  \delta_{2s}^+(\Z).
\end{equation}
In addition, we have 
$
\|\PP_{\hat{\D}^{\perp}} \Z x\|_{2}^2 = \|\Z x\|_2^2 - \| \PP_{\hat{\D}} \Z x\|_2^2 \geq \delta_{2s}^{-}(\Z) \|x\|_2^2 - \delta_{2s}^{+}(\PP_{\hat{\D}} \Z) \|x\|_2^2.
$
By the definition of $\delta_{2s}^-(\PP_{\hat{\D}^{\perp}} \Z)$, this also implies
\begin{equation} \label{eq:ripconvert2}
\delta_{2s}^-(\PP_{\hat{\D}^{\perp}} \Z) \geq \delta_{2s}^{-}(\Z) - \delta_{2s}^+(\PP_{\hat{\D}} \Z).
\end{equation}

Combining \eqref{eq:ripconvert1}, \eqref{eq:ripconvert2} with assumption that $2\delta^-_{2s}(\Z) > \delta_{2s}^+(\Z) + 2\delta_{2s}^+(\PP_{\hat \D}\Z)$, we know $2\delta_{2s}^-(\PP_{\hat \D^\perp}\Z) > \delta_{2s}^-(\PP_{\hat \D^\perp}\Z)$. By Lemma \ref{lm:Lasso_RIP}, where we set $r=3$ in assumption $r\|\Z^T \bm{\epsilon}\|_{\infty} \leq \lambda$ and the model is rewritten as \eqref{eq:model4},
\begin{equation} \label{eq:bound1}
\|h_{\max(s)} \|_{2} \leq \frac{4/3 \lambda \sqrt{s}}{2\delta_{2s}^{-}(\PP_{\hat{\D}^\perp}\Z) - \delta_{2s}^+(\PP_{\hat{\D}^\perp}\Z)}
\end{equation}
and 
\begin{equation} \label{eq:bound1.5}
\|h_{-\max(s)} \|_1 \leq 2 \|h_{\max(s)} \|_1. 
\end{equation}
Combining the RIP relations established by \eqref{eq:ripconvert1} and \eqref{eq:ripconvert2}, we can rewrite \eqref{eq:bound1} as
\begin{equation} \label{eq:bound2}
\|h_{\max(s)} \|_{2} \leq \frac{4/3 \lambda \sqrt{s}}{2\delta_{2s}^{-}(\Z) - \delta_{2s}^+(\Z) - 2\delta_{2s}^+(\PP_{\hat \D}\Z)}.
\end{equation}

Third, we establish a bound for $\|\PP_{\hat{\D}} \Z h\|_2$. This bound is needed to bound step 2 in Section 3.5 of the original paper because
\[
\hat{\beta}_{\lambda} = \frac{\hat{\D}^T \PP_{\hat{\D}} (Y - \Z\hat{\aalpha}_{\lambda})}{\|\hat{\D}\|_2^2} = \frac{\hat{\D}^T \PP_{\hat{\D}} (\Z\aalpha^* + \D\beta^* + \bm{\epsilon} - \Z\hat{\aalpha}_{\lambda})}{\|\hat{\D}\|_2^2} = \beta^* - \frac{\hat{\D}^T \PP_{\hat{\D}} \Z h}{\|\hat{\D}\|_2^2}  + \frac{\hat{\D}^T \PP_{\hat{\D}} \bm{\epsilon}}{\|\hat{\D}\|_2^2}.
\]
Rearranging terms and taking norms on both sides give
\begin{equation} \label{eq:bound3}
\|\hat{\beta}_{\lambda} - \beta^{*}\|_2 \leq \frac{\| \hat{\D}^T \PP_{\hat{\D}} \Z h \|_2}{\|\hat{\D}\|_2^2}  + \frac{ \| \hat{\D}^T \PP_{\hat{\D}} \bm{\epsilon} \|_2}{\|\hat{\D}\|_2^2} \leq \frac{\| \PP_{\hat{\D}} \Z h \|_2}{\|\hat{\D}\|_2}  + \frac{| \hat{\D}^T \bm{\epsilon} |}{\|\hat{\D}\|_2^2}.
\end{equation}
Hence, a bound on $\| \PP_{\hat{\D}} \Z h \|_2 $ is necessary to bound $\|\hat{\beta}_{\lambda} - \beta^{*}\|_2$. To start off, we apply Lemma 1.1 in \citet{cai_sharp_2013} to represent $h_{-\max(s)}$ as a weighted mean of $s$-sparse vectors. This lemma allows us to convert the bound for $h_{\max(s)}$ in \eqref{eq:bound2} to the bound for $\| \PP_{\hat{\D}} \Z h \|_2$. Specifically, the lemma states we can find $\lambda_i \geq 0$ and $s$-sparse $v_i \in \reals^L$ where $i=1,\ldots,N$ such that $\sum_{i=1}^N \lambda_i = 1$ and $h_{-\max(s)} = \sum_{i=1}^N \lambda_i v_i$. Hence, $h =\sum_{i=1}^N \lambda_i(h_{\max(s)} + v_i)$. Furthermore, we have 
\[
supp(v_i) \subseteq supp(h_{-\max(s)}), \quad{} \|v_i\|_\infty \leq \max \left(\|h_{-\max(s)}\|_\infty, \frac{\|h_{-\max(s)}\|_1}{s} \right), \quad{} \|v_i\|_1 = \|h_{-\max(s)}\|_1,
\]
which yields
\[
\|v_i\|_\infty \leq \max \left( \frac{\|h_{\max(s)}\|_1}{s},\frac{2\|h_{\max(s)}\|_1}{s} \right)= \frac{2 \|h_{\max(s)} \|_1}{s}, \quad{} \|v_i \|_1\leq 2\| h_{\max(s)}\|_1 
\]
and
$
\|h_{\max(s)} + v_i\|_2^2 = \|h_{\max(s)}\|_2^2 + \|v_i\|_2^2 \leq \|h_{\max(s)}\|_2^2 + \|v_i \|_1 \|v_i\|_\infty  \leq 5\|h_{\max(s)}\|_2^2.
$
Combining all these together with \eqref{eq:bound2}, we have
\begin{align*}
\|\PP_{\hat{\D}} \Z h\|_2 &\leq \sum_{i=1}^N \lambda_i \|\PP_{\hat{\D}} \Z (h_{\max(s)} + v_i)\|_2 
\leq \sum_{i=1}^N \lambda_i \sqrt{5 \delta_{2s}^{+}(\PP_{\hat{\D}}\Z)} \|h_{\max(s)}\|_2 \\
&\leq \sqrt{5\delta_{2s}^+(\PP_{\hat \D}\Z) } \frac{4/3 \lambda \sqrt{s}}{2\delta_{2s}^{-}(\Z) - \delta_{2s}^+(\Z) - 2\delta^+_{2s}(\PP_{\hat \D}\Z)} \\
&=  \frac{4\sqrt{5}/3 \lambda \sqrt{s\delta^+_{2s}(\PP_{\hat \D}\Z)}}{2\delta_{2s}^{-}(\Z) - \delta_{2s}^+(\Z) - 2\delta^+_{2s}(\PP_{\hat \D}\Z)}.
\end{align*}
Finally, using the relation \eqref{eq:bound3} gives us the desired bound for Theorem 2. \qed

Of independent interest is that the proof of Theorem 2 can be generalized to a matrix of $\D$ instead of a vector of $\D$. That is, the proof can consider models where there are more than one endogenous variables in the data-generating model. However, for clarity of presentation, we don't explore this route.

\subsection{Proof of Corollary 2}

Now, we establish Corollary 2 as a Corollary to Theorem 2. Specifically, the task is to convert the RIP constants $\delta_{2s}^+(\Z)$, $\delta_{2s}^-(\Z)$, $\delta_{2s}^+(\PP_{\hat \D}\Z)$ and the constraint of $2\delta_{2s}^{-}(\Z) - \delta_{2s}^+(\Z) - 2\delta_{2s}^+(\PP_{\hat \D}\Z)>0 $ into $\mu$ and a similar constraint on $s$. To do this, note that for any $s$-sparse vector $\aalpha$
\begin{align*}
\|\Z \aalpha \|_2^2 &= \sum_{j \in supp(\aalpha)} \|\Z_{.j}\|_2^2 \aalpha_j^2 + \sum_{i < j, i,j \in supp(\aalpha)} 2\aalpha_i \aalpha_j \langle \Z_{.i}, \Z_{.j} \rangle 
\leq \sum_{j \in supp(\aalpha)} \aalpha_j^2 + \sum_{i < j, i,j \in supp(\aalpha)} (\aalpha_i^2 + \aalpha_j^2) \mu \\
&= (1 + (s - 1)\mu) \sum_{j \in supp(\aalpha)} \aalpha_j^2 
= (1 + (s -1)\mu) \|\aalpha\|_2^2
\end{align*}
and 
\begin{align*}
\|\Z \aalpha\|_2^2 &= \sum_{j \in supp(\aalpha)} \|\Z_{.j}\|_2^2 \aalpha_j^2 + \sum_{i < j, i,j \in supp(\aalpha)} 2\aalpha_i \aalpha_j \langle \Z_{.i}, \Z_{.j} \rangle 
\geq \sum_{j \in supp(\aalpha)} \aalpha_j^2 - \sum_{i < j, i,j \in supp(\aalpha)} (\aalpha_i^2 + \aalpha_j^2) \mu  \\
&= (1 - (s-1) \mu)  \|\aalpha\|_2^2.
\end{align*}
The upper and lower bounds on $\|\Z\aalpha\|_2^2$ imply 
\[
\delta_{s}^{+}(\Z) \leq (1 + (s - 1) \mu), \quad \mbox{and}\quad
\delta_{s}^{-}(\Z) \geq (1 - (s-1) \mu);
\]
For $\PP_{\hat{\D}^\perp}\Z$ and all $2s$-sparse vector $x$, we have
\begin{align*}
 \|\PP_{\hat{\D}} \Z x\|_2^2 &\leq \left(\sum_{j \in supp(x)} \| \PP_{\hat{\D}} \Z_{.j} x_j\|_2 \right)^2 
\leq 2s \sum_{j \in supp(x)} \|\PP_{\hat{\D}} \Z_{.j} x_j \|_2^2 \\
&= 2s \sum_{j \in supp(x)} \|\PP_{\hat{\D}} \Z_{.j}\|_2^2 x_j^2 
= 2s \sum_{j \in supp(x)} \frac{\| \PP_{\hat{\D}} \Z_{.j} \|_2^2}{\|\Z_{.j}\|_2^2} \|\Z_{.j} x_{j}\|_2^2 \\
&\leq  2s \rho^2  \delta_{1}^{+}(\Z) \sum_{j\in supp(x)}x_j^2 \leq 2s \rho^2  \delta_{2s}^{+}(\Z) \|x\|_2^2.
\end{align*}
Again, by the definition of $\delta_{2s}^+(\PP_{\hat{\D}}\Z)$, this implies that 
\begin{equation} \label{eq:ripconvert3}
\delta_{2s}^+(\PP_{\hat{\D}}\Z) \leq 2s \rho^2  \delta_{2s}^{+}(\Z). 
\end{equation}

Under the condition $s < \min\left(\frac{1}{12 \mu}, \frac{1}{10 \rho^2}\right)$, the denominator of the bound in Theorem 2 becomes
\begin{align*}
2 \delta_{2s}^{-}(\Z) - \delta_{2s}^{+}(\Z) - 2\delta_{2s}^+(\PP_{\hat \D}\Z) & \geq 2\delta_{2s}^-(\Z) - (1 + 4s\rho^2)\delta_{2s}^+(\Z) \\
& \geq 2(1 - (2s - 1)\mu) - (1 + 4s \rho^2)(1 + (2s -1) \mu) \\
&= 1 - 6s \mu + 3\mu - 4s\rho^2 - 8s^2 \rho^2 \mu + 4s \rho^2 \mu \\
&\geq 1 - 6s \mu - 5s \rho^2 > 0.
\end{align*}
For the numerator of the bound in Theorem 2, we have
\begin{align*}
\frac{4\sqrt{5}}{3} \lambda \sqrt{s\delta_{2s}^+(\PP_{\hat \D}\Z)} &\leq \frac{4\sqrt{5}}{3} \lambda \sqrt{2s^2\rho^2\delta_{2s}^+(\Z)} \leq \frac{4\sqrt{10}}{3} \lambda s \rho \sqrt{1 + (2s-1) \mu} \\
& \leq \frac{4\sqrt{10}}{3} \lambda s \rho \sqrt{1 + 2s \mu} \leq \frac{4\sqrt{10}}{3} \lambda s \rho \sqrt{1 + 1/6} 
=\frac{4 \sqrt{105}}{9} \lambda s \rho.
\end{align*}
Combining them together leads to the desired bound. Note that one can improve the constants in the constraint of $s$ with a bit more care on the above inequalities.
\qed

\subsection{Proof of Theorem 3}

The original estimation method can be rewritten as follows
\begin{align*}
\hat{\bm{\alpha}}_{\lambda}, \hat{\beta}_{\lambda} = &\amin{\bm{\alpha},\beta} \frac{1}{2} \|\mathbf{P}_{\mathbf{Z}}(\mathbf{Y} - \mathbf{Z}\bm{\alpha} - \mathbf{D}\beta)\|_2^2 + \lambda ||\bm{\alpha}||_1 \\
=&\amin{\bm{\alpha},\beta} \frac{1}{2} ||(\mathbf{P}_{\hat{\mathbf{D}}} + \mathbf{P}_{\hat{\mathbf{D}}^{\perp}}) \mathbf{P}_{\mathbf{Z}}(\mathbf{Y} - \mathbf{Z}\bm{\alpha} - \mathbf{D}\beta)||_2^2 +  \lambda ||\bm{\alpha}||_1 \\
=& \amin{\bm{\alpha},\beta} \frac{1}{2}||\mathbf{P}_{\hat{\mathbf{D}}} \mathbf{P}_{\mathbf{Z}}(\mathbf{Y} - \mathbf{Z}\bm{\alpha} - \mathbf{D}\beta)||_2^2  + \frac{1}{2} ||\mathbf{P}_{\hat{\mathbf{D}}^{\perp}}\mathbf{P}_{\mathbf{Z}} (\mathbf{Y} - \mathbf{Z}\bm{\alpha} - \mathbf{D}\beta)||_2^2 + \lambda ||\bm{\alpha}||_1 \\
=& \amin{\bm{\alpha},\beta} \frac{1}{2}||\mathbf{P}_{\hat{\mathbf{D}}} (\mathbf{Y} - \mathbf{Z}\bm{\alpha}) - \hat{\mathbf{D}}\beta||_2^2 + \frac{1}{2}||\mathbf{P}_{\hat{\mathbf{D}}^{\perp}}\mathbf{P
}_{\mathbf{Z}} \mathbf{Y} - \mathbf{P}_{\hat{\mathbf{D}}^{\perp}}\mathbf{Z}\bm{\alpha} ||_2^2 + \lambda ||\bm{\alpha}||_1.
\end{align*}
The first term, $\frac{1}{2}||\mathbf{P}_{\hat{\mathbf{D}}} (\mathbf{Y} - \mathbf{Z}\bm{\alpha}) - \hat{\mathbf{D}}\beta||_2^2 $ is always zero for any given $\bm{\alpha} \in \reals^L$ because $\mathbf{P}_{\hat{\mathbf{D}}}(\mathbf{Y} - \mathbf{Z}\bm{\alpha})$ lies in the span of $\hat{\mathbf{D}}$ and thus, we can pick $\beta$ such that the first term is zero. The second term, $\frac{1}{2}||\mathbf{P}_{\hat{\mathbf{D}}^{\perp}}\mathbf{P}_{\mathbf{Z}} (\mathbf{Y} - \mathbf{Z}\bm{\alpha} )||_2^2 + \lambda ||\bm{\alpha}||_1$, is the traditional Lasso problem where the outcome is $\mathbf{P}_{\hat{\mathbf{D}}^\perp} \mathbf{P}_{\mathbf{Z}}\mathbf{Y}$ and the design matrix is $\mathbf{P}_{\hat{\mathbf{D}}^{\perp}} \mathbf{Z}$. Hence, the minimizer for this Lasso problem is also the minimizer for the original method.
\qed

\clearpage

\section{Figures}
\begin{figure}[htbp!]
\centering
\includegraphics[width=7in,height=7.3in]{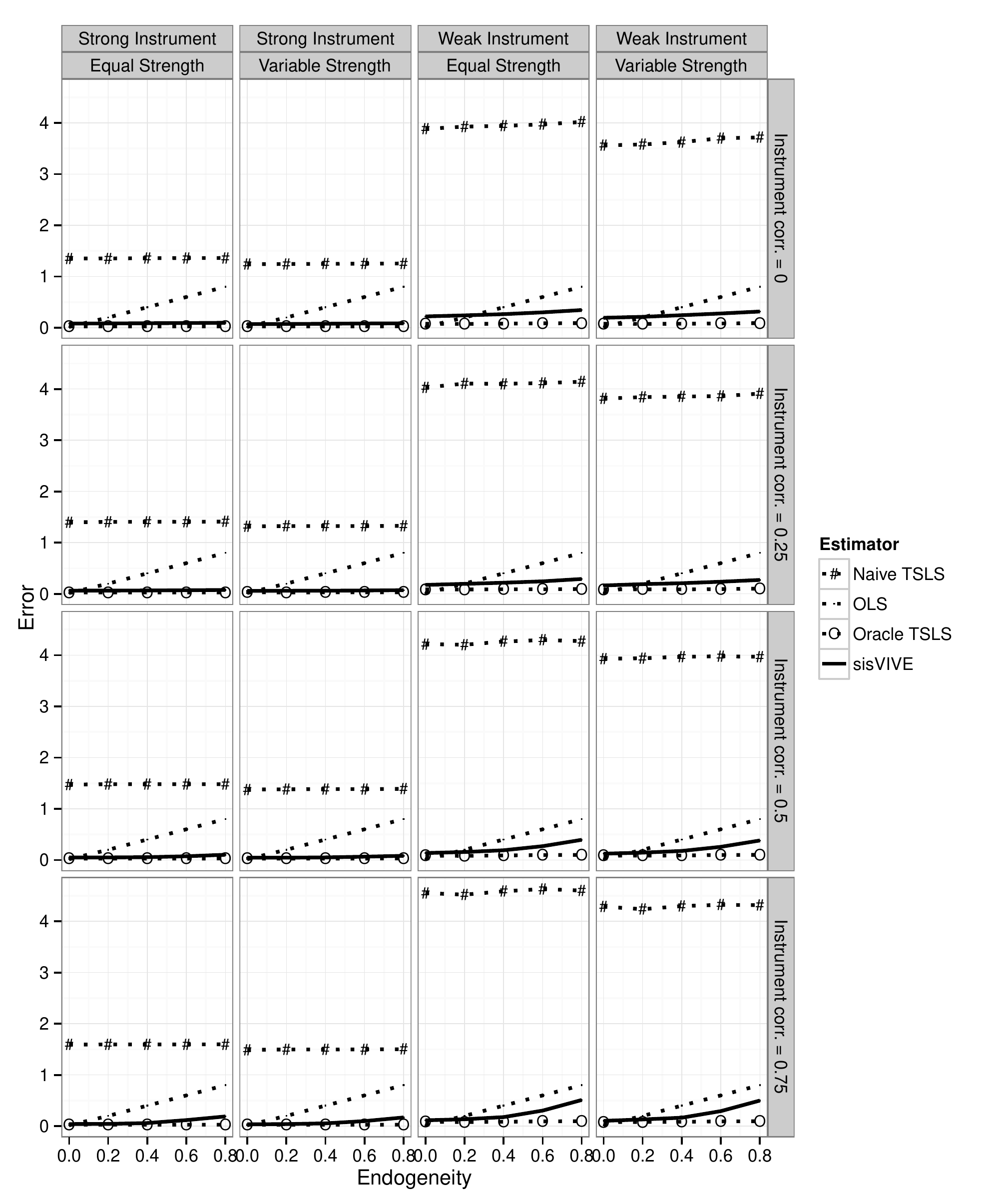}
\caption{Simulation Study of Estimation Performance Varying Endogeneity and Correlation Only Exists Within Valid and Invalid Instruments. There are ten $(L = 10)$ instruments. Each line represents the median absolute estimation error ($|\beta^* - \hat{\beta}|$) after 500 simulations. We fix the number of invalid instruments to $s = 3$. Each column in the plot corresponds to different variation of instruments' absolute and relative strength. There are two types of absolute strengths, ``Strong'' and ``Weak'', measured by the concentration parameter. There are two types of relative strengths, ``Equal'' and ``Variable'', measured by varying $\bm{\gamma}^*$ while holding the absolute strength (i.e. concentration parameter) fixed. Each row corresponds to the maximum correlation between instruments, but  correlation only exists within valid and invalid instruments.}
\label{fig:withinZcorrEndo}
\end{figure}

\begin{figure}[htbp!]
\centering
\includegraphics[width=7in,height=7.3in]{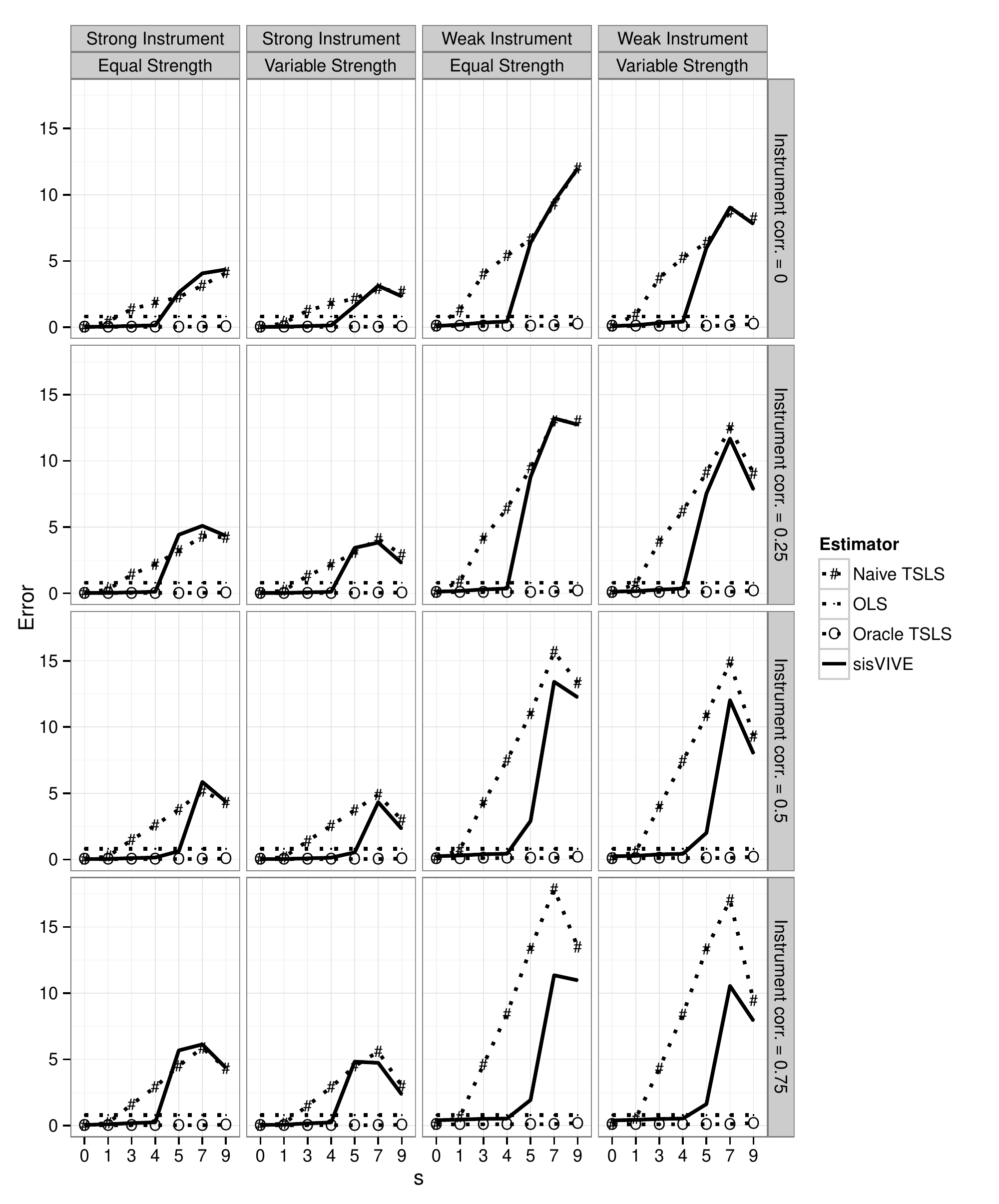}
\caption{Simulation Study of Estimation Performance Varying the Number of Invalid Instruments ($s$) and Correlation Only Exists Within Valid and and Invalid Instruments. There are ten $(L = 10)$ instruments. Each line represents the median absolute estimation error ($|\beta^* - \hat{\beta}|$) after 500 simulations. We fix the endogeneity $\sigma_{\epsilon \xi}^*$ to $\sigma_{\epsilon \xi}^* = 0.8$. Each column in the plot corresponds to a different variation of instruments' absolute and relative strength. There are two types of absolute strengths, ``Strong'' and ``Weak'', measured by the concentration parameter. There are two types of relative strengths, ``Equal'' and ``Variable'', measured by varying $\bm{\gamma}^*$ while holding the absolute strength fixed. Each row corresponds to the maximum correlation between instruments, but correlation only exists within valid and invalid instruments.}
\label{fig:withinZcorrS}
\end{figure}

\begin{figure}[htbp!]
\centering
\includegraphics[width=7in,height=7.3in]{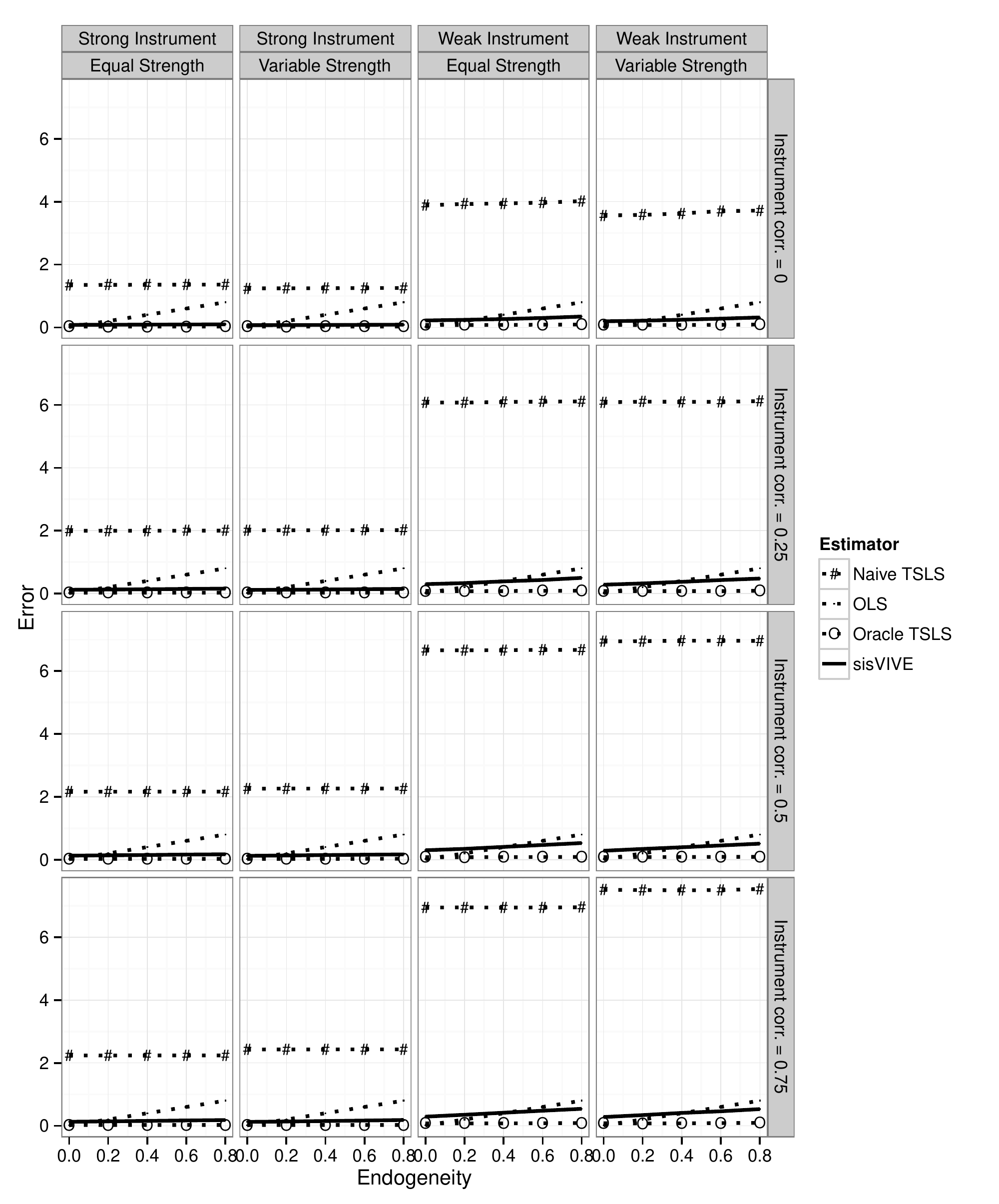}
\caption{Simulation Study of Estimation Performance Varying Endogeneity and Correlation Only Exists Between Valid and Invalid Instruments. There are ten $(L = 10)$ instruments. Each line represents the median absolute estimation error ($|\beta^* - \hat{\beta}|$) after 500 simulations. We fix the number of invalid instruments to $s = 3$. Each column in the plot corresponds to a different variation of instruments' absolute and relative strength. There are two types of absolute strengths, ``Strong'' and ``Weak'', measured by the concentration parameter. There are two types of relative strengths, ``Equal'' and ``Variable'', measured by varying $\bm{\gamma}^*$ while holding the absolute strength (i.e. concentration parameter) fixed. Each row corresponds to the maximum correlation between instruments, but correlation only exists between valid and invalid instruments.}
\label{fig:interZcorrEndo}
\end{figure}

\begin{figure}[htbp!]
\centering
\includegraphics[width=7in,height=7.3in]{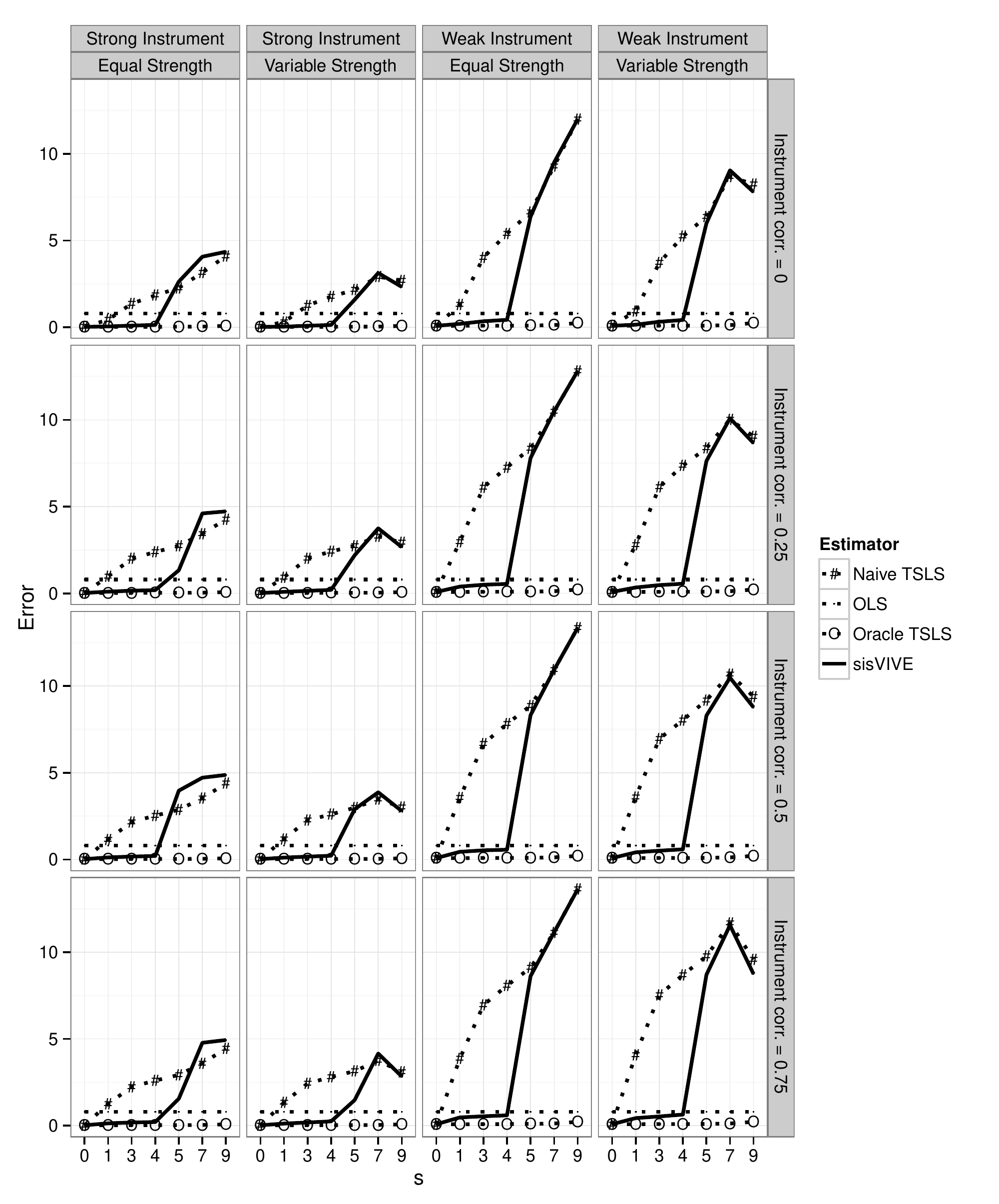}
\caption{Simulation Study of Estimation Performance Varying the Number of Invalid Instruments ($s$) and Correlation Only Exists Between Valid and and Invalid Instruments. There are ten $(L = 10)$ instruments. Each line represents the median absolute estimation error ($|\beta^* - \hat{\beta}|$) after 500 simulations. We fix the endogeneity $\sigma_{\epsilon \xi}^*$ to $\sigma_{\epsilon \xi}^* = 0.8$. Each column in the plot corresponds to a different variation of instruments' absolute and relative strength. There are two types of absolute strengths, ``Strong'' and ``Weak'', measured by the concentration parameter. There are two types of relative strengths, ``Equal'' and ``Variable'', measured by varying $\bm{\gamma}^*$ while holding the absolute strength fixed. Each row corresponds to the maximum correlation between instruments, but correlation only exists between valid and invalid instruments.}
\label{fig:interZcorrS}
\end{figure}

\begin{figure}[htbp!]
\centering
\includegraphics[width=7in,height=7.3in]{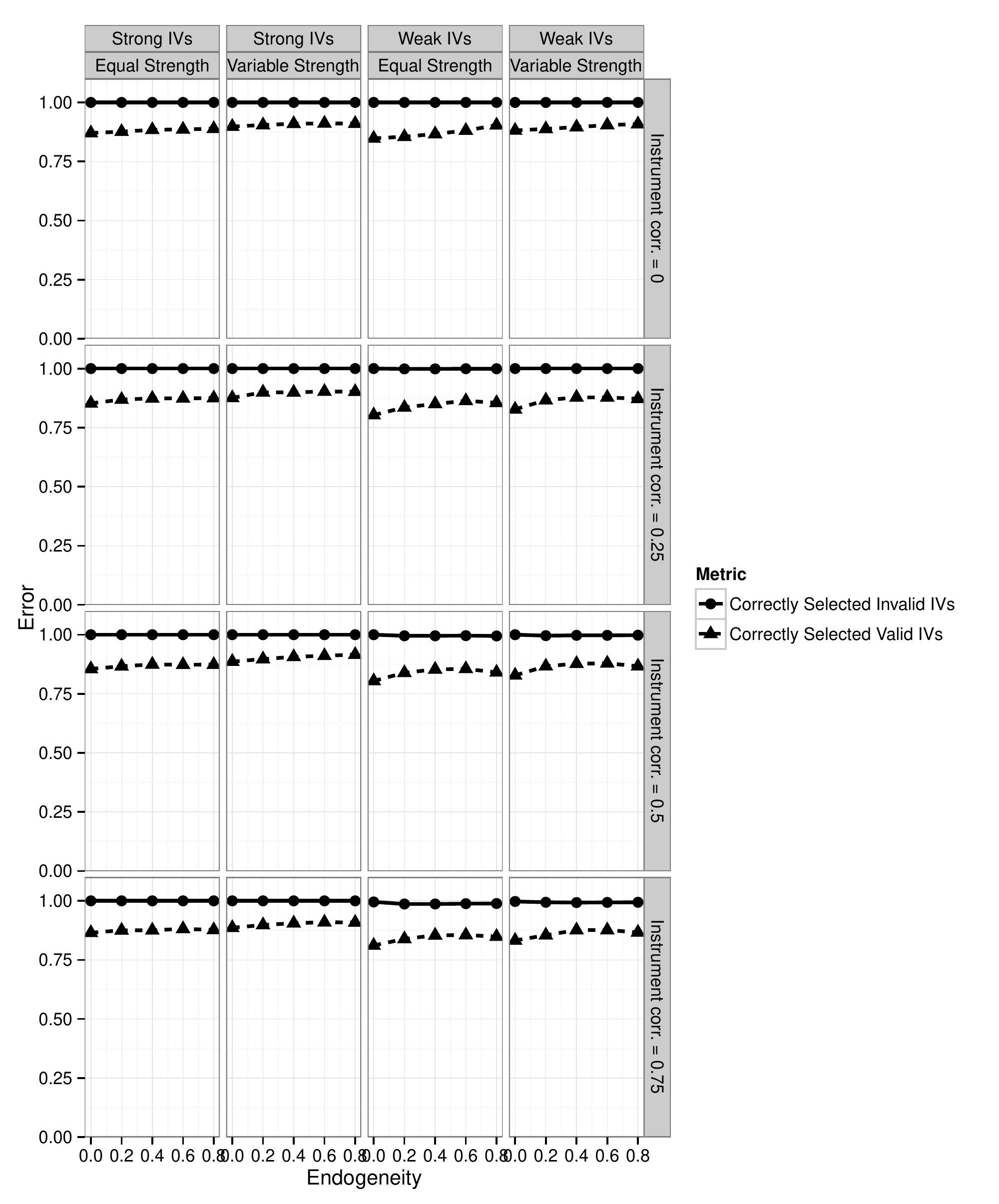}
\caption{Simulation Study Varying Endogeneity and Correlation Exists Between All Instruments. There are ten $(L = 10)$ instruments. Each line represents the average proportions of correctly selected valid instruments and correctly selected invalid instruments after 500 simulations. We fix the number of invalid instruments to $s = 3$. Each column in the plot corresponds to a different variation of instruments' absolute and relative strength. There are two types of absolute strengths, ``Strong'' and ``Weak'', measured by the concentration parameter. There are two types of relative strengths, ``Equal'' and ``Variable'', measured by varying $\bm{\gamma}^*$ while holding the absolute strength (i.e. concentration parameter) fixed. Each row corresponds to the maximum correlation between all instruments.}
\label{fig:equalZcorrEndoPercent}
\end{figure}

\begin{figure}[htbp!]
\centering
\includegraphics[width=7in,height=7.3in]{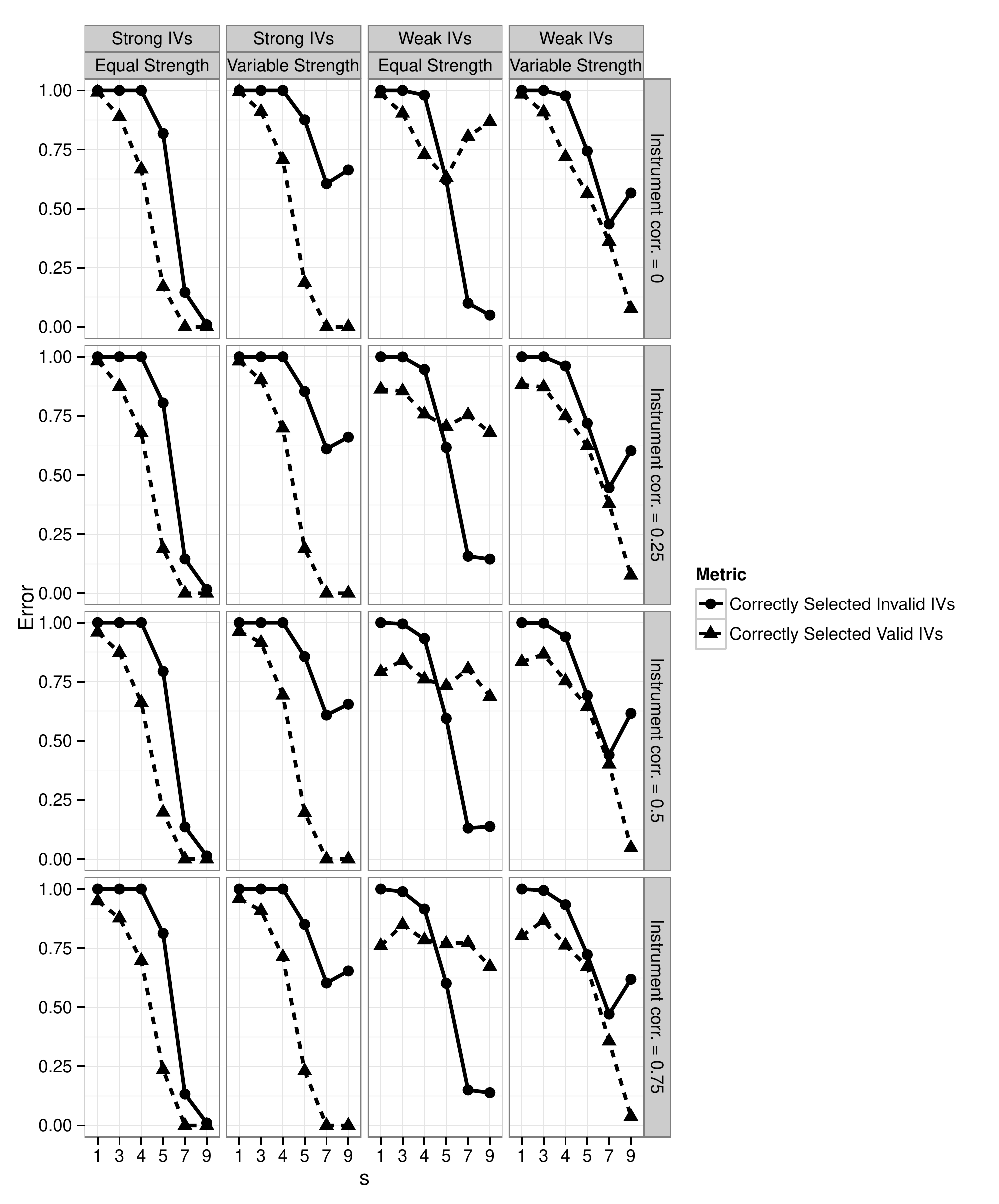}
\caption{Simulation Study Varying the Number of Invalid Instruments ($s$) and Correlation Exists Between All Instruments. There are ten $(L = 10)$ instruments. Each line represents the average proportions of correctly selected valid instruments and correctly selected invalid instruments after 500 simulations. We fix the endogeneity $\sigma_{\epsilon \xi}^*$ to $\sigma_{\epsilon \xi}^* = 0.8$. Each column in the plot corresponds to a different variation of instruments' absolute and relative strength. There are two types of absolute strengths, ``Strong'' and ``Weak'', measured by the concentration parameter. There are two types of relative strengths, ``Equal'' and ``Variable'', measured by varying $\bm{\gamma}^*$ while holding the absolute strength fixed. Each row corresponds to the maximum correlation between all instruments. }
\label{fig:equalZcorrSPercent}
\end{figure}

\begin{figure}[htbp!]
\centering
\includegraphics[width=7in,height=7.3in]{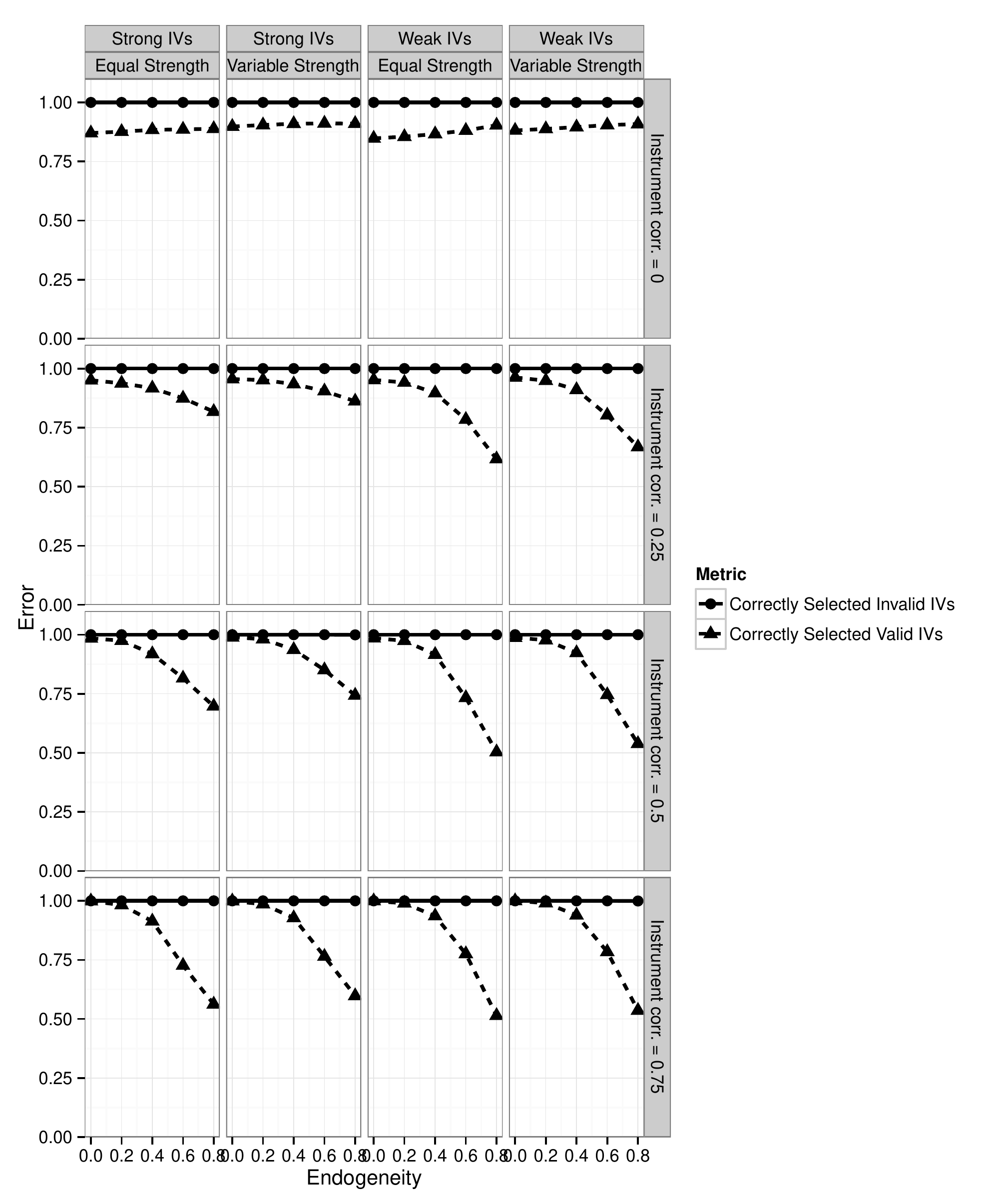}
\caption{Simulation Study Varying Endogeneity and Correlation Only Exists Within Valid and Invalid Instruments. There are ten $(L = 10)$ instruments. Each line represents the average proportions of correctly selected valid instruments and correctly selected invalid instruments after 500 simulations. We fix the number of invalid instruments to $s = 3$. Each column in the plot corresponds to a different variation of instruments' absolute and relative strength. There are two types of absolute strengths, ``Strong'' and ``Weak'', measured by the concentration parameter. There are two types of relative strengths, ``Equal'' and ``Variable'', measured by varying $\bm{\gamma}^*$ while holding the absolute strength (i.e. concentration parameter) fixed. Each row corresponds to the maximum correlation between instruments, but correlation only exists within valid and invalid instruments.}
\label{fig:withinZcorrEndoPercent}
\end{figure}

\begin{figure}[htbp!]
\centering
\includegraphics[width=7in,height=7.3in]{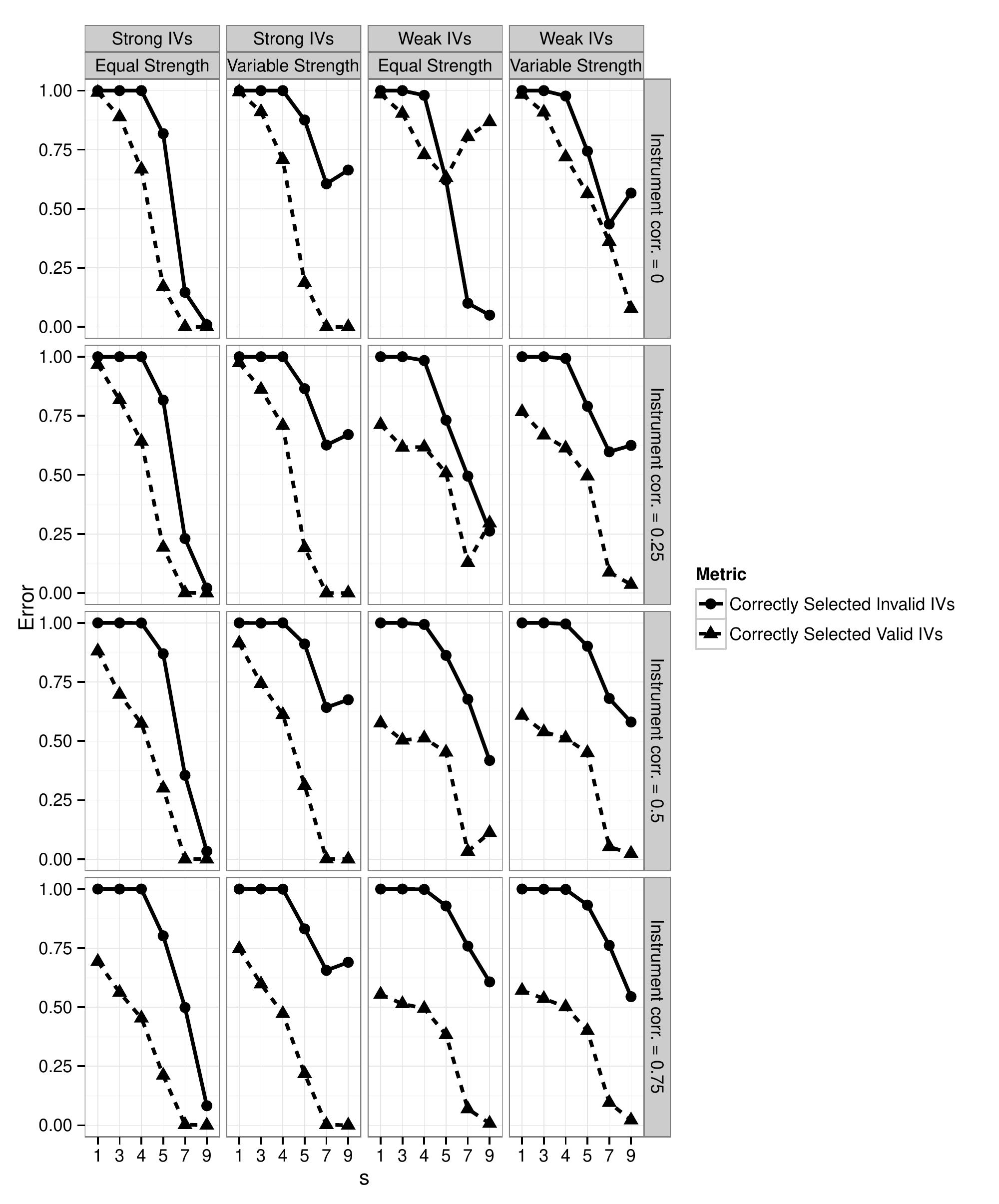}
\caption{Simulation Study Varying the Number of Invalid Instruments ($s$) and Correlation Only Exists Within Valid and and Invalid Instruments. There are ten $(L = 10)$ instruments. Each line represents the average proportions of correctly selected valid instruments and correctly selected invalid instruments after 500 simulations. We fix the endogeneity $\sigma_{\epsilon \xi}^*$ to $\sigma_{\epsilon \xi}^* = 0.8$. Each column in the plot corresponds to a different variation of instruments' absolute and relative strength. There are two types of absolute strengths, ``Strong'' and ``Weak'', measured by the concentration parameter. There are two types of relative strengths, ``Equal'' and ``Variable'', measured by varying $\bm{\gamma}^*$ while holding the absolute strength fixed. Each row corresponds to the maximum correlation between instruments, but correlation only exists within valid and invalid instruments.}
\label{fig:withinZcorrSPercent}
\end{figure}

\begin{figure}[htbp!]
\centering
\includegraphics[width=7in,height=7.3in]{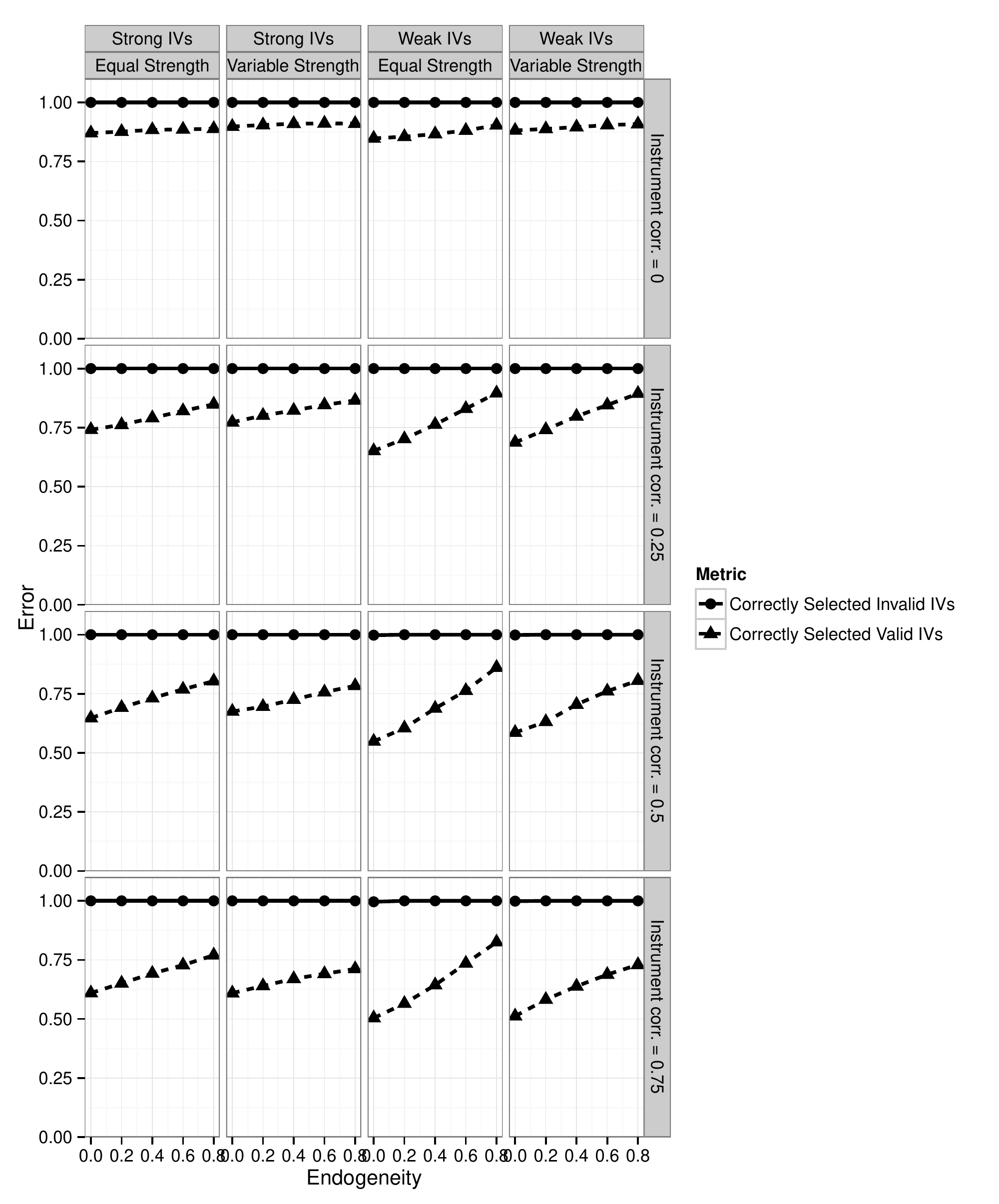}
\caption{Simulation Study Varying Endogeneity and Correlation Only Exists Between Valid and Invalid Instruments. There are ten $(L = 10)$ instruments. Each line represents the average proportions of correctly selected valid instruments and correctly selected invalid instruments after 500 simulations. We fix the number of invalid instruments to $s = 3$. Each column in the plot corresponds to a different variation of instruments' absolute and relative strength. There are two types of absolute strengths, ``Strong'' and ``Weak'', measured by the concentration parameter. There are two types of relative strengths, ``Equal'' and ``Variable'', measured by varying $\bm{\gamma}^*$ while holding the absolute strength (i.e. concentration parameter) fixed. Each row corresponds to the maximum correlation between instruments, but correlation only exists between valid and invalid instruments.}
\label{fig:interZcorrEndoPercent}
\end{figure}

\begin{figure}[htbp!]
\centering
\includegraphics[width=7in,height=7.3in]{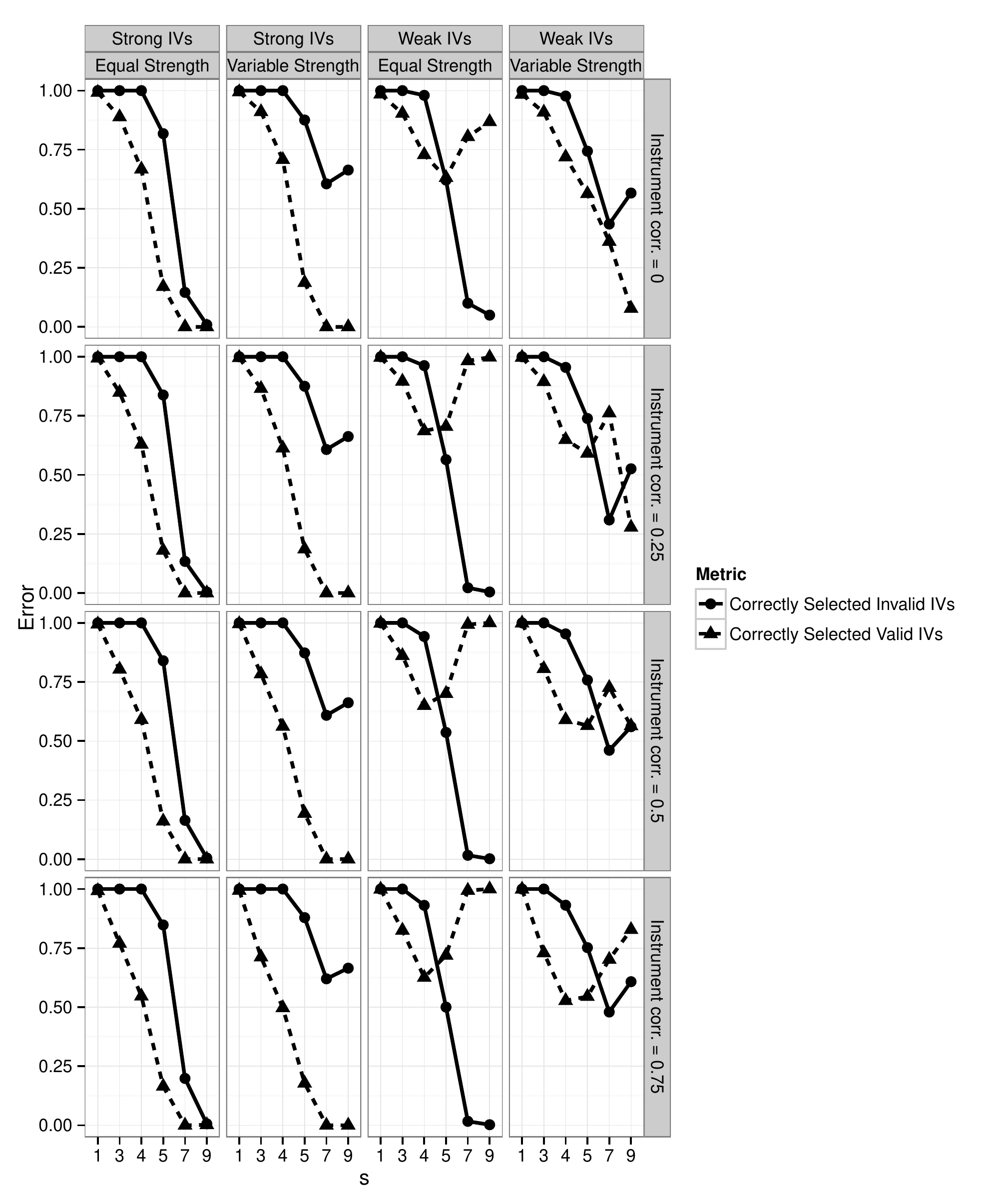}
\caption{Simulation Study Varying the Number of Invalid Instruments ($s$) and Correlation Only Exists Between Valid and and Invalid Instruments. There are ten $(L = 10)$ instruments. Each line represents the average proportions of correctly selected valid instruments and correctly selected invalid instruments after 500 simulations. We fix the endogeneity $\sigma_{\epsilon \xi}^*$ to $\sigma_{\epsilon \xi}^* = 0.8$. Each column in the plot corresponds to a different variation of instruments' absolute and relative strength. There are two types of absolute strengths, ``Strong'' and ``Weak'', measured by the concentration parameter. There are two types of relative strengths, ``Equal'' and ``Variable'', measured by varying $\bm{\gamma}^*$ while holding the absolute strength fixed. Each row corresponds to the maximum correlation between instruments, but correlation only exists between valid and invalid instruments.}
\label{fig:interZcorrSPercent}
\end{figure}

\begin{figure}[htbp!]
\centering
\includegraphics[width=7in,height=7.3in]{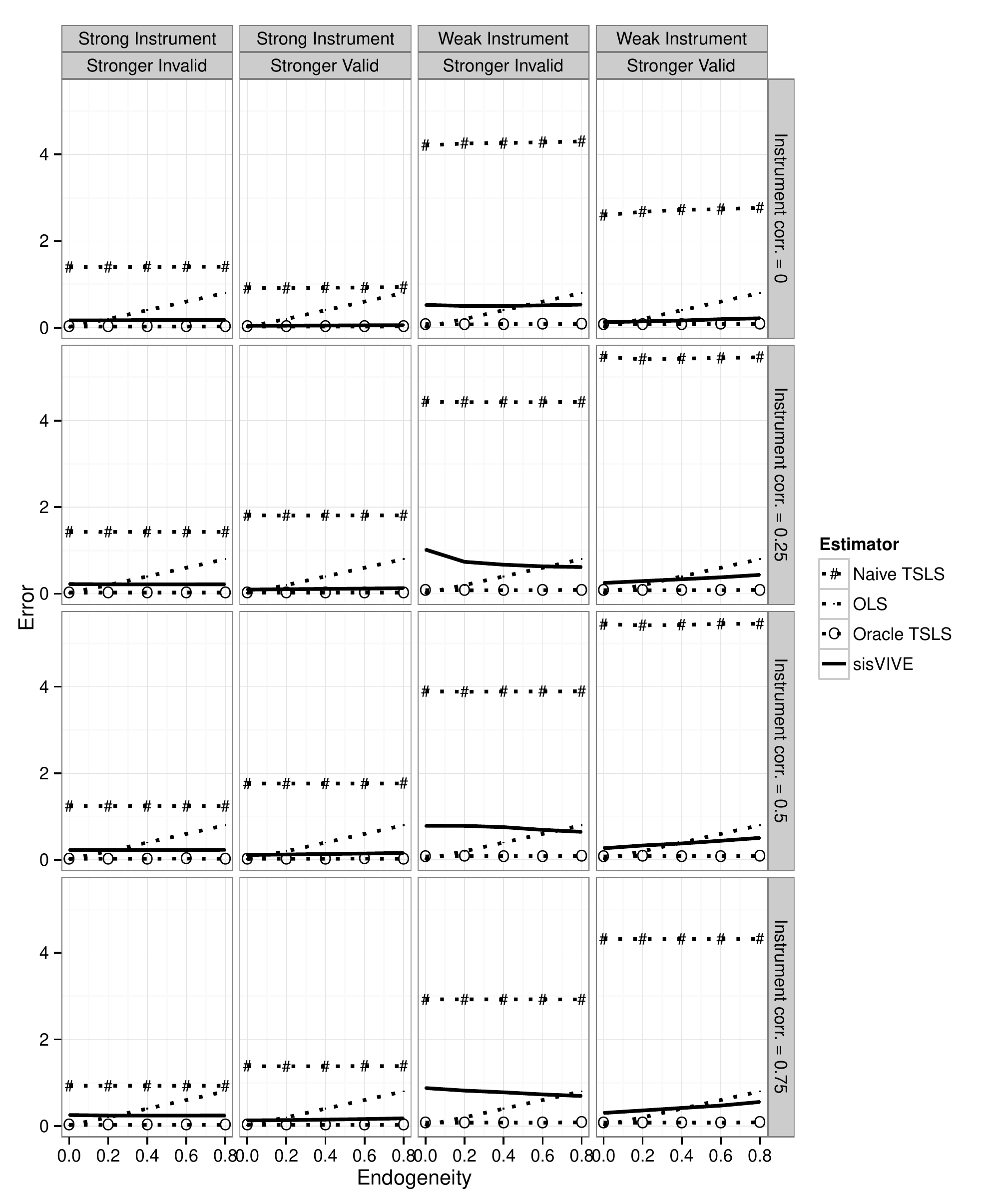}
\caption{Simulation Study Varying Endogeneity and Correlation Exists Between All Instruments. We also vary the instrument strength of valid and invalid instruments. There are ten $(L = 10)$ instruments. Each line represents the median absolute estimation error ($|\beta^* - \hat{\beta}|$) after 500 simulations. We fix the number of invalid instruments to $s = 3$. Each column in the plot corresponds to a different variation of instruments' absolute and relative strength. There are two types of absolute strengths, ``Strong'' and ``Weak'', measured by the concentration parameter. There are two types of strengths for valid and invalid instruments, ``Stronger Invalid'' and ``Stronger Valid'', determined by varying $\bm{\gamma}^*$ while holding the absolute strength fixed. Each row corresponds to the maximum correlation between instruments.}
\label{fig:equalZcorrEndo-awkward}
\end{figure}

\begin{figure}[htbp!]
\centering
\includegraphics[width=7in,height=7.3in]{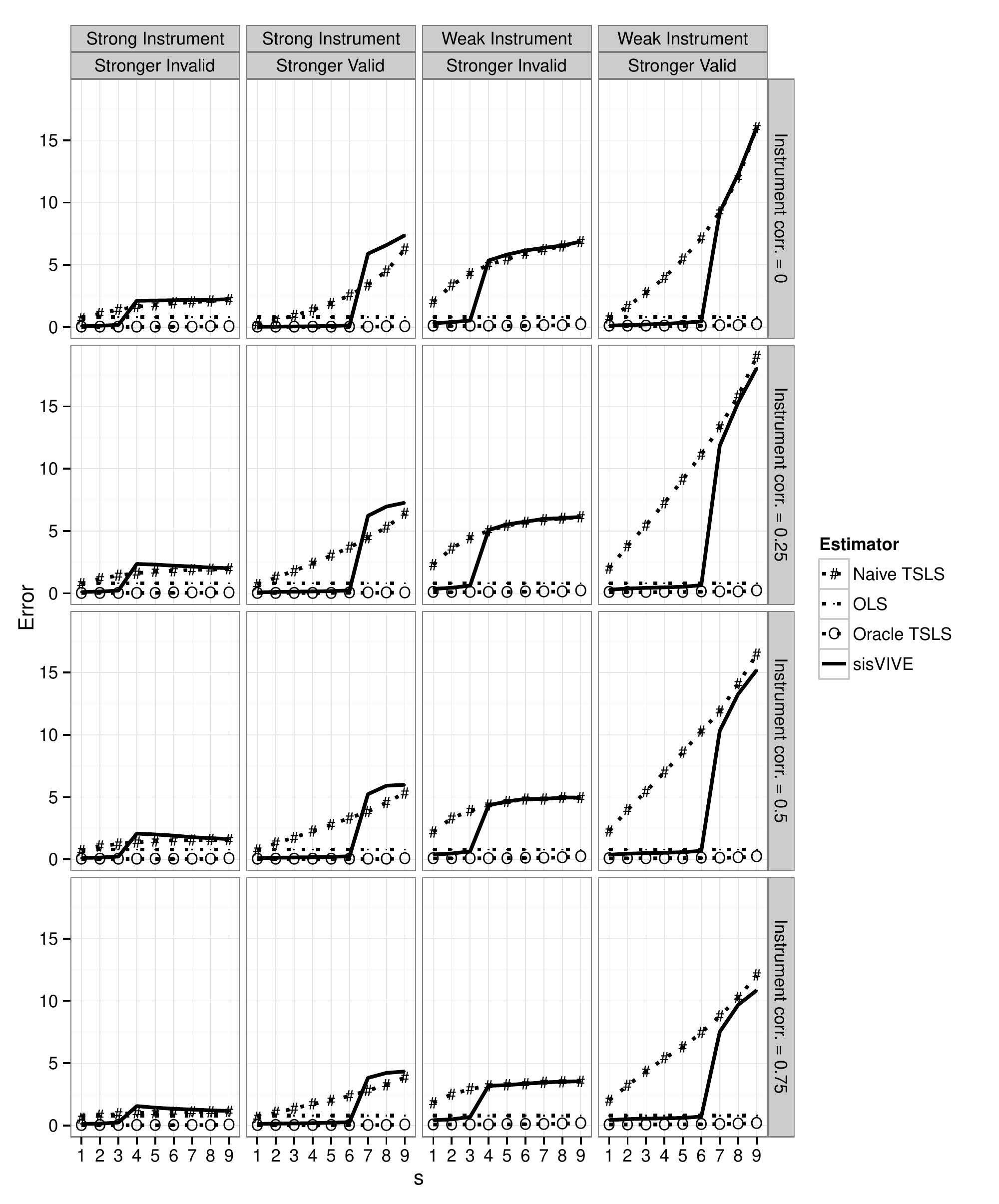}
\caption{Simulation Study Varying the Number of Invalid Instruments ($s$) and Correlation Exists Between All Instruments. We also vary the instrument strength of valid and invalid instruments. There are ten $(L = 10)$ instruments. Each line represents the median absolute estimation error ($|\beta^* - \hat{\beta}|$) after 500 simulations. We fix the endogeneity $\sigma_{\epsilon \xi}^*$ to $\sigma_{\epsilon \xi}^* = 0.8$. Each column in the plot corresponds to a different variation of instruments' absolute and relative strength. There are two types of absolute strengths, ``Strong'' and ``Weak'', measured by the concentration parameter. There are two types of strengths for valid and invalid instruments, ``Stronger Invalid'' and ``Stronger Valid'', determined by varying $\bm{\gamma}^*$ while holding the absolute strength fixed. Each row corresponds to the maximum correlation between instruments.}
\label{fig:equalZcorrS-awkward}
\end{figure}

\begin{figure}[htbp!]
\centering
\includegraphics[width=7in,height=7.3in]{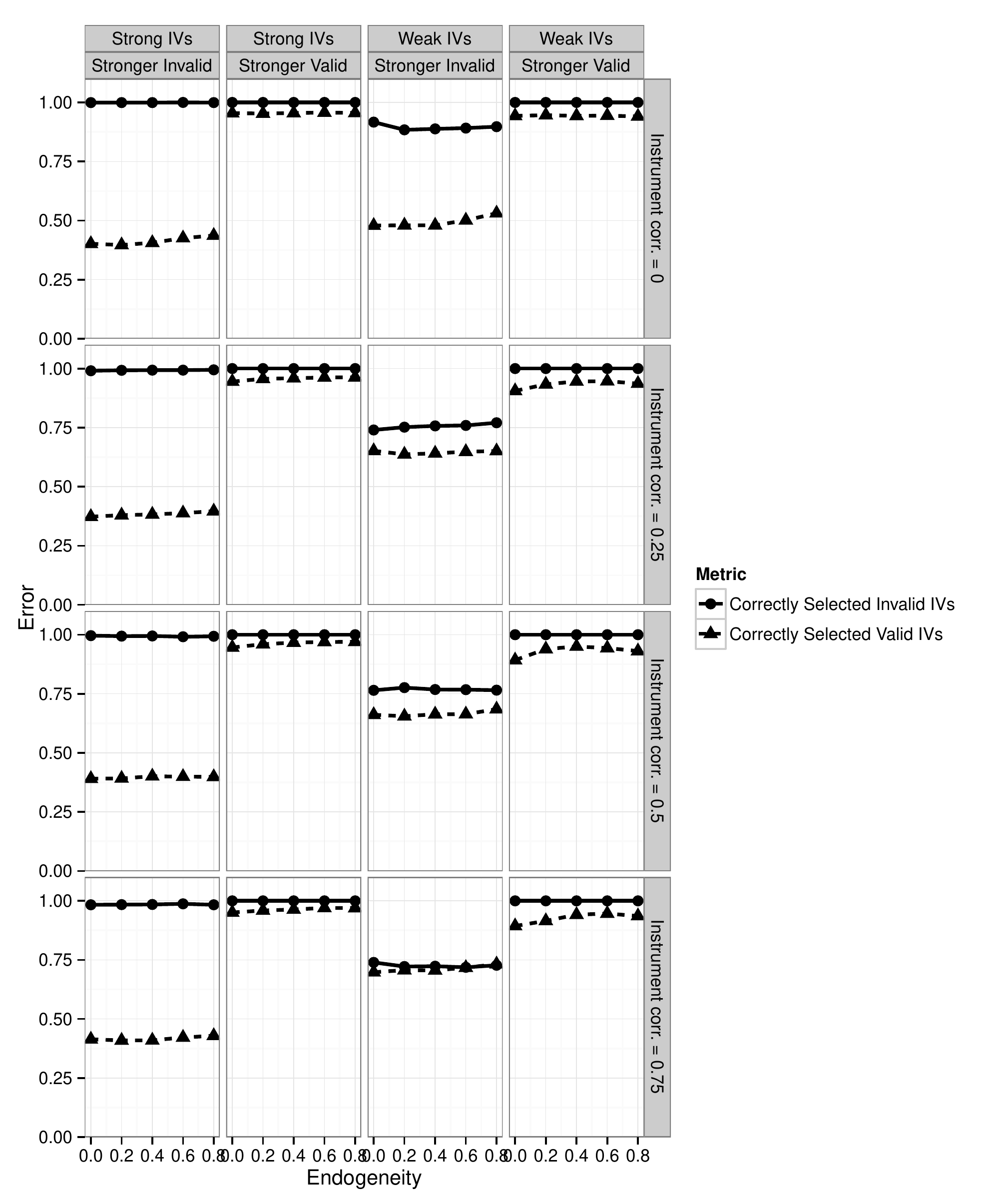}
\caption{Simulation Study Varying Endogeneity and Correlation Exists Between All Instruments. We also vary the instrument strength of valid and invalid instruments. There are ten $(L = 10)$ instruments. Each line represents the average proportions of correctly selected valid instruments and correctly selected invalid instruments after 500 simulations. We fix the number of invalid instruments to $s = 3$. Each column in the plot corresponds to a different variation of instruments' absolute and relative strength. There are two types of absolute strengths, ``Strong'' and ``Weak'', measured by the concentration parameter. There are two types of strengths for valid and invalid instruments, ``Stronger Invalid'' and ``Stronger Valid'', determined by varying $\bm{\gamma}^*$ while holding the absolute strength fixed. Each row corresponds to the maximum correlation between instruments.}
\label{fig:equalZcorrEndoPercent-awkward}
\end{figure}

\begin{figure}[htbp!]
\centering
\includegraphics[width=7in,height=7.3in]{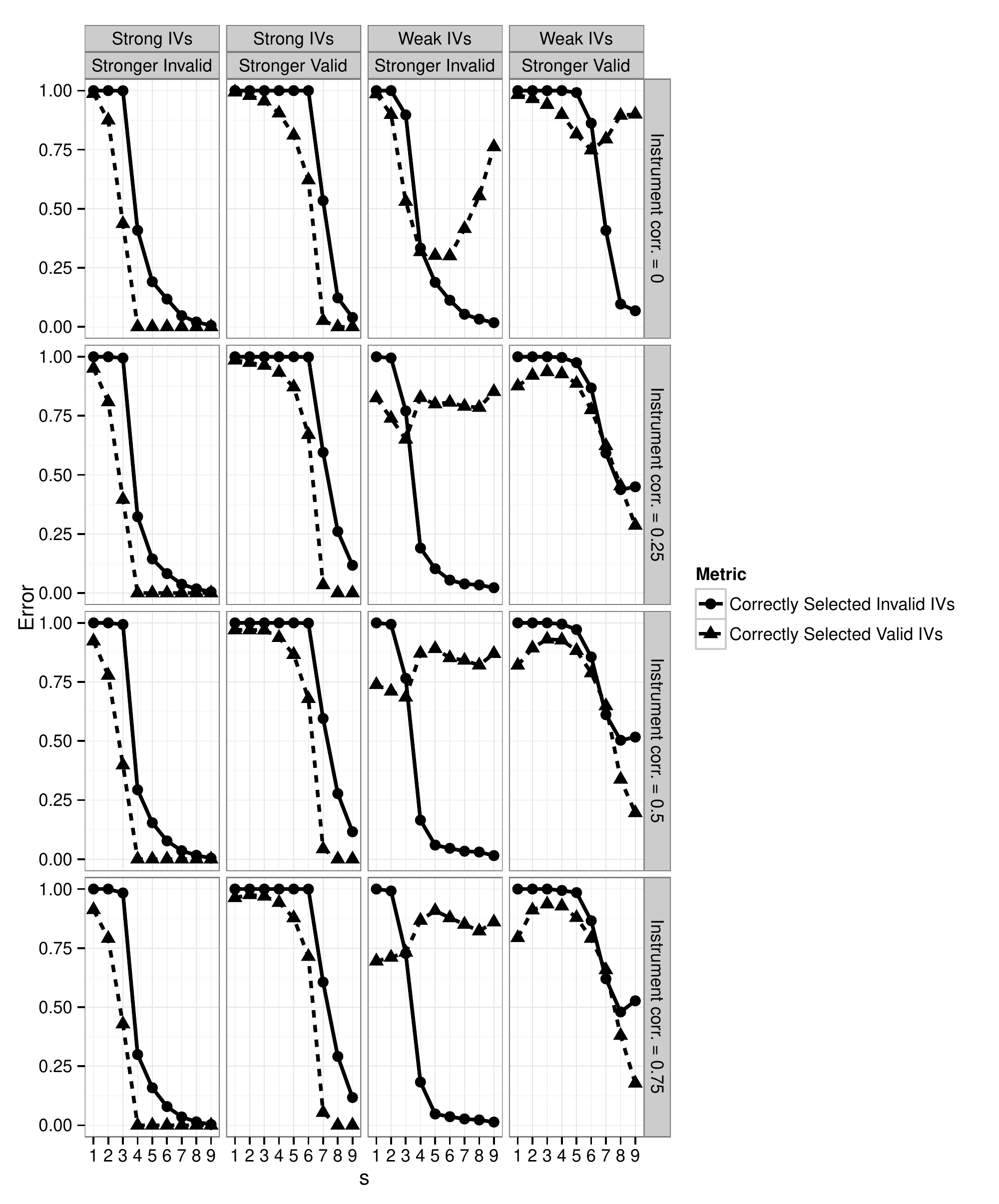}
\caption{Simulation Study Varying the Number of Invalid Instruments ($s$) and Correlation Exists Between All Instruments. We also vary the instrument strength of valid and invalid instruments. There are ten $(L = 10)$ instruments. Each line represents the average proportions of correctly selected valid instruments and correctly selected invalid instruments after 500 simulations. We fix the endogeneity $\sigma_{\epsilon \xi}^*$ to $\sigma_{\epsilon \xi}^* = 0.8$. Each column in the plot corresponds to a different variation of instruments' absolute and relative strength. There are two types of absolute strengths, ``Strong'' and ``Weak'', measured by the concentration parameter. There are two types of strengths for valid and invalid instruments, ``Stronger Invalid'' and ``Stronger Valid'', determined by varying $\bm{\gamma}^*$ while holding the absolute strength fixed. Each row corresponds to the maximum correlation between instruments.}
\label{fig:equalZcorrSPercent-awkward}
\end{figure}

\begin{figure}[htbp!]
\centering
\includegraphics[width=7in,height=7.3in]{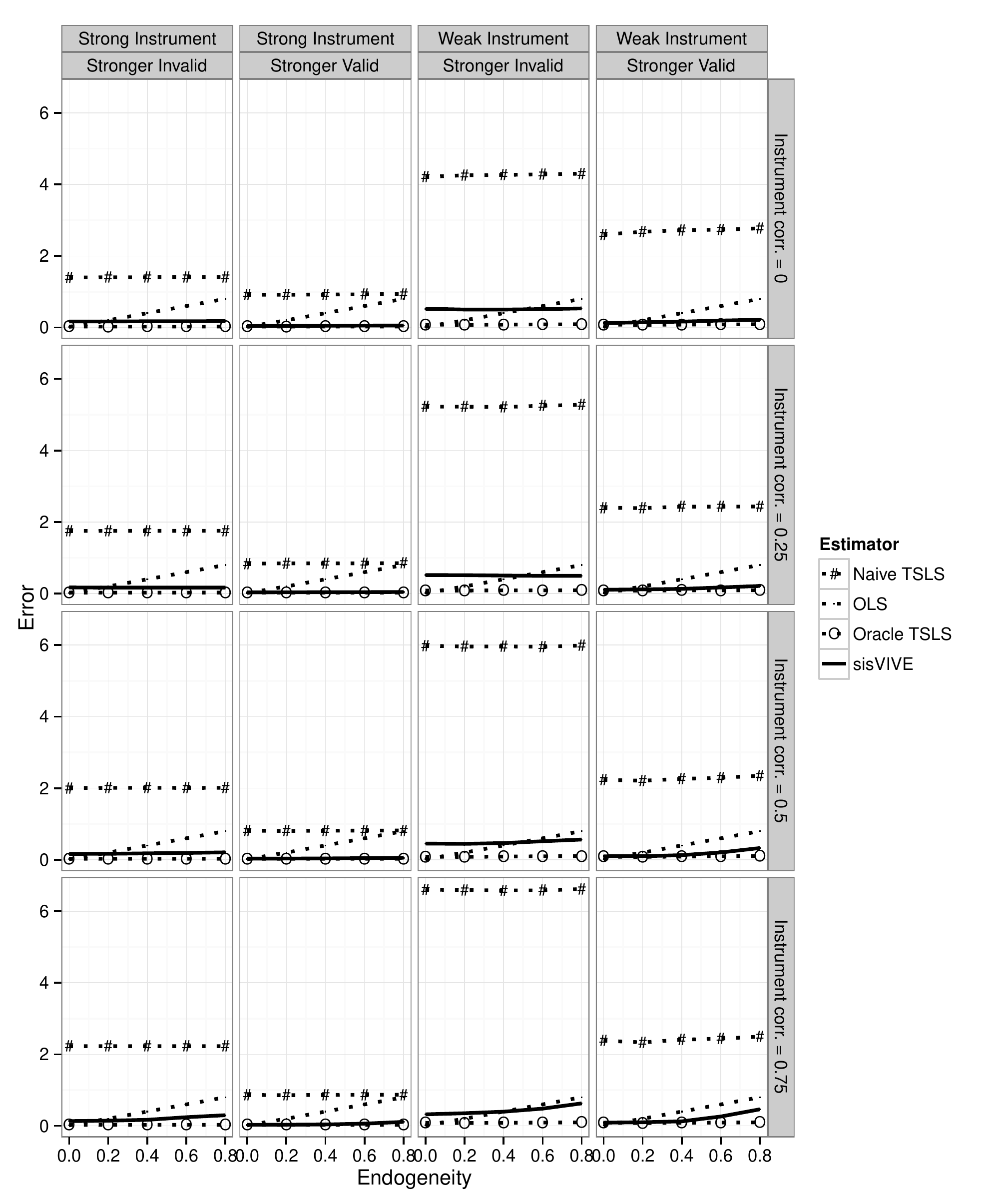}
\caption{Simulation Study Varying Endogeneity and Correlation Only Exists Within Valid and Invalid Instruments. We also vary the instrument strength of valid and invalid instruments. There are ten $(L = 10)$ instruments. Each line represents the median absolute estimation error ($|\beta^* - \hat{\beta}|$) after 500 simulations. We fix the number of invalid instruments to $s = 3$. Each column in the plot corresponds to a different variation of instruments' absolute and relative strength. There are two types of absolute strengths, ``Strong'' and ``Weak'', measured by the concentration parameter. There are two types of strengths for valid and invalid instruments, ``Stronger Invalid'' and ``Stronger Valid'', determined by varying $\bm{\gamma}^*$ while holding the absolute strength fixed. Each row corresponds to the maximum correlation between instruments, but correlation only exists within valid and invalid instruments.}
\label{fig:withinZcorrEndo-awkward}
\end{figure}

\begin{figure}[htbp!]
\centering
\includegraphics[width=7in,height=7.3in]{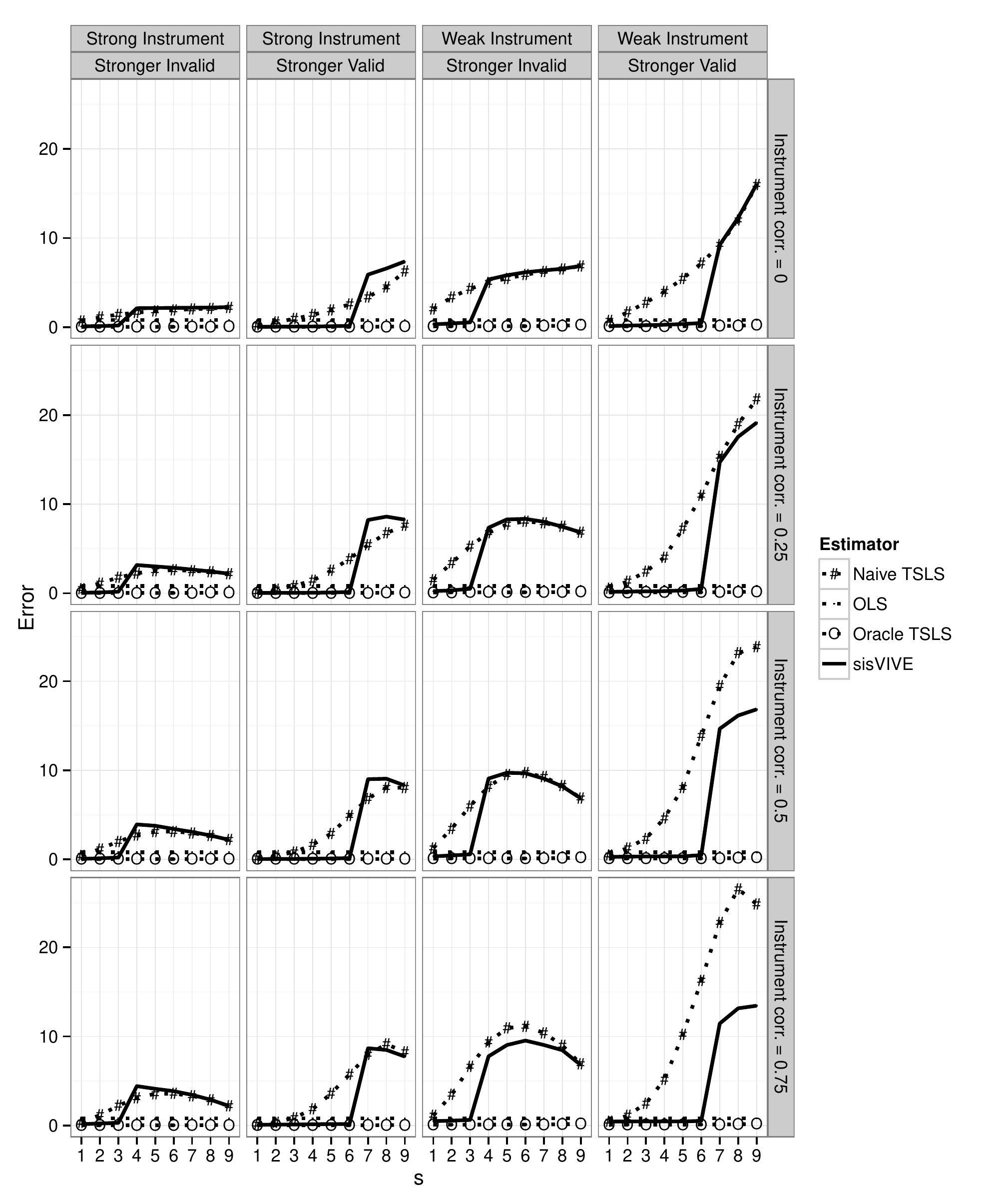}
\caption{Simulation Study Varying the Number of Invalid Instruments ($s$) and Correlation Only Exists Within Valid and Invalid Instruments. We also vary the instrument strength of valid and invalid instruments. There are ten $(L = 10)$ instruments. Each line represents the median absolute estimation error ($|\beta^* - \hat{\beta}|$) after 500 simulations. We fix the endogeneity $\sigma_{\epsilon \xi}^*$ to $\sigma_{\epsilon \xi}^* = 0.8$. Each column in the plot corresponds to a different variation of instruments' absolute and relative strength. There are two types of absolute strengths, ``Strong'' and ``Weak'', measured by the concentration parameter. There are two types of strengths for valid and invalid instruments, ``Stronger Invalid'' and ``Stronger Valid'', determined by varying $\bm{\gamma}^*$ while holding the absolute strength fixed. Each row corresponds to the maximum correlation between instruments, but correlation only exists within valid and invalid instruments.}
\label{fig:withinZcorrS-awkward}
\end{figure}

\begin{figure}[htbp!]
\centering
\includegraphics[width=7in,height=7.3in]{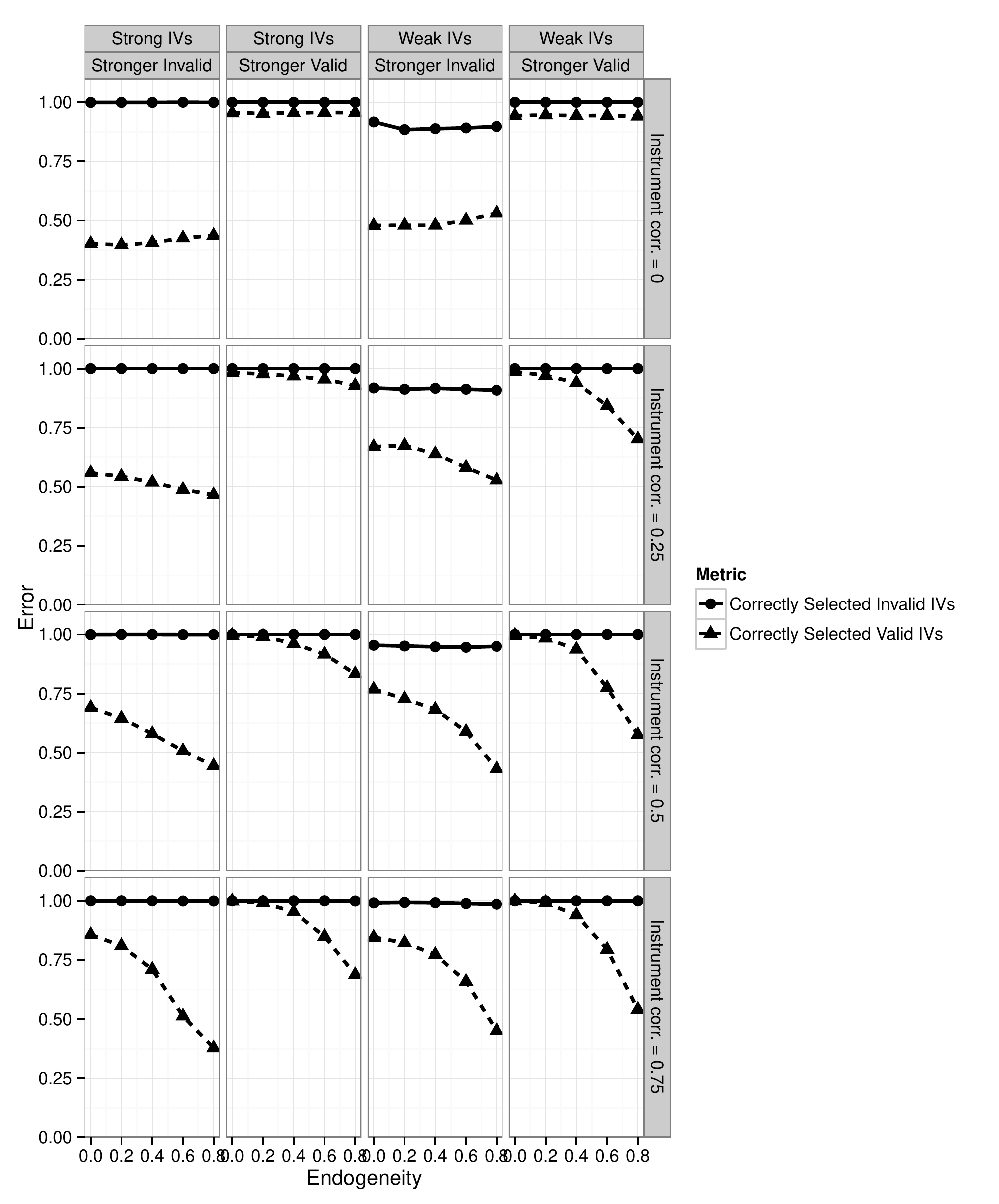}
\caption{Simulation Study Varying Endogeneity and Correlation Only Exists Within Valid and Invalid Instruments. We also vary the instrument strength of valid and invalid instruments. There are ten $(L = 10)$ instruments. Each line represents the average proportions of correctly selected valid instruments and correctly selected invalid instruments after 500 simulations. We fix the number of invalid instruments to $s = 3$. Each column in the plot corresponds to a different variation of instruments' absolute and relative strength. There are two types of absolute strengths, ``Strong'' and ``Weak'', measured by the concentration parameter. There are two types of strengths for valid and invalid instruments, ``Stronger Invalid'' and ``Stronger Valid'', determined by varying $\bm{\gamma}^*$ while holding the absolute strength fixed. Each row corresponds to the maximum correlation between instruments, but correlation only exists within valid and invalid instruments.}
\label{fig:withinZcorrEndoPercent-awkward}
\end{figure}

\begin{figure}[htbp!]
\centering
\includegraphics[width=7in,height=7.3in]{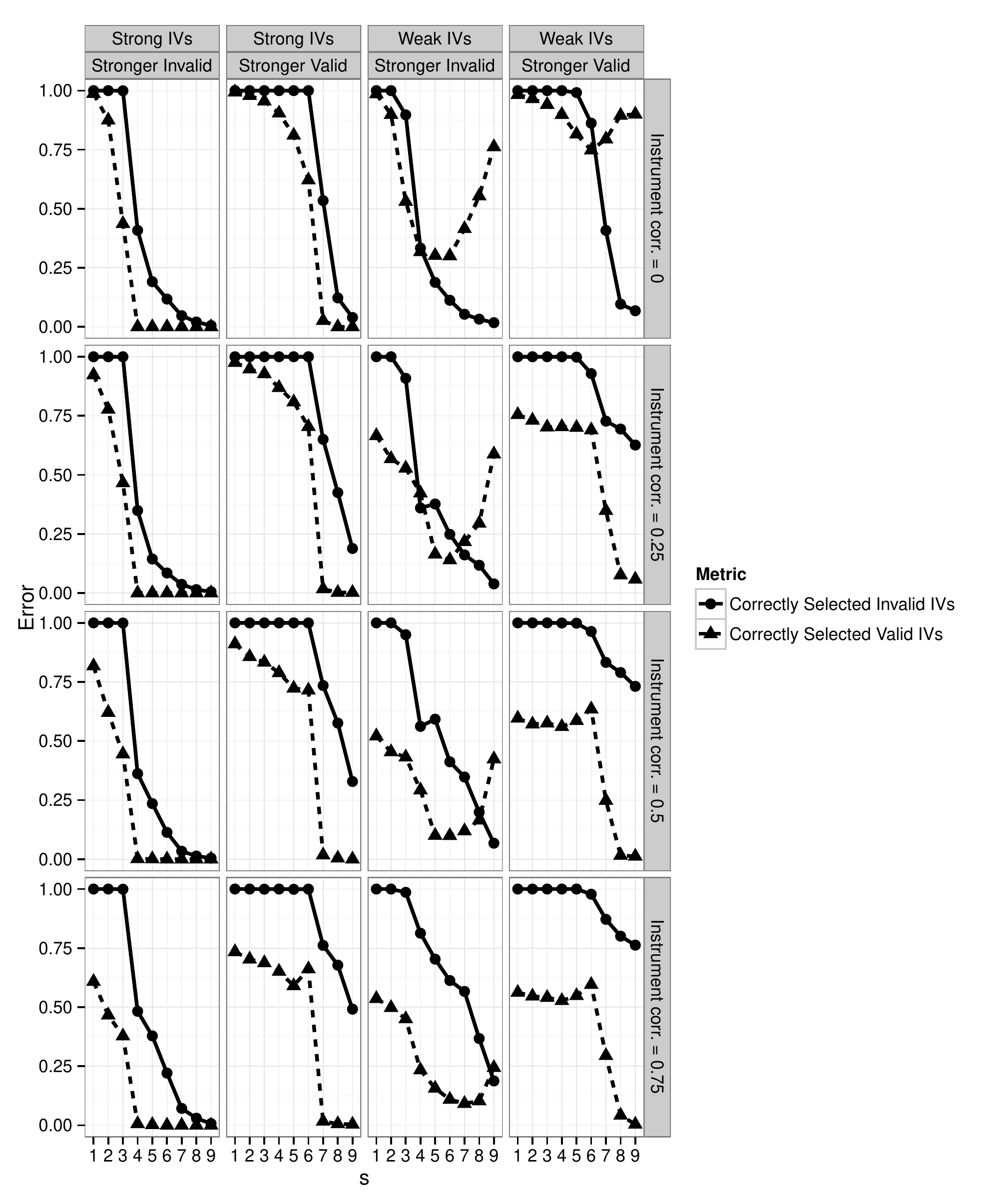}
\caption{Simulation Study Varying the Number of Invalid Instruments ($s$) and Correlation Only Exists Within Valid and Invalid Instruments. We also vary the instrument strength of valid and invalid instruments. There are ten $(L = 10)$ instruments. Each line represents the average proportions of correctly selected valid instruments and correctly selected invalid instruments after 500 simulations. We fix the endogeneity $\sigma_{\epsilon \xi}^*$ to $\sigma_{\epsilon \xi}^* = 0.8$. Each column in the plot corresponds to a different variation of instruments' absolute and relative strength. There are two types of absolute strengths, ``Strong'' and ``Weak'', measured by the concentration parameter. There are two types of strengths for valid and invalid instruments, ``Stronger Invalid'' and ``Stronger Valid'', determined by varying $\bm{\gamma}^*$ while holding the absolute strength fixed. Each row corresponds to the maximum correlation between instruments, but correlation only exists within valid and invalid instruments.}
\label{fig:withinZcorrSPercent-awkward}
\end{figure}

\begin{figure}[htbp!]
\centering
\includegraphics[width=7in,height=7.3in]{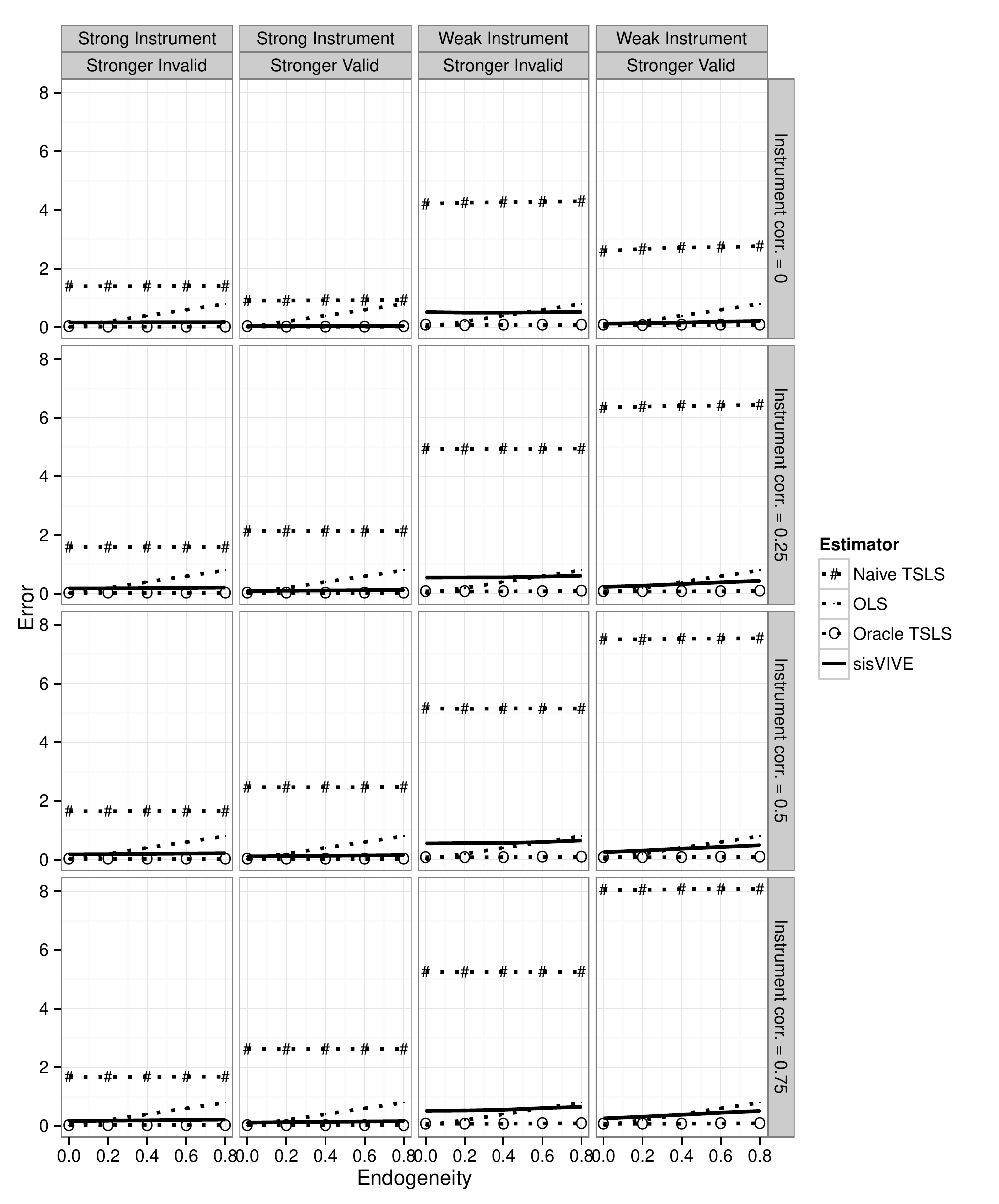}
\caption{Simulation Study Varying Endogeneity and Correlation Only Exists Between Valid and Invalid Instruments. We also vary the instrument strength of valid and invalid instruments. There are ten $(L = 10)$ instruments. Each line represents the median absolute estimation error ($|\beta^* - \hat{\beta}|$) after 500 simulations. We fix the number of invalid instruments to $s = 3$. Each column in the plot corresponds to a different variation of instruments' absolute and relative strength. There are two types of absolute strengths, ``Strong'' and ``Weak'', measured by the concentration parameter. There are two types of strengths for valid and invalid instruments, ``Stronger Invalid'' and ``Stronger Valid'', determined by varying $\bm{\gamma}^*$ while holding the absolute strength fixed. Each row corresponds to the maximum correlation between instruments, but correlation only exists between valid and invalid instruments.}
\label{fig:interZcorrEndo-awkward}
\end{figure}

\begin{figure}[htbp!]
\centering
\includegraphics[width=7in,height=7.3in]{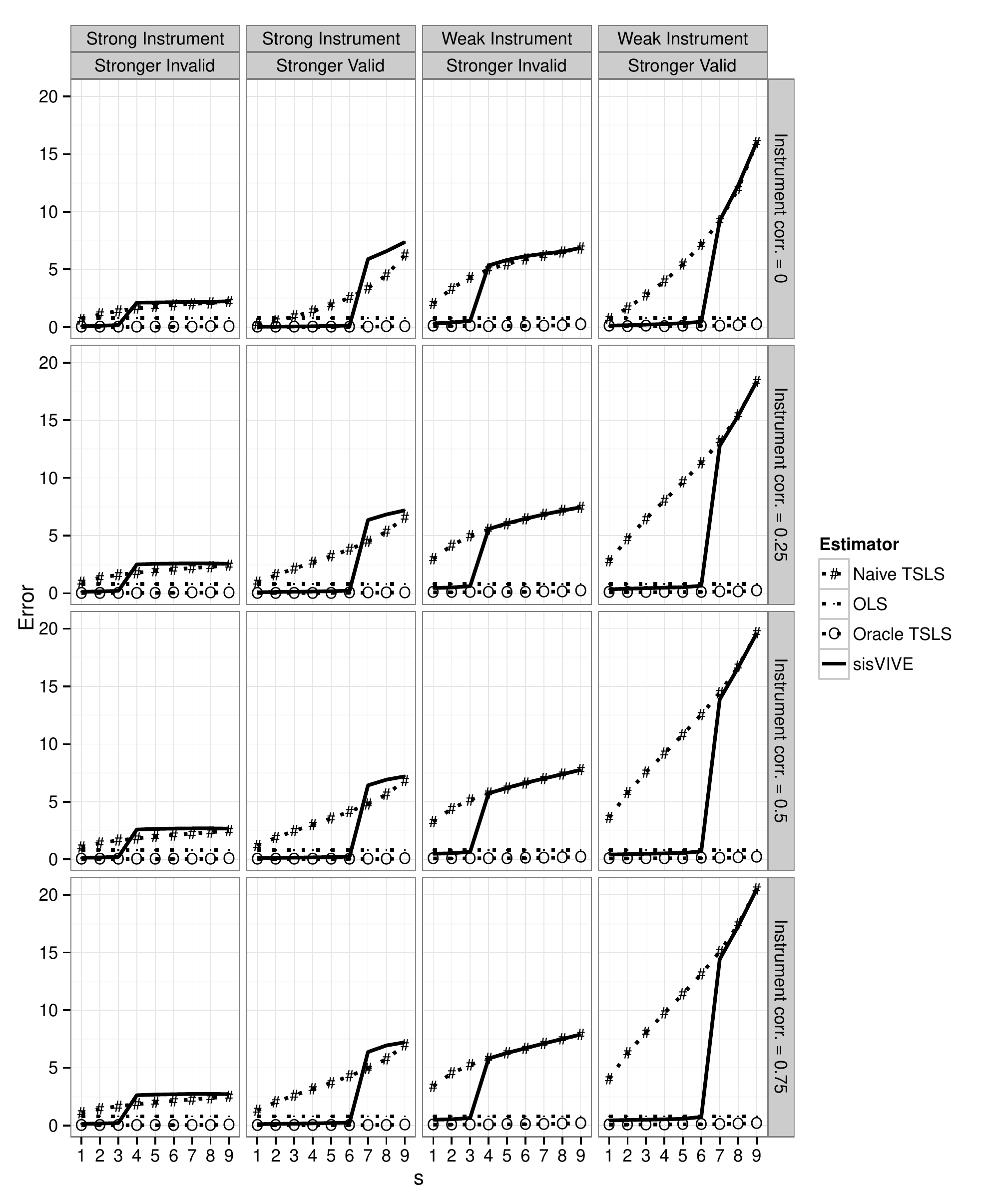}
\caption{Simulation Study Varying the Number of Invalid Instruments ($s$) and Correlation Only Exists Between Valid and Invalid Instruments. We also vary the instrument strength of valid and invalid instruments. There are ten $(L = 10)$ instruments. Each line represents median absolute estimation error ($|\beta^* - \hat{\beta}|$) after 500 simulations. We fix the endogeneity $\sigma_{\epsilon \xi}^*$ to $\sigma_{\epsilon \xi}^* = 0.8$. Each column in the plot corresponds to a different variation of instruments' absolute and relative strength. There are two types of absolute strengths, ``Strong'' and ``Weak'', measured by the concentration parameter. There are two types of strengths for valid and invalid instruments, ``Stronger Invalid'' and ``Stronger Valid'', determined by varying $\bm{\gamma}^*$ while holding the absolute strength fixed. Each row corresponds to the maximum correlation between instruments, but correlation only exists between valid and invalid instruments.}
\label{fig:interZcorrS-awkward}
\end{figure}

\begin{figure}[htbp!]
\centering
\includegraphics[width=7in,height=7.3in]{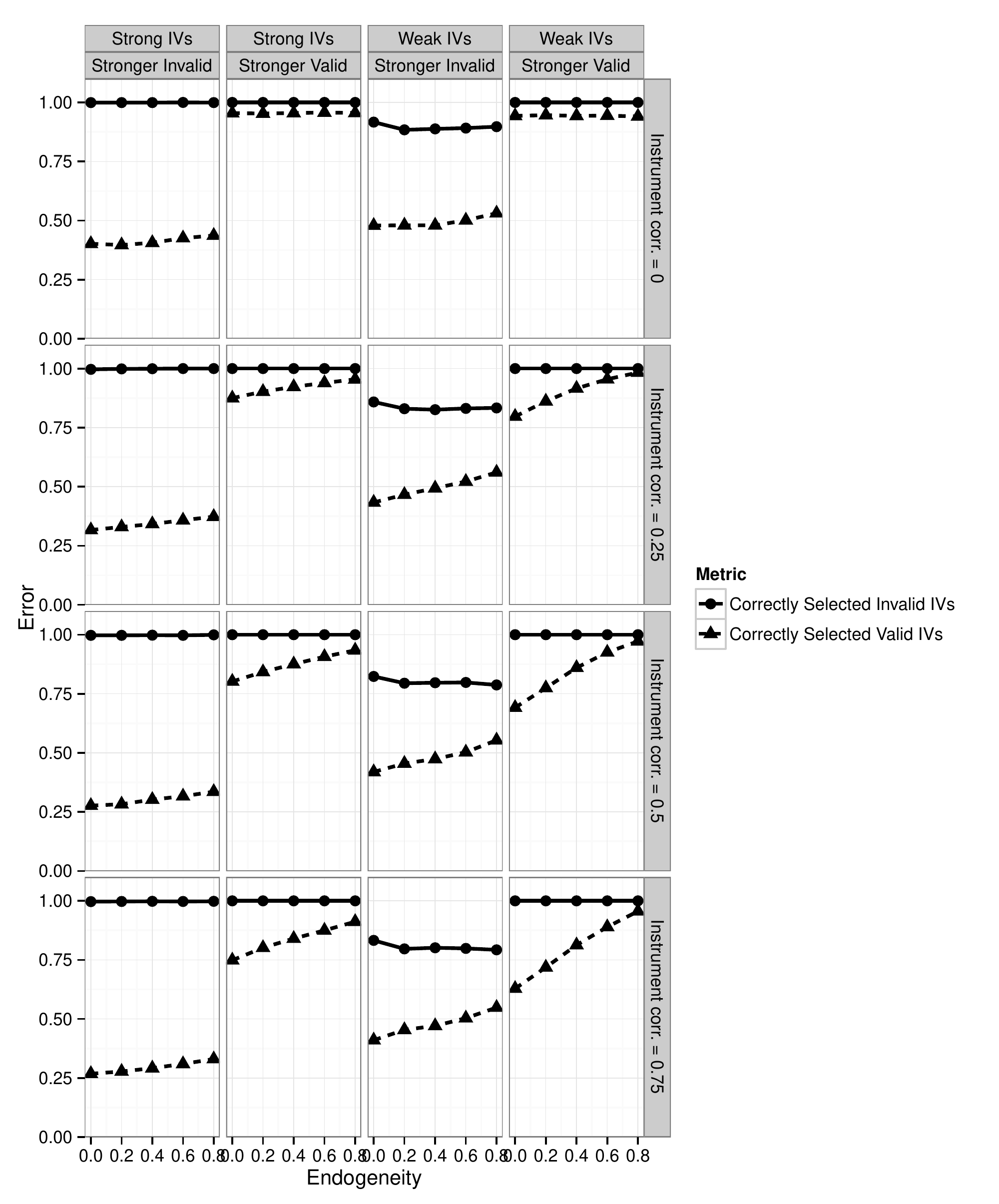}
\caption{Simulation Study Varying Endogeneity and Correlation Only Exists Between Valid and Invalid Instruments. We also vary the instrument strength of valid and invalid instruments. There are ten $(L = 10)$ instruments. Each line represents the average proportions of correctly selected valid instruments and correctly selected invalid instruments after 500 simulations. We fix the number of invalid instruments to $s = 3$. Each column in the plot corresponds to a different variation of instruments' absolute and relative strength. There are two types of absolute strengths, ``Strong'' and ``Weak'', measured by the concentration parameter. There are two types of strengths for valid and invalid instruments, ``Stronger Invalid'' and ``Stronger Valid'', determined by varying $\bm{\gamma}^*$ while holding the absolute strength fixed. Each row corresponds to maximum correlation between instruments, but correlation only exists between valid and invalid instruments.}
\label{fig:interZcorrEndoPercent-awkward}
\end{figure}

\begin{figure}[htbp!]
\centering
\includegraphics[width=7in,height=7.3in]{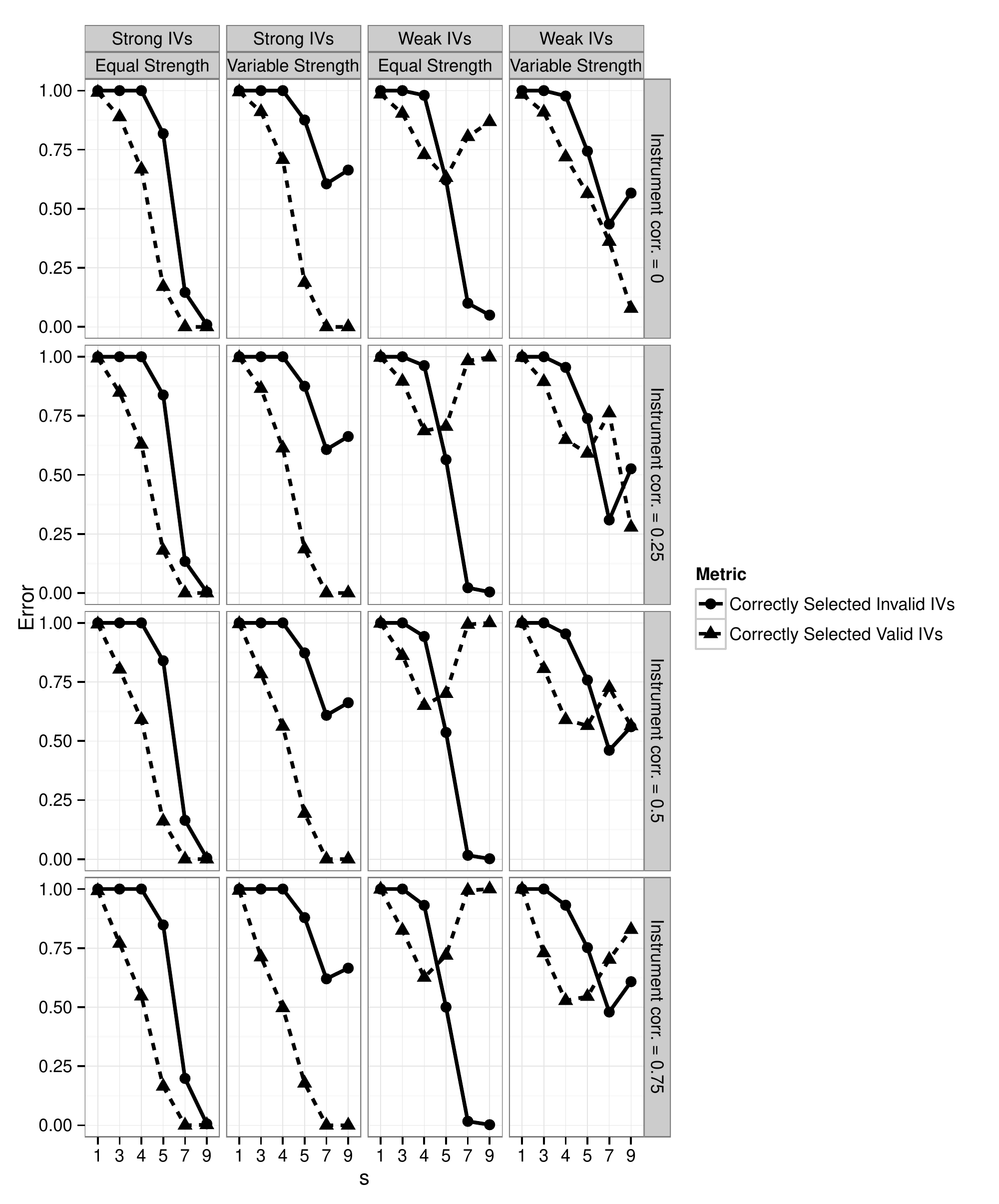}
\caption{Simulation Study Varying the Number of Invalid Instruments ($s$) and Correlation Only Exists Between Valid and Invalid Instruments. We also vary the instrument strength of valid and invalid instruments. There are ten $(L = 10)$ instruments. Each line represents the average proportions of correctly selected valid instruments and correctly selected invalid instruments after 500 simulations. We fix the endogeneity $\sigma_{\epsilon \xi}^*$ to $\sigma_{\epsilon \xi}^* = 0.8$. Each column in the plot corresponds to a different variation of instruments' absolute and relative strength. There are two types of absolute strengths, ``Strong'' and ``Weak'', measured by the concentration parameter. There are two types of strengths for valid and invalid instruments, ``Stronger Invalid'' and ``Stronger Valid'', determined by varying $\bm{\gamma}^*$ while holding the absolute strength fixed. Each row corresponds to maximum correlation between instruments, but correlation only exists between valid and invalid instruments.}
\label{fig:interZcorrSPercent-awkward}
\end{figure}

\begin{figure}[htbp!]
\centering
\includegraphics[width=7in,height=7.3in]{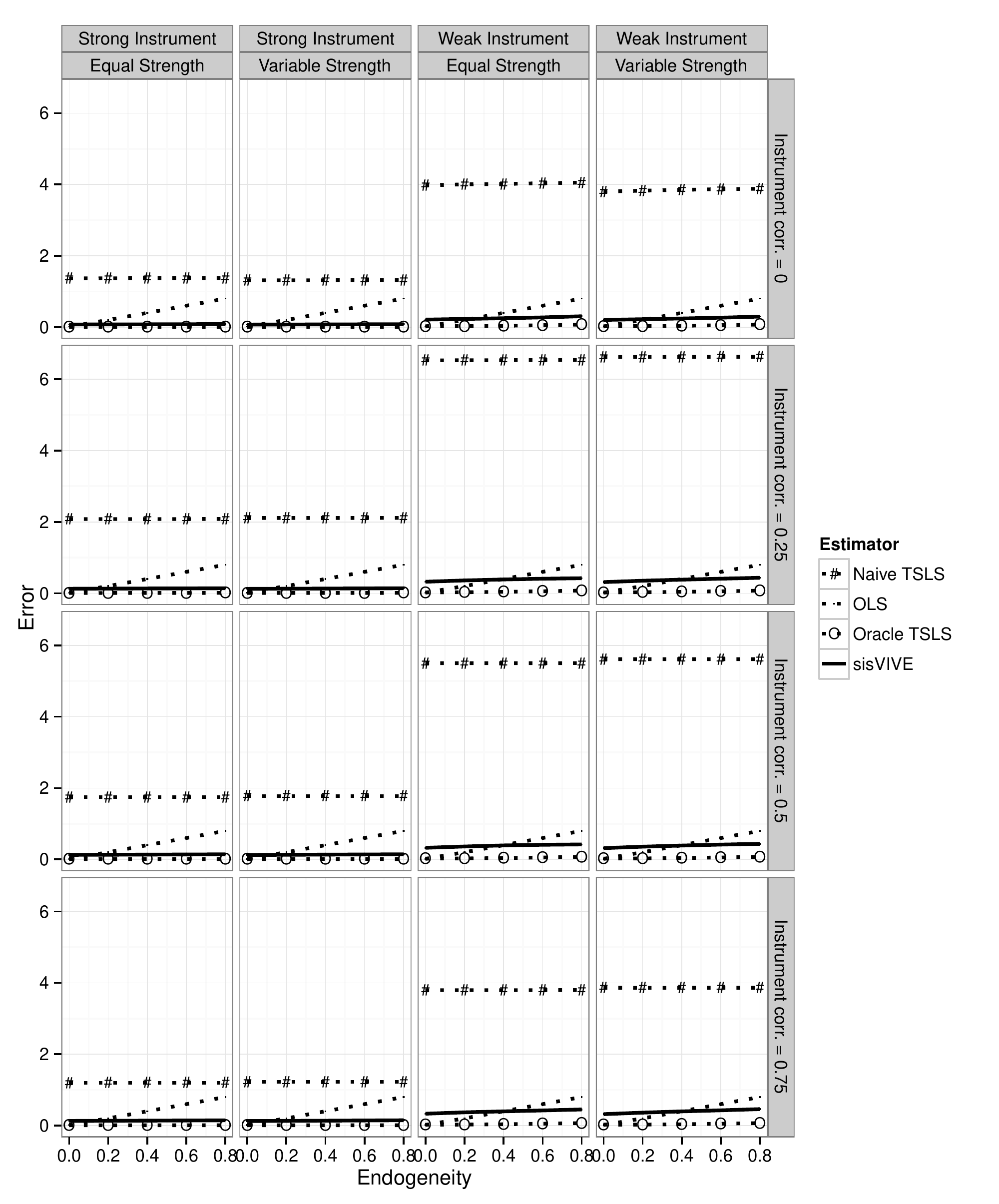}
\caption{Simulation Study of Estimation Performance Varying Endogeneity and Correlation Exists Between All Instruments. There are 100 $(L = 100)$ instruments. Each line represents the median absolute estimation error ($|\beta^* - \hat{\beta}|$) after 500 simulations. We fix the number of invalid instruments to $s = 30$. Each column in the plot corresponds to a different variation of instruments' absolute and relative strength. There are two types of absolute strengths, ``Strong'' and ``Weak'', measured by the concentration parameter. There are two types of relative strengths, ``Equal'' and ``Variable'', measured by varying $\bm{\gamma}^*$ while holding the absolute strength (i.e. concentration parameter) fixed. Each row corresponds to the maximum correlation between instruments.}
\label{fig:equalZcorrEndoL100}
\end{figure}

\begin{figure}[htbp!]
\centering
\includegraphics[width=7in,height=7.3in]{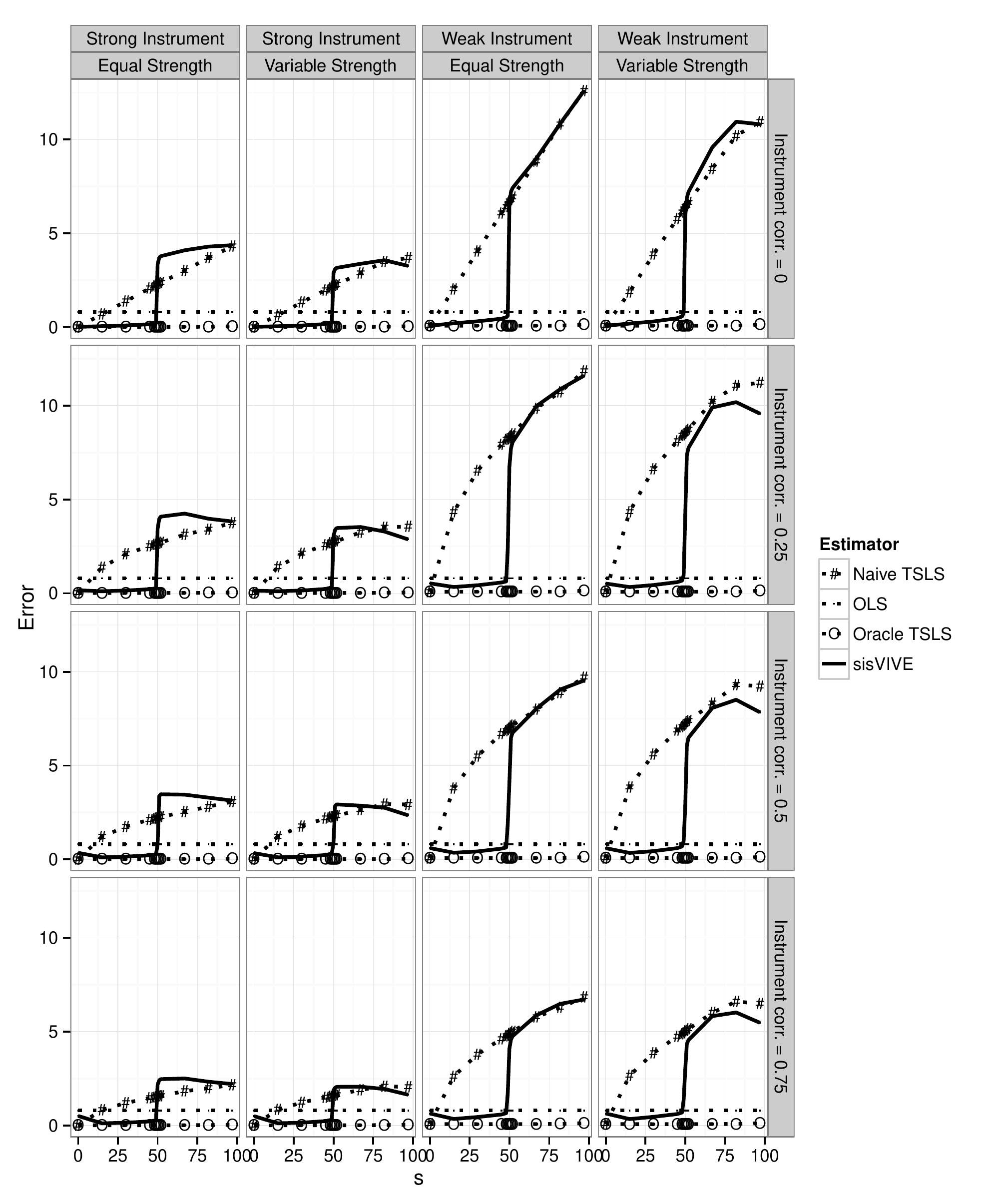}
\caption{Simulation Study of Estimation Performance Varying the Number of Invalid Instruments ($s$) and Correlation Exists Between All Instruments. There are 100 $(L = 100)$ instruments. Each line represents the median absolute estimation error ($|\beta^* - \hat{\beta}|$) after 500 simulations. We fix the endogeneity $\sigma_{\epsilon \xi}^*$ to $\sigma_{\epsilon \xi}^* = 0.8$. Each column in the plot corresponds to a different variation of instruments' absolute and relative strength. There are two types of absolute strengths, ``Strong'' and ``Weak'', measured by the concentration parameter. There are two types of relative strengths, ``Equal'' and ``Variable'', measured by varying $\bm{\gamma}^*$ while holding the absolute strength fixed. Each row corresponds to maximum correlation between instruments. }
\label{fig:equalZcorrSL100}
\end{figure}

\begin{figure}[htbp!]
\centering
\includegraphics[width=7in,height=7.3in]{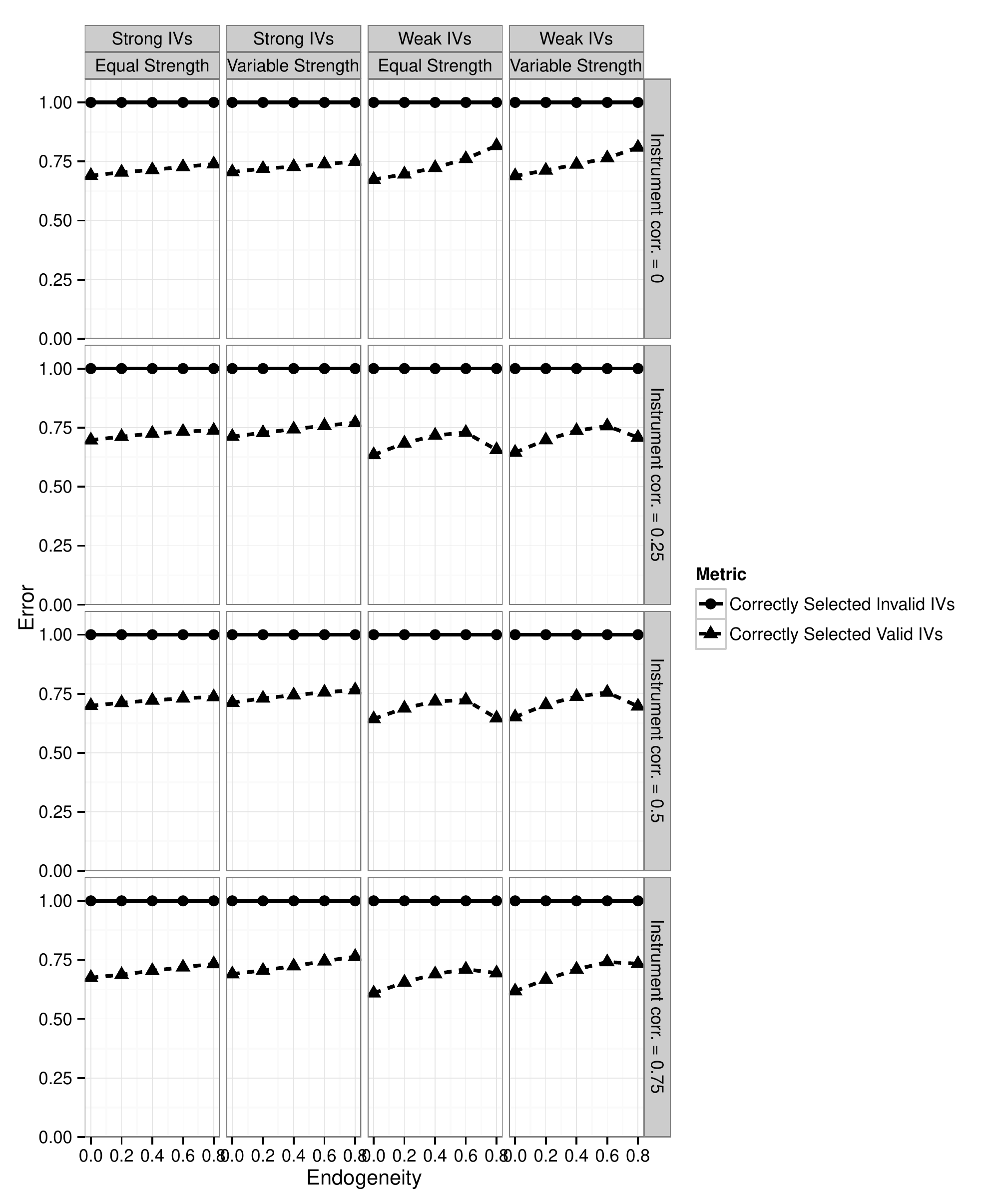}
\caption{Simulation Study Varying Endogeneity and Correlation Exists Between All Instruments. There are ten $(L = 100)$ instruments. Each line represents the average proportions of correctly selected valid instruments and correctly selected invalid instruments after 500 simulations. We fix the number of invalid instruments to $s = 30$. Each column in the plot corresponds to a different variation of instruments' absolute and relative strength. There are two types of absolute strengths, ``Strong'' and ``Weak'', measured by the concentration parameter. There are two types of relative strengths, ``Equal'' and ``Variable'', measured by varying $\bm{\gamma}^*$ while holding the absolute strength (i.e. concentration parameter) fixed. Each row corresponds to the maximum correlation between all instruments.}
\label{fig:equalZcorrEndoL100Percent}
\end{figure}

\begin{figure}[htbp!]
\centering
\includegraphics[width=7in,height=7.3in]{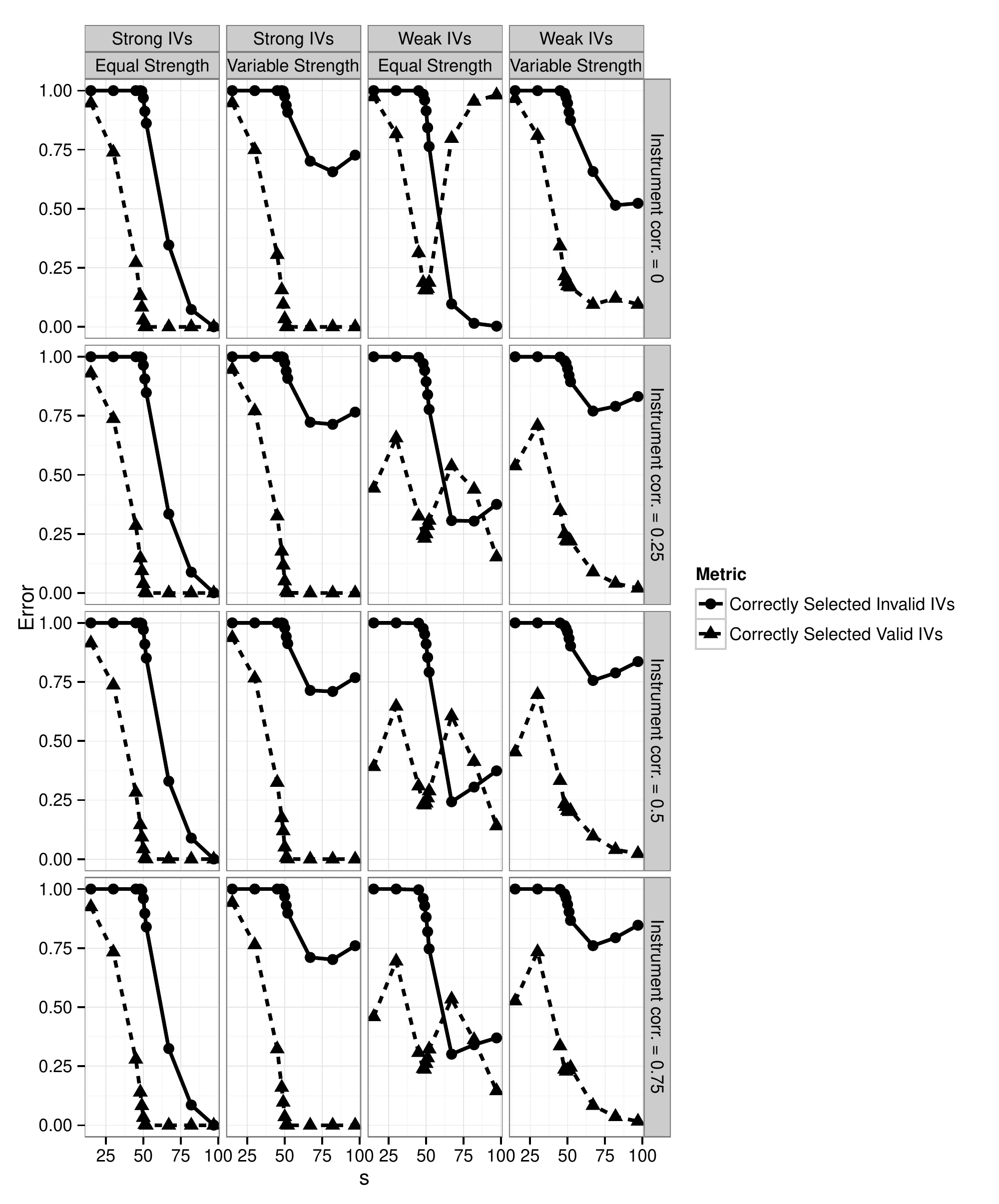}
\caption{Simulation Study Varying the Number of Invalid Instruments ($s$) and Correlation Exists Between All Instruments. There are 100 $(L = 100)$ instruments. Each line represents the average proportions of correctly selected valid instruments and correctly selected invalid instruments after 500 simulations. We fix the endogeneity $\sigma_{\epsilon \xi}^*$ to $\sigma_{\epsilon \xi}^* = 0.8$. Each column in the plot corresponds to a different variation of instruments' absolute and relative strength. There are two types of absolute strengths, ``Strong'' and ``Weak'', measured by the concentration parameter. There are two types of relative strengths, ``Equal'' and ``Variable'', measured by varying $\bm{\gamma}^*$ while holding the absolute strength fixed. Each row corresponds to maximum correlation between all instruments. }
\label{fig:equalZcorrSL100Percent}
\end{figure}

\end{document}